\newcommand{\blind}{1}
\newtheorem{theorem}{Theorem}
\newtheorem{proposition}{Proposition} 
\newtheorem{corollary}{Corollary}
\newtheorem{lemma}{Lemma}
\newtheorem{remark}{Remark}
\def\P{\mathbb{P}}
\begin{document}

\def\spacingset#1{\renewcommand{\baselinestretch}%
{#1}\small\normalsize} \spacingset{1}


\if1\blind
{
  \title{\bf A robust and powerful method for assessing replicability of high dimensional data} 
  \author[a]{Haochen Lei}
    \author[b]{Yan Li}
    \author[a]{Hongyuan Cao}
\affil[a]{Department of Statistics, Florida State University}
\affil[b]{School of Mathematics, Jilin University, Changchun, China}
  \maketitle
} \fi

\if0\blind
{
  \bigskip
  \bigskip
  \bigskip
  \begin{center}
    {\LARGE\bf A robust and powerful method for assessing replicability analysis in high dimensional data}
\end{center}
  \medskip
} \fi

\newcommand{\tval}{0} 

\newcommand{\Mysubsection}[1]{%
        \ifnum\tval=0
        \subsection{#1}%
        \else
        \subsection*{#1}%
        \fi
}

\date{}
\maketitle
\begin{abstract}

Identifying signals that replicate across multiple studies is essential for establishing robust scientific evidence. However, existing methods for high-dimensional replicability analysis either rely on restrictive modeling assumptions, are limited to two-study settings, or suffer from low statistical power. We propose a general empirical Bayes framework for multi-study replicability analysis that jointly models summary-level 
$p$-values while explicitly accounting for between-study heterogeneity. Within each study, we estimate the nonparametric density function of non-null $p$-values under monotonicity constraints, enabling flexible and tuning-parmeter-free inference. For the two study setting, we construct a local false discovery rate (Lfdr) statistic for the composite null hypothesis of non-replicability and obtain the maximum likelihood estimators. 
We establish cubic-rate convergence and minimax optimality of the nonparametric maximum likelihood estimator for the joint p-value density, along with identifiability and consistency of the parametric components. We further prove asymptotic false discovery rate (FDR) control under mild regularity conditions and demonstrate that Lfdr-based thresholding achieves optimal power.
Extending replicability analysis to $n$ studies generally requires estimating $2^n$ latent configurations, which is computationally infeasible. To overcome this challenge, we introduce a scalable pairwise rejection strategy that decomposes the exponentially large composite null into disjoint components, yielding computational complexity that scales quadratically with the number of studies. We establish valid asymptotic FDR control for this procedure. Extensive simulations demonstrate that our method maintains valid FDR control and outperforms state-of-the-art alternatives across a wide range of scenarios. Applying our framework to East Asian– and European–ancestry genome-wide association studies (GWASs) of type 2 diabetes uncovers replicable genetic associations that competing approaches fail to detect, illustrating the practical utility of our method in large-scale biomedical research.

\end{abstract}
\textit{Keywords: False discovery rate; High dimension; Local false discovery rate; Minimax rate; Non-parametric MLE; Replicability analysis}
\section{Introduction}

Improving statistical power while maintaining robustness and reproducibility is a central challenge in high-dimensional inference. Replicability, 
the consistency of findings across independent studies, populations, or experimental settings, is a cornerstone of scientific credibility. The ``replicability crisis" has sparked intense debate, with estimates suggesting that over 50\% of preclinical research is irreproducible \citep{ioannidis2005most}, costing approximately \$28 billion annually in the United States alone \citep{freedman2015economics}. While irreproducibility stems from diverse factors, we focus on the statistical analysis of high-throughput data. Specifically, we investigate conceptual replicability, where consistent signals are detected across distinct populations. Identifying such signals helps rule out spurious associations driven by unmeasured confounding, thereby reinforcing scientific conclusions. 



In large-scale genomic and biomedical studies, thousands or millions of genes or genetic variants are tested for association with a phenotpye or trait, and significant findings are prioritized for downstream investigation. 
Our goal is to identify genes or variants whose associations replicate across multiple studies. In a single-study analysis, the  Benjamini–Hochberg (BH) procedure \citep{benjamini1995controlling} is standard for false discovery rate (FDR) control. Extending valid FDR control to multi-study replicability analysis is, however, challenging because the replicability null hypothesis is composite: a feature is non-replicable if it is null in at least one study.

A common but \textit{ad hoc} strategy applies the BH procedure separately to each study and declares overlapping discoveries as replicable. This approach fails to control FDR under the composite null, and has  
low power because it neglects the shared information across studies \citep{bogomolov2022replicability}. The MaxP method \citep{benjamini2009selective}, which applies BH procedure to the maximum of $p$-values across studies, guarantees valid FDR control but is overly conservative. 
To improve power, \citet{lyu2023jump} incorporates composition-null enumeration and proportion estimation, 
but lacks theoretical guarantees.  Cross-screening procedure \citep{bogomolov2018assessing} uses prescreening to borrow information, yet it cannot be scaled to multiple studies. 
Furthermore, 
the irreproducible discovery rate \citep{li2011measuring} and 
ranking-based methods 
\citep{philtron2018maximum} either ignore between-study heterogeneity or assume a common signal configuration across different studies. While Bayesian and empirical Bayes alternatives have also been proposed \citep{heller2014replicability, zhao2020quantify}, 
they require tuning parameters and parametric modeling assumptions. Additional conceptual discussions and related developments appear in \cite{hung2020statistical, bogomolov2013discovering}. 
A recent review is provided by \cite{bogomolov2022replicability}.

In this paper, we propose a robust and powerful empirical Bayes framework for multi-study replicability analysis that overcomes these limitations. 
Our method jointly models summary-level $p$-values across studies while explicitly accounting for between-study heterogeneity. We use a latent variable to encode the configuration of each genetic variant  
across different studies. The composite null comprises all configurations in which the genetic variant is null in at least one study. To accommodate heterogeneity, we allow the non-null $p$-value densities to vary across studies and estimate them nonparametrically under a monotonicity constraint \citep{grenander1956theory}, yielding a robust and tuning-parameter-free procedure.

Our test statistic is the local false discovery rate (Lfdr), defined as the posterior probability that a genetic variant is non-replicable. Unknown parameters and density functions are estimated via an EM algorithm \citep{dempster1977maximum} combined with the pool-adjacent-violators algorithm (PAVA) \citep{robertson1988order}. Using empirical process theory \citep{van1993hellinger, van2000empirical}, we establish identifiability of all model components and prove consistency and cubic-rate convergence of the nonparametric maximum likelihood estimator, along with its minimax optimality. 
We further show that Lfdr-based thresholding is asymptotically power-optimal and that our procedure asymptotically controls the FDR. 

A direct extension of this framework to $n$ studies requires estimating $2^n$ parameters, which scales poorly. To overcome this scalability barrier, we introduce a pairwise rejection strategy. By decomposing the exponentially growing composite null space into disjoint subsets indexed by the number of null studies, we achieve computational complexity that scales quadratically in $n$. 
We develop a general algorithm for $n$-study replicability analysis and establish its asymptotic validity. 
Extensive simulations confirm the theoretical guarantees and demonstrate substantial power gains over existing methods.
Finally, we apply our method to genome-wide association studies (GWAS) of type 2 diabetes (T2D) across European and East Asian cohorts. Our approach identifies replicable SNPs that competing procedures fail to detect, illustrating its practical utility in large-scale genomic research.

\section{Two study case}
\label{sec:method2}
\subsection{Notation and problem setup}
We consider the simultaneous testing of $m$ hypotheses based on data from $n$ independent studies, where each study provides a $p$-value for every hypothesis. 
For ease of exposition, we focus on the case of two studies using single nucleotide polymorphisms (SNPs) as illustrative examples; the extension to 
$n>2$ studies is presented in Section \ref{sec:extend}. 

Let $\theta_{ij}, i = 1, \ldots, m, j=1,2$ denote the unobserved latent state of hypothesis $i$ in study $j$, where $\theta_{ij}=1$ indicates a non-null effect and $\theta_{ij}=0$ indicates null. Conditional 
on these latent states, the $p$-values are assumed to follow the two group mixture model \citep{efron2012large}:
\begin{equation}\label{mixture0}
\begin{aligned}&p_{i1}|\theta_{i1} \sim\left(1-\theta_{i1}\right) f_0+\theta_{i1} f_1, i=1, \ldots, m,\\&p_{i 2}|\theta_{i 2} \sim\left(1-\theta_{i 2}\right) f_0+\theta_{i 2} f_2, i=1, \ldots, m, \end{aligned}
\end{equation}
where $f_0$ denotes the null density of $p$-values, and $f_1$ and $f_2$ denote the non-null densities  
for studies 1 and 2, respectively. This formulation accommodates heterogeneity in both the latent configurations $(\theta_{i1},\theta_{i2})$ and the non-null distributions of $p$-values, $f_1$ and $f_2$, across studies. 

For $i = 1, \ldots, m,$ define the joint latent configuaration $\tau_i = (\theta_{i1}, \theta_{i2})$ with
$$P\left(\tau_i=(u, v)\right)=\xi_{u, v}, \quad  u,v\in\{0,1\},
$$
where
$$
\tau_i= \begin{cases}(0,0) & \text { if } \theta_{i1}=\theta_{i 2}=0, \\ (0,1) & \text { if } \theta_{i1}=0, \theta_{i 2}=1, \\ (1,0) & \text { if } \theta_{i1}=1, \theta_{i 2}=0, \\ (1,1) & \text { if } \theta_{i1}=\theta_{i2}=1, \quad i=1, \ldots, m.\end{cases}
$$
The replicability null hypothesis is defined as 
\begin{equation}H_{0i}: \theta_{i1}\theta_{i2}=0,i=1,\cdots, m,  \label{null}
\end{equation}
which is composite and consists of three configurations $(0,0), (0,1)$ and $(1,0).$ A SNP is declared replicable only if 
$\theta_{i1}=\theta_{i2}=1$. This defintion provides strong evidence, as a spurious signal appearing in only one 
study will not be deemed replicable under (\ref{null}).

We define the local false discovery rate (Lfdr) as the posterior probability that hypothesis $i$ is non-replicable given the observed data \citep{efron2012large}:
\begin{align}
\operatorname{Lfdr}(p_{i1}, p_{i2}) &:= P\left(\theta_{i1}\theta_{i 2}=0 \mid p_{i1}, p_{i 2}\right) \label{lfdr}
\\&=1-P\left(\theta_{i1}=\theta_{i 2}=1 \mid p_{i1}, p_{i 2}\right) \nonumber\\
& =\frac{\xi_{00} f_0\left(p_{i1}\right) f_0\left(p_{i 2}\right)+\xi_{01} f_0\left(p_{i1}\right) f_2\left(p_{i 2}\right)+\xi_{10} f_1\left(p_{i1}\right) f_0\left(p_{i 2}\right)}{\xi_{00} f_0\left(p_{i1}\right) f_0\left(p_{i 2}\right)+\xi_{01} f_0\left(p_{i1}\right) f_2\left(p_{i 2}\right)+\xi_{10} f_1\left(p_{i1}\right) f_0\left(p_{i 2}\right)+\xi_{11} f_1\left(p_{i1}\right) f_2\left(p_{i 2}\right)}.\nonumber
\end{align}
In (\ref{lfdr}), the numerator enumerates the three components of the composite null, while the denominator corresponds to the full four-component mixture. The null density $f_0$ is typically taken to be the standard uniform distribution. To enhance robustness, we estimate $f_1$ and $f_2$ nonparametrically under the monotonicity constraint \citep{sun2007oracle, cao2013optimal}:
\begin{eqnarray}
    \label{monotone}
    f_1/f_0 \mbox{ and } f_2/f_0 \mbox{ are non-increasing}.
\end{eqnarray}
Under (\ref{monotone}), $\mbox{Lfdr}$ is coordinate-wise non-decreasing; that is, if $x_1 \le x_2$ and $y_1 \le y_2,$ then $\mbox{Lfdr}(x_1,y_1)\le\mbox{Lfdr}(x_2,y_2).$

The $\operatorname{Lfdr}$ in (\ref{lfdr}) serves as the test statistic for the replicability null hypothesis in (\ref{null}). For hypothesis $i$, 
we reject $H_{0i}$ accoreding to 
\begin{equation}\label{rejection}
1\left\{\operatorname{Lfdr}(p_{i1},p_{i2})\le \lambda \right\},
\end{equation}
where $\lambda$ is a threshold to be specified. 
The total number of rejections is $$R_{m}(\lambda)=\sum\limits_{i=1}^m 1\{\operatorname{Lfdr}(p_{i1},p_{i2})\le \lambda\},$$
and the number of false rejections is
$$V_m(\lambda)=\sum\limits_{i=1}^m 1\{\operatorname{Lfdr}(p_{i1},p_{i2})\le \lambda\} (1-\theta_{i1}\theta_{i2}).$$
The FDR at threshold $\lambda$ is defined as
\begin{equation}
\label{fdr}
\mbox{FDR}(\lambda)=E\left\{\frac{V_m(\lambda)}{1\vee R_{m}(\lambda)}\right\} = E\left\{ \mbox{FDP}_m(\lambda) \right\},
\end{equation}
where the false discovery proportion is 
$$\mbox{FDP}_m(\lambda)=\frac{\sum_{i=1}^m 1\left\{\operatorname{Lfdr}(p_{i1},p_{i 2}) \leq \lambda\right\}(1-\theta_{i1}\theta_{i2})}{\sum_{i=1}^m 1\left\{\operatorname{Lfdr}(p_{i1},p_{i 2}) \leq \lambda\right\}}.$$
Our objective is to select $\lambda$ such that  $\mathrm{FDR}(\lambda) \le \alpha$ for a prespecified 
nominal level $\alpha$.


\subsection{Parameter estimation and FDR control algorithmt}
Assume the null density is standard uniform, i.e., $f_0(x)=1\{ 0\le x \le 1\}.$ 
Let the unknown parameters and functions be 
$$w=(\xi_{00},\xi_{10},\xi_{01},\xi_{11},f_1,f_2),$$ 
which belong to the semiparametric space $\Omega$ defined by 
\begin{equation}
    \label{par_space}
    \Omega=\Delta\times \mathcal G\times \mathcal G,
\end{equation}
where $$\Delta=\{(\xi_{00},\xi_{10},\xi_{01},\xi_{11})\in R^4_+:\xi_{00}+\xi_{10}+\xi_{01}+\xi_{11}=1\},$$
and 
\begin{equation}
    \label{G1}
    \mathcal G=\{f: \text{ non-increasing density function supported on } (0,1)\}.
\end{equation}
Under this model, the joint density of $(p_1, p_2)$ values is
$$p_w(p_{1},p_2)=\xi_{00}+\xi_{10}f_1(p_1)+\xi_{01}f_2(p_2)+\xi_{11}f_1(p_1)f_2(p_2).$$
The maximum likelihood estimator (MLE) of $w$ is defined as 
\begin{equation}\label{mle}
\hat w=\operatornamewithlimits{argmax}_{w\in \Omega} \frac{1}{m}\sum\limits_{i=1}^m \log p_{w} (p_{i1},p_{i2}). 
\end{equation}
The optimization problem in (\ref{mle}) is solved via the EM algorithm \citep{dempster1977maximum} with the pool-adjacent-violator algorithm (PAVA)\citep{JSSv032i05} incorporated to enforce the monotonicity constraint on $f_1$ and $f_2$. Implementation details are provided in the Supplementary Material \ref{sup:EM}. After obtaining $\hat w$, we apply Algorithm 1 to control the FDR at level $\alpha$.

\begin{algorithm}
\caption{Replicability analysis of two studies}\label{alg1}
\begin{algorithmic}[1]
\REQUIRE For $m$ SNPs and $n=2$ studies, a $p$-value matrix $P\in (0,1)^{m\times 2}$, where $p_{ij}$ indicates the $p$-value of SNP $i$ in study $j,$ and a pre-specified FDR level $\alpha.$
\ENSURE Rejection index $\mathcal I$.

\STATE Compute the MLE 
$\hat w$ in (\ref{mle}) using the EM algorithm combined with PAVA. 

\STATE For each SNP $i=1,\ldots m$, compute the estimated Lfdr 
$$
\widehat{\mbox{Lfdr}}_i=\frac{\hat \xi_{00} +\hat \xi_{10} \hat f_1\left(p_{i1}\right)+\hat \xi_{01} \hat f_2\left(p_{i 2}\right) }{\hat \xi_{00} +\hat \xi_{10} \hat f_1\left(p_{i1}\right)+\hat \xi_{01}\hat f_2\left(p_{i 2}\right)+\hat \xi_{11} \hat f_1\left(p_{i1}\right) \hat f_2\left(p_{i 2}\right)}.
$$

\STATE Apply a step-up procedure to determine 
$$
\hat\lambda_m=\mbox{sup}\left\{\lambda:\frac{\sum\limits_{i=1}^m\widehat{\mbox{Lfdr}}_i1\{\widehat{\mbox{Lfdr}}_i\le \lambda\}}{\sum\limits_{i=1}^m1\{\widehat{\mbox{Lfdr}}_i\le \lambda\}}\le \alpha\right\}.
$$
\STATE 
Reject
\begin{equation}
    \mathcal I=\{i:\widehat{\mbox{Lfdr}}_{i} \le \hat\lambda_m\}.
\end{equation}
\end{algorithmic}
\end{algorithm}

\subsection{Identifiability}
Let $f_1$ denote the 
probability density function of $p_{i1}$ conditional on $\theta_{i1}= 1$ and let $f_2$ denote the 
probability density function of $p_{i2}$ conditional on $\theta_{i2} =1,$ for $i =1, \ldots, m.$ 
Define a bivariate functional space
\begin{equation}\label{eq:mixture}
\mathcal{F}=\left\{p_w(x,y)=\xi_{00}+\xi_{10} f_1(x) +\xi_{01} f_2(y)+\xi_{11} f_1(x) f_2(y): w\in \Omega, x,y\in (0,1) \right\}.
\end{equation}
Let $w=(\xi_{00},\xi_{10},\xi_{01},\xi_{11},f_1,f_2)$ and $w^*=(\xi^*_{00},\xi^*_{10},\xi^*_{01},\xi^*_{11},f_1^*,f_2^*)$.
Define a metric $\tau$ on $\Omega$ by 
\begin{equation}
\label{def:tau}
\tau(w,w^*)=|\xi_{00}-\xi_{00}^*|+|\xi_{10}-\xi_{10}^*|+|\xi_{01}-\xi_{01}^*|+|\xi_{11}-\xi_{11}^*|+d(f_1,f_1^*)+d(f_2,f_2^*),
\end{equation}
where $d(f,g)=\int _0^1|f(x)-g(x)|dx$.
The Hellinger distance between $p_{w}$ and $p_{w^*}$ is defined as  
\begin{equation}\label{h-dist}
h(p_{w},p_{w^*})=\left [\frac{1}{2}\int_0^1\int_0^1 \left\{p_{w}(x,y)^{1/2}-p_{w^*}(x,y)^{1/2}\right\}^2 dx dy\right ]^{1/2}.
\end{equation}
The following proposition establishes identifiability of the model: 
if $h(p_{w},p_{w^*})=0$, then $\tau(w,w^*)=0$. Consequently, the representation
\begin{equation*}
p_w(x,y)=\xi_{00}+\xi_{10} f_1\left(x\right)+\xi_{01} f_2\left(y\right) +\xi_{11} f_1\left(x\right) f_2\left(y\right)
\end{equation*}
is unique.

\begin{proposition} (Identifiability)
\label{identify}
Let $w=(\xi_{00},\xi_{10},\xi_{01},\xi_{11},f_1,f_2)$, and $w^*=(\xi_{00}^*,\xi_{10}^*,\xi_{01}^*,\xi_{11}^*, f_1^*,f_2^*)$.
Assume $$\lim\limits_{x\to 1}f_1(x)=\lim\limits_{y\to 1}f_2(y)=0,\quad \mbox{and} \quad \lim\limits_{x\to 1}f^*_1(x)=\lim\limits_{y\to 1}f^*_2(y)=0.$$ 
If $p_w(x,y)=p_{w^*}(x,y)$ almost everywhere (a.e.) on $(0, 1)^2$, then 
$$\xi_{00}=\xi_{00}^*,\quad\xi_{10}=\xi_{10}^*,\quad\xi_{01}=\xi_{01}^*,\quad\xi_{11}=\xi_{11}^*,$$ and $$f_1(x)=f_1^*(x) \quad a.e.,\quad f_2(x)=f_2^*(x) \quad a.e.$$ 
\end{proposition}

The proof of Proposition \ref{identify} is provided in Supplementary material \ref{sec:id_proof}. We summarize the identifiability requirement as the following condition: 

(C1) For all $w=(\xi_{00},\xi_{10},\xi_{01},\xi_{11},f_1,f_2)\in \Omega$, 
$\lim\limits_{x\to 1}f_1(x)=\lim\limits_{y\to 1}f_2(y)=0$.

\subsection{Consistency} 
We apply empirical process theory \citep{van2000empirical,wellner1996weak} to study the asymptotic behavior of the MLE defined in (\ref{mle}).
Recall that $\mathcal F$ is defined in (\ref{eq:mixture}). 
Let $a\ge 1$ and $f\in\mathcal F$, define
$$\|f\|_a = \left(\int_0^1 |f(x)|^a dx\right)^{1/a}.$$
Let $\{(f_k^L,f^U_k)\}_{k=1}^N$ be pairs of functions such that 
$\|f_k^U-f_k^L\|_a\le \epsilon,\forall k$ and for every $f\in \mathcal F$ there exists $k$ satisfying $f^L_k\le f\le f^U_k.$ 
Then $[f^L_1,f^U_1],\ldots,[f^L_N,f^U_N]$ form an $\epsilon$ bracketing of $\mathcal F$. The minimal number of such 
brackets is the bracketing number, denoted by $N_{[]}(\epsilon,\mathcal F,\|\cdot\|_a),$ and its logarithm 
is called the bracketing entropy. 

Let $w_0 \in \Omega$ denote the true parameter. 
Define
$$\mathcal{\bar F}^{1/2}=\{\sqrt{\frac{p_w+p_{w_0}}{2}},w\in \Omega\}.$$
To derive the convergence rate of the MLE under the Hellinger distance defined in (\ref{h-dist}), we study the bracketing entropy of $\mathcal {\bar F}^{1/2}$ under the $L_2$ metric $\|\cdot\|_2.$
To bound $\log N_{[]}(\epsilon,\mathcal{\bar F}^{1/2}, \|\cdot\|_2),$ we impose the following assumption. 

(C2) There exists an envelope 
function $F(x)>0, x\in(0,1)$ such that $F\in L_{2+\epsilon'}$ \footnote{The condition $F\in L_{2+\epsilon'}$ can be weakened using the Lorentz norm. See Supplementary Material \ref{sec:bracket_proof_s1} for details.} for some $\epsilon'>0$. 
Moreover, for all $w=(\xi_{00},\xi_{10},\xi_{01},\xi_{11},f_1,f_2)\in \Omega$, we have $f_1(x)<F(x),f_2(x)<F(x)$ and $\xi_{00}\ge c_\xi,$ where $c_\xi>0$ is a constant.



The following proposition establishes the bracketing entropy of $\bar{\mathcal{F}}^{1/2}$ under 
\begin{proposition}
    \label{bracketing_2}
    Under (C1) and (C2), there exist $C_1>0$ and $C_2>0$ such that $$C_2\epsilon^{-1}\le\log N_{[]}(\epsilon,\mathcal{\bar F}^{1/2},\|\cdot\|_2)\le C_1\epsilon^{-1}.$$ 
\end{proposition}

\begin{remark} 
For the space $\mathcal G$ defined in (\ref{G1}), 
the envelope function may diverge near zero. Without additional assumptions, the bracketing number of $\mathcal G$ may be 
infinite 
\citep{gao2009rate}. Condition (C2) ensures that the envelope has a finite $L_{2+\epsilon'}$ norm, which guarantees finite bracketing entropy even when the envelope is unbounded. 
\end{remark}

Proposition \ref{bracketing_2} characterizes 
the 
bracketing entropy and thus enables us 
to derive the convergence rate of the Hellinger distance between $p_{\hat w_n}$ and $p_{w_0}.$ The proof is deferred to 
Supplementary Material \ref{sec:bracket_proof_s1}.

We now state the main result.
\begin{theorem}
\label{main_consistency}
Under (C1) and (C2), suppose $\mathcal F$ is the space defined in (\ref{eq:mixture}). Let $p_{w_0}\in \mathcal F$ denote the true density, $\hat w$ the MLE defined in (\ref{mle}), and $p_{\hat w}$ the corresponding density. 
Then there 
exist constants $C_3,C_4,C_5>0$ such that 
$$
P\left(h\left(p_{\hat{w}}, p_{w_0}\right)>C_3m^{-1/3} \right)\leq C_4 \exp \left[-C_5m^{1/3}\right].
$$
Consequently,
$$
h(p_{\hat {w}},p_{w_0})=O_p(m^{-1/3}).
$$
\end{theorem}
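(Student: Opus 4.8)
The plan is to read the rate off from the general theory of maximum likelihood estimation over classes with controlled bracketing entropy \citep{van1993hellinger,van2000empirical,wellner1996weak}, with Proposition \ref{bracketing_2} as the essential input; the exponent $1/3$ is precisely the solution of the entropy-balance equation when $\log N_{[]}(\epsilon,\mathcal{\bar F}^{1/2},\|\cdot\|_2)\asymp\epsilon^{-1}$. First I would establish the basic inequality for the MLE. Write $\mathbb P_n$ for the empirical measure of the pairs $(p_{1i},p_{2i})$, write $P$ for the law with density $p_{w_0}$, and set $\bar p_w=(p_w+p_{w_0})/2$. Since $\hat w_n$ maximizes $\mathbb P_n\log p_w$, concavity of the logarithm gives $\mathbb P_n\log(\bar p_{\hat w_n}/p_{w_0})\ge\tfrac12\,\mathbb P_n\log(p_{\hat w_n}/p_{w_0})\ge0$, and the elementary bound $\log x\le 2(\sqrt x-1)$ then yields $\mathbb P_n m_{\hat w_n}\ge0$ for $m_w:=\sqrt{\bar p_w/p_{w_0}}-1$. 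Because $Pm_w=\int\sqrt{\bar p_w\,p_{w_0}}-1=-h^2(\bar p_w,p_{w_0})$, this rearranges to
\begin{equation*}
	h^2(\bar p_{\hat w_n},p_{w_0})\le(\mathbb P_n-P)\,m_{\hat w_n}.
\end{equation*}
A standard comparison inequality bounds $h(p_w,p_{w_0})$ by a fixed multiple of $h(\bar p_w,p_{w_0})$, so it suffices to prove the rate $n^{-1/3}$ for $h(\bar p_{\hat w_n},p_{w_0})$.

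Next I would bound the empirical process on the right through the entropy. The key observation is that the weight $p_{w_0}$ cancels, so the $L_2(P)$ geometry of the class $\{m_w\}$ coincides with the Lebesgue $L_2$ geometry of $\mathcal{\bar F}^{1/2}$:
\begin{equation*}
	\|m_w-m_{w'}\|_{L_2(P)}=\|\sqrt{\bar p_w}-\sqrt{\bar p_{w'}}\|_2,\qquad \|m_w\|_{L_2(P)}^2=2\,h^2(\bar p_w,p_{w_0}).
\end{equation*}
Thus Proposition \ref{bracketing_2} gives $\log N_{[]}(\epsilon,\{m_w\},\|\cdot\|_{L_2(P)})\le C_1\epsilon^{-1}$, and the bracketing integral over a Hellinger ball of radius $\delta$ is $J(\delta)=\int_0^\delta\sqrt{1+C_1u^{-1}}\,du\asymp\delta^{1/2}$. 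A bracketing maximal inequality (e.g.\ Lemma 3.4.2 of \citet{wellner1996weak}) controls the modulus of continuity of $(\mathbb P_n-P)m_w$ over $\{h(\bar p_w,p_{w_0})\le\delta\}$ by $n^{-1/2}J(\delta)\{1+J(\delta)/(\delta^2\sqrt n)\}$. Feeding this and the basic inequality into the standard rate theorem for M-estimators \citep{wellner1996weak}, the rate $\delta_n$ solves $\sqrt n\,\delta_n^2\asymp\delta_n^{1/2}$, i.e.\ $\delta_n\asymp n^{-1/3}$, giving $h(\bar p_{\hat w_n},p_{w_0})=O_p(n^{-1/3})$ and, by the comparison inequality, $h(p_{\hat w_n},p_{w_0})=O_p(n^{-1/3})$.

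To upgrade this to the exponential tail I would combine the peeling device with the Hellinger-deviation inequality of \citet{van1993hellinger}. Partition $\{h(\bar p_w,p_{w_0})>C_3n^{-1/3}\}$ into shells $S_j=\{2^jC_3n^{-1/3}<h\le2^{j+1}C_3n^{-1/3}\}$, $j\ge0$. On $S_j$ the basic inequality forces $(\mathbb P_n-P)m_{\hat w_n}\ge h^2(\bar p_{\hat w_n},p_{w_0})\gtrsim2^{2j}C_3^2n^{-2/3}$, whereas the maximal inequality above shows the centered process over $S_j$ has expected supremum only of order $2^{j/2}C_3^{1/2}n^{-2/3}$; for $C_3$ large this deterministic target dominates the mean at every level $j$. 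The deviation inequality then bounds $P(\hat w_n\in S_j)$ by $C\exp(-c\,2^{2j}C_3^2n^{1/3})$, and summing the geometric series over $j\ge0$ produces $C_4\exp(-C_5n^{1/3})$, which is the claim.

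The main obstacle is the unboundedness permitted by (C1): since $f_1$ and $f_2$ may diverge at $0$, both $p_w$ and the likelihood ratio are unbounded, so a direct Bernstein argument — which would need sup-norm control of $m_w$ — is unavailable. The resolution, and the reason Proposition \ref{bracketing_2} is stated on the square-root scale $\mathcal{\bar F}^{1/2}$, is that the deviation inequality of \citet{van1993hellinger} delivers the exponential bound from the $L_2$ bracketing entropy of $\mathcal{\bar F}^{1/2}$ alone: passing to square roots replaces the troublesome ratio by the increments $\sqrt{\bar p_w}-\sqrt{p_{w_0}}$, which are automatically square-integrable and dominated by the finite-norm envelope $\sqrt{(F+p_{w_0})/2}$ guaranteed by (C2), so no exponential moment or uniform sup-norm bound on $F$ is required. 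The lower bound $\xi_{00}\ge c$ keeps the denominator $p_{w_0}$ away from zero, which is what makes this square-root reduction legitimate. Consequently the bulk of the technical work is to verify, term by term, that the hypotheses of that deviation theorem hold for the present mixture class — chiefly that $\int F^2<\infty$ controls the $L_2$ modulus uniformly across the shells $S_j$.
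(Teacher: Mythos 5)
Your proposal is correct and takes essentially the same route as the paper: the paper's entire proof consists of invoking van de Geer's Theorem 7.4 (restated as Proposition \ref{emp}), plugging in the entropy bound of Proposition \ref{bracketing_2} to get $J_B(\delta,\mathcal{\bar F}^{1/2},\|\cdot\|_2)\le 2\sqrt{C_1}\,\delta^{1/2}$, and solving $\sqrt{n}\,\delta_n^2 \asymp \delta_n^{1/2}$ to obtain $\delta_n \asymp n^{-1/3}$ together with the tail bound $\exp(-cn\delta_n^2)=\exp(-cn^{1/3})$. Your basic inequality via the convexity/square-root trick, the entropy-geometry transfer to the class $\{m_w\}$, and the peeling-plus-deviation-inequality step are precisely the internals of that cited theorem, so the two arguments coincide in substance.
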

The proof is provided in Supplementary Material \ref{sec:bracket_proof}.

In \cite{gao2007entropy}, 
the bracketing entropy under $\|\cdot\|_2$ was derived for uniformly bounded monotone functions supported on $[0,1]^2.$ 
The bracketing entropy has order $\epsilon^{-2}\{\log (1/\epsilon)\}^{2},$ resulting in an $m^{-\frac{1}{4}}\log m$ rate of convergence for the MLE under the Hellinger distance. In contrast, we do not impose a boundedness assumption on the monotone non-increasing function class defined in (\ref{G1}). 
Furthermore, for log-concave density functions, 
the bracketing entropy has order $\epsilon^{-1} \{\log(1 / \epsilon)\}^{3/2},$ which implies an $m^{-1/ 3} (\log m)^{1/2}$ rate of convergence for the MLE under the Hellinger distance \citep{kim2016global}. Our function class differs fundamentally from the log-concave densities. For example, the $f(x)=\frac{4}{\pi}\frac{1}{1+x^2},0\le x\le 1$ is a monotone non-increasing density function on $[0,1],$ but it is not log-concave. 


The following corollary establishes consistency of $\hat{w}$ defined in (\ref{mle}). The proof is given in the Supplementary Material \ref{sec: consitency proof}.
\begin{corollary}
\label{parameter consistency us}
    Under (C1) and (C2),
    we have $\tau(\hat w,w_0)\overset{p}{\to} 0,$
    where $\tau$ is the matrix on $\Omega$ defined in (\ref{def:tau}), $\hat w$ is the MLE, $w_0$ is the true parameter, and $\overset{p}{\to}$ means convergence in probability.
\end{corollary}
\subsection{Optimal minimax rate}
Let $\mathcal F$ denote the functional space defined in (\ref{eq:mixture}), and let $p_{w_0}\in \mathcal F$ be the true density. 
Assume the observed $p$-values $p_1,\ldots,p_m$ are independent and identically distributed with density 
$p_{w_0}$, where $p_i=(p_{i1},p_{i2}),i=1,\ldots, m$, representing the pair of $p$-values corresponding to the $i$-th hypothesis. 

An estimator of the density is a measurable mapping
\[
\tilde p_m : [0,1]^{2m} \to \mathcal{F},\qquad 
p_{\hat w}=\tilde p_m(p_1,\ldots,p_m).
\]
Let $$\bar{\mathcal F_m}=\{\tilde p_m:\tilde p_m\in R^{2m}\to \mathcal F\}$$ denote the class of all measurable estimators. The minimax 
risk under squared Hellinger loss is defined as 
\begin{equation}
\label{eq:minimax}
    \mathcal R=\inf\limits_{\tilde p_m\in \bar{\mathcal F}_m}\sup\limits_{p_{w_0}\in \mathcal F} E_{p_{w_0}}[h^2\{\tilde p_m(p_1,\ldots, p_m),p_{w_0}\}].
\end{equation}
The quantity $\mathcal R$ represents the optimal worst case risk over $\mathcal F$ with respect to squared Hellinger loss. An estimator is minimax-rate optimal if its risk matches the order of $\mathcal R.$


To determine the exact rate, we establish matching upper and lower bounds. The upper bound follows from Theorem 
\ref{main_consistency}. The lower bound is derived using Theorem 1 %
of \cite{yang1999information}. 
The resulting minimax rate is stated below, which is proved in Supplementary Material \ref{minimax_proof}. 

\begin{theorem}
    \label{Minimax}
    Under the conditions of Theorem \ref{main_consistency}, 
    $$
    \mathcal R 
    \asymp O(m^{-2/3})
    $$ and the MLE in (\ref{mle}) is minimax rate optimal.
\end{theorem}

This result is new in the context of high dimensional replicability analysis with nonparametric estimation of $p$-value densities. 
In \cite{biau2003risk}, it is shown that
for two-dimensional bounded monotone densities under $L_1$ loss, the minimax rate is $O(m^{-1/3})$. In contrast, we do not impose a boundedness assumption on the densities. However, tailored to our application, the function space defined in (\ref{par_space}) is more restrictive and structured. For two-dimensional log-concave densities \cite{kim2016global} established a mimimax rate of $O(m^{-2/3})$ under sqaured Helliger loss. The class of log-cancave densities is supported on $\mathbb{R}^2$, whereas in our setting the density functions are supported on the compact domain $[0,1]^2$. Moreover, a log-concave densities necessarily exhibit light tails with exponential decay, while monotone non-increasing densities do not impose such tail restrictions.  


\subsection{Theoretical guarantee of Algorithm \ref{alg1}}
Given the MLE $\hat w=(\hat\xi_{00},\hat\xi_{01},\hat \xi_{10},\hat \xi_{11},\hat f_{1},\hat f_{2}),$ the estimated Lfdr 
is 
 \begin{equation}\label{hatlfdr}
    \widehat{\mbox{Lfdr}}(p_{i1},p_{i2})=\frac{\hat \xi_{00} +\hat\xi_{01} \hat f_{2}\left(p_{i 2}\right)+\hat \xi_{10} \hat f_{1}\left(p_{i1}\right)}{\hat \xi_{00} +\hat\xi_{01} \hat f_{2}\left(p_{i 2}\right)+\hat \xi_{10} \hat f_{1}\left(p_{i1}\right)+\hat \xi_{11} \hat f_{1}\left(p_{i1}\right) \hat f_{2}\left(p_{i 2}\right)}.
    \end{equation}
At threshold $\lambda,$ the estimated number of rejections is
$$
\hat R_m(\lambda)= \sum_{i=1}^m 1\left\{\widehat{\operatorname{Lfdr}}(p_{i1},p_{i2}) \leq \lambda\right\}. 
$$
Note that
\begin{equation}\label{trick}
 E\left[\mbox{Lfdr} (p_{i1}, p_{i2}) 1 \{ \mbox{Lfdr}(p_{i1}, p_{i2}) \le \lambda\} \right]=E[1\{\mbox{Lfdr}(p_{i1},p_{i2})\le \lambda\}(1-\theta_{i1}\theta_{i2})].
\end{equation}
Motivated by (\ref{trick}), 
define the estimated number of false rejections as 
$$
\hat V_{m}(\lambda)= \sum_{i=1}^m \widehat{\operatorname{Lfdr}}(p_{i1},p_{i2}) 1\left\{\widehat{\operatorname{Lfdr}}(p_{i1},p_{i2}) \leq \lambda\right\},
$$
and the estimated false discovery proportion (FDP) as 
$$
\widehat{\mbox{FDP}}_m(\lambda) = \frac{\hat V_{m}(\lambda)}{\hat R_m(\lambda)}.
$$

The data-driven threshold at target FDR level 
 $\alpha$ is
\begin{equation}\label{critical}
\hat\lambda_m=\sup\{\lambda\in[0,1]:\widehat{\mbox{FDP}}_m(\lambda)\le \alpha\}.
\end{equation}
We reject $H_{0i}$ when $\widehat{\mbox{Lfdr}}(p_{i1}, p_{i2})\le \hat{\lambda}_m.$ 
To establish asymptotic FDR control, we impose the following additional conditions. 

(C3) $f_1$ and $f_2$ are continuous, non-increasing density functions on $(0,1)$. 

(C4) $\forall \lambda\in (0,1)$,
    $$
    \begin{gathered}
    \frac{1}{m} \sum_{i=1}^m1\left\{\operatorname{Lfdr}(p_{i1},p_{i2}) \leq \lambda\right\}\overset{p}{\to} B_1(\lambda), \\
    \frac{1}{m} \sum_{i=1}^m \operatorname{Lfdr}(p_{i1},p_{i2}) 1\left\{\operatorname{Lfdr}(p_{i1},p_{i2}) \leq \lambda\right\}\overset{p}{\to} B_2(\lambda), \end{gathered}$$
    and
    \begin{equation}\label{numerator}
    \frac{1}{m} \sum\limits_{i=1}^m 1\{\operatorname{Lfdr}(p_{i1},p_{i2})\le \lambda\} (1-\theta_{i1}\theta_{i2}) \overset{p}{\to} B_2(\lambda),
    \end{equation}
where $B_1(\lambda)$ and $B_2(\lambda)$ are continuous 
on $(0,1)$. 

(C5) There exists $\lambda_0\in(0,1),$ such that $B_1(\lambda_0)>0$, and $\frac{B_2(\lambda_{0})}{B_1(\lambda_{0})} \le \alpha.$

Condition (C3) requires only monotonicity of the non-null $p$-value densities, which is substantially weaker than common parametric assumptions. 
Condition (C4) is analogous to assumptions in \cite{storey2004strong}. In view of (\ref{trick}), convergence in (\ref{numerator}) follows from the law of large numbers. Condition (C5) guarantees the existence of a nontrivial population-level threshold satisfying the FDR constraint. 
\begin{theorem}
\label{fdr_theorem}
Under (\ref{mixture0}) and conditions (C1)-(C5), 
$$\limsup\limits_{m\to\infty}\operatorname{FDR}(\hat \lambda_m)\le \alpha,$$
where $\hat\lambda_m$ is defined in (\ref{critical}) and ${\mbox{FDR}(\lambda)}$ is given in (\ref{fdr}). 
\end{theorem}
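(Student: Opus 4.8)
The plan is to reduce the statement to a deterministic comparison at the population level and then transport it to the random, estimated procedure using the consistency from Theorem \ref{parameter consistency us}. Write $r(\lambda)=B_2(\lambda)/B_1(\lambda)$ for the candidate limiting ratio. The strategy is to show that both the oracle false discovery proportion $\mbox{FDP}_m(\lambda)$ and its estimate $\widehat{\mbox{FDP}}_m(\lambda)$ converge, uniformly in $\lambda$, to $r(\lambda)$, and then to combine the defining inequality $\widehat{\mbox{FDP}}_m(\hat\lambda_m)\le\alpha$ from (\ref{critical}) with the boundedness of $\mbox{FDP}_m$. I would first treat the oracle side. The identity (\ref{trick}) equates the expectations of $\operatorname{Lfdr}(p_{1i},p_{2i})\,\mathbf{I}\{\operatorname{Lfdr}\le\lambda\}$ and $\mathbf{I}\{\operatorname{Lfdr}\le\lambda\}(1-\theta_{1i}\theta_{2i})$, which is what underlies the convergence (\ref{numerator}); together with the first limit in (C4) this gives $m^{-1}R_m(\lambda)\overset{p}{\to}B_1(\lambda)$ and $m^{-1}V_m(\lambda)\overset{p}{\to}B_2(\lambda)$, hence $\mbox{FDP}_m(\lambda)\overset{p}{\to}r(\lambda)$ at every $\lambda$ with $B_1(\lambda)>0$.

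Next I would transport the convergence to the estimated rejection rule. Theorem \ref{parameter consistency us} gives $\tau(\hat w_n,w_0)\overset{p}{\to}0$, i.e.\ the mixing weights converge and $\hat f_1,\hat f_2$ converge in $L^1$; combined with the continuity in (C3) and the explicit ratio (\ref{hatlfdr}), this makes $\widehat{\operatorname{Lfdr}}(p_{1i},p_{2i})$ uniformly close to the oracle $\operatorname{Lfdr}(p_{1i},p_{2i})$ across features. From this I would deduce $m^{-1}\hat R_m(\lambda)\overset{p}{\to}B_1(\lambda)$ and $m^{-1}\hat V_m(\lambda)\overset{p}{\to}B_2(\lambda)$, so that $\widehat{\mbox{FDP}}_m(\lambda)\overset{p}{\to}r(\lambda)$. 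The estimated false rejection count is again tied to the truth through (\ref{trick}): the numerator $m^{-1}\sum_i\widehat{\operatorname{Lfdr}}\,\mathbf{I}\{\widehat{\operatorname{Lfdr}}\le\lambda\}$ is shown to be close to $m^{-1}\sum_i\operatorname{Lfdr}\,\mathbf{I}\{\operatorname{Lfdr}\le\lambda\}$, which by (C4) tends to $B_2(\lambda)$.

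Finally I would close the argument at the data-driven threshold. Condition (C5) supplies a point $\lambda_0$ with $B_1(\lambda_0)>0$ and $r(\lambda_0)\le\alpha$, so the feasible set in (\ref{critical}) is eventually nonempty and $\hat\lambda_m$ stays away from the degenerate region where $B_1$ vanishes; continuity of $B_1,B_2$ then makes $r$ well behaved there. By construction $\widehat{\mbox{FDP}}_m(\hat\lambda_m)\le\alpha$, and feeding in the uniform convergence $\sup_\lambda|\widehat{\mbox{FDP}}_m(\lambda)-r(\lambda)|\overset{p}{\to}0$ gives $r(\hat\lambda_m)\le\alpha+o_p(1)$. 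The companion uniform convergence $\sup_\lambda|\mbox{FDP}_m(\lambda)-r(\lambda)|\overset{p}{\to}0$ then yields $\mbox{FDP}_m(\hat\lambda_m)\le\alpha+o_p(1)$. Since $\mbox{FDP}_m(\hat\lambda_m)\in[0,1]$, the bounded convergence theorem upgrades this in-probability bound to $\limsup_{m\to\infty}\operatorname{FDR}(\hat\lambda_m)=\limsup_{m\to\infty}E\{\mbox{FDP}_m(\hat\lambda_m)\}\le\alpha$.

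The hard part will be the transport step and the uniformity it demands. Consistency from Theorem \ref{parameter consistency us} is only an $L^1$/$\tau$-type statement, whereas I need the empirical processes $m^{-1}\hat R_m(\lambda)$ and $m^{-1}\hat V_m(\lambda)$ to track their oracle counterparts \emph{uniformly} in $\lambda$, because $\hat\lambda_m$ is random and correlated with the same data. The delicate point is the mismatch between $\mathbf{I}\{\widehat{\operatorname{Lfdr}}\le\lambda\}$ and $\mathbf{I}\{\operatorname{Lfdr}\le\lambda\}$ for features whose oracle Lfdr sits near the boundary $\lambda$. I would control this by promoting the $\tau$-consistency to the statement that the fraction of features with $|\widehat{\operatorname{Lfdr}}-\operatorname{Lfdr}|>\delta$ is negligible, and then using continuity of $B_1$ (so the oracle Lfdr places only $O(\delta)$ mass within $\delta$ of any threshold) to make the boundary discrepancy vanish as $\delta\to0$. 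Establishing this uniform-in-$\lambda$ control, simultaneously for numerator and denominator and in the presence of the random threshold $\hat\lambda_m$, is the technical crux.
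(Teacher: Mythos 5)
Your proposal is correct and follows essentially the same route as the paper's proof: oracle convergence from (C4), transport to the estimated procedure by converting the consistency of Theorem~\ref{parameter consistency us} into an average Lfdr-consistency statement (the paper's Proposition~\ref{lfdr consistency}), use of (C5) to guarantee $P(\hat\lambda_m>\lambda_0)\to 1$, uniform convergence of both the oracle and estimated FDPs to $B_2(\lambda)/B_1(\lambda)$ on $[\lambda_0,1]$ via a Glivenko--Cantelli-type argument, and finally a boundedness-plus-convergence-in-probability lemma (the paper uses subsequences and Fatou) to pass to $\limsup_m \operatorname{FDR}(\hat\lambda_m)\le\alpha$. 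The technical crux you identify---promoting $\tau$-consistency to negligibility of the fraction of features with deviant $\widehat{\operatorname{Lfdr}}$ and controlling the boundary mass near the threshold---is precisely what the paper formalizes in its Lfdr-consistency proposition and the ensuing uniform consistency of $\hat R_m$, $\hat V_m$, $\hat D_m$.
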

The proof is relegated in Supplementary Material \ref{sec: two_lfdr_proof}.

\begin{remark}
We estimate the unknown parameters and density functions via maximum likelihood, assuming independence across hypotheses. In the simulation studies, however, we demonstrate that the proposed  method remains robust under various forms of dependence. We conjecture that similar theoretical guarantees may extend to settings with weak dependence and leave a rigorous investigation of this extension for 
future work. 
\end{remark}
\subsection{Oracle power}
Let $\delta: [0,1]^2\to \{0,1\}$ denote a rejection rule, where $\delta(p_1, p_2) = 1$ indicates rejection when the paired $p$-values are 
$(p_1,p_2)$. We show that thresholding the Lfdr yields the most powerful rule under a marginal FDR constraint. 
\begin{proposition}
   \label{thm:power}
   Among all measurable decision rules $\delta$, the rule that maximizes $$E\{\delta(p_1,p_2)\mid\theta_1\theta_2=1\}$$
   subject to the constraint 
   $$E\{1-\theta_1\theta_2\mid \delta(p_1,p_2)=1\}\le \alpha$$ 
  is given by
  $$ \delta(p_1,p_2)=1\{\operatorname{Lfdr}(p_1,p_2)\le \lambda\}
   $$
    for some threshold $\lambda \in [0,1]$. 
\end{proposition}

The proof of Proposition \ref{thm:power} is relegated in Supplementary Material \ref{sec: oracle_pow_proof}

\section{Multi-study extension}
\label{sec:extend}
When more than two studies are available, a direct extension of the two-study procedure described in Section \ref{sec:method2} becomes computationally infeasible, as the number of mixture components and associated 
parameters grows exponentially with the number of studies. The setup is the same as in Section 2, except that we now consider $n$ studies. For the $i$ the hypothesis, the composite null hypothesis is 
\begin{equation}
\label{eq:composite_null}
   \mathcal H_{0i}=\{\theta_{i1}\theta_{i2}\ldots\theta_{in}=0\}. 
\end{equation}
That is, at least one study corresponds to a null effect. 

Our methodological development begins with an idealized benchmark, an omniscient oracle, to motivate the proposed procedure.



\subsection{Oracle case}\label{oracle}
In the oracle setting, we assume that, for each study, the null and non-null density functions as well as the prior probabilities of the latent configurations are known. For notational simplicity, we suppress the hypothesis index $i$ in this subsection and write $(p_1, \ldots, p_n)$ for the $p$-values of a representative hypothesis. 
Let $\delta\in\{0,1\}$ denote a rejection rule.
When $n=2$, the oracle rule rejects 
if the Lfdr falls below a threshold. For $n\ge 3$, we adopt a pairwise criterion: reject if all pairwise Lfdrs are below their respective thresholds, i.e.,
\begin{equation}
\label{eq:rej}
\delta(\lambda)=1\{\mbox{Lfdr}_{12}\le\lambda_{12},\ldots, \mbox{Lfdr}_{n-1,n}\le \lambda_{n-1,n}\},   
\end{equation}
where $\lambda= (\lambda_{jj'}: 1 \le j<j' \le n)$ 
and $\lambda_{jj'}$ is the rejection threshold for the pair $(j, j').$ 

As shown in \cite{sun2007oracle}, when the number of hypotheses $m\rightarrow \infty,$ the FDR is asymptotically equivalent to the marginal false discovery rate (mFDR), 
 \begin{equation}
     \label{eq:mfdr}\mbox{mFDR}=\frac{\mathbb{E}\!\big[\delta(\lambda)\cdot 1\{\mathcal H_0\}\big]}{\mathbb{E}(\delta(\lambda))}.
  \end{equation}

To obtain an upper bound for mFDR with complexity growing linearly in $n,$ we decompose the composite null space according to the number of null studies. 
Let $\theta=(\theta_1,\ldots, \theta_n)$ and define $h(\theta)=\#\{j:\theta_j=0\}.$ 
The null space $\Theta_0=\{\theta:\exists j, 
\theta_j=0\}$ can be partitioned as 
$$\Theta_0=\Theta_1\cup \Theta_2,$$
where
$$
\begin{aligned}
    &\Theta_1=\{\theta:h(\theta)= 1 \}\quad{\text{and}}\\
    &\Theta_2=\{\theta:h(\theta)\ge 2\}.
\end{aligned}
$$
The set $\Theta_1$ contains conofigurations with exactly one null study, while $\Theta_2$ contains those with at least two null studies. The following proposition provides bounds for each component. 
\begin{proposition}
\label{thm:mfdr_bound}
    Let $f_j$ denote the non-null density function 
    for study $j$, and define the conditional joint density $$f(p_1,\ldots,p_n\mid\theta)=\prod\limits_{j=1}^n f_j^{\theta_j}(p_j).$$ 
    Let $f_{jj'}(p_j,p_{j'})$ denote the joint 
    density of $(p_j,p_{j'}),$ 
    and define prior probabilities 
\begin{equation}
    \label{def:pi}
    \pi_{n+1}=P(h(\theta)\ge 2),\quad\text{ and }\quad \pi_j=P(\theta_j=0,\theta_{-j}=1),j=1,\ldots, n.
\end{equation}
Then 
\begin{align}
   E\{\delta(\lambda)\mid \theta_j=0,\theta_{-j}=1\}&\le \min\limits_{j',j'\ne j}E[\frac{1\{\operatorname{Lfdr}_{jj'}\le \lambda_{jj'} \}f_{j'}(p_{j'}) }{f_{jj'}(p_j,p_{j'})}]. \label{eq:original_tp1}\\
   \max\limits_{\theta\in \Theta_2} E\{\delta(\lambda)\mid \theta\}&\le \max\limits_{j,j',j\ne j'} E[\frac{1\{\operatorname{Lfdr}_{jj'}\le \lambda_{jj'} \} }{f_{jj'}(p_j,p_{j'})}]. \label{eq:original_tp2}
\end{align}
Consequently, 

\begin{align}
&\mbox{mFDR}
\nonumber \\
\le& \frac{1}{E(\delta(\lambda))}\{\pi_{n+1}\max\limits_{j,j',j\ne j'} E[\frac{1\{\operatorname{Lfdr}_{jj'}\le \lambda_{jj'} \} }{f_{jj'}(p_j,p_{j'})}]+\sum\limits_{j=1}^n\pi_j\min\limits_{j',j'\ne j}E[\frac{1\{\operatorname{Lfdr}_{jj'}\le \lambda_{jj'} \}f_{j'}(p_{j'}) }{f_{jj'}(p_j,p_{j'})}]\}\nonumber
\\:=&\mathcal R_1(\lambda) .\label{eq:original_t}
\end{align}
\end{proposition}
 In (\ref{eq:original_tp1}), we consider the specific configuration in which only study $j$ is null. Because the rejection rule requires that all pairwise criteria be satisfied simultaneously, the overall rejection probability is bounded above by the probability corresponding to any individual pair $(j,j')$. To obtain the sharpest upper bound, we therefore take the minimum over all $j' \ne j$. In contrast, in (\ref{eq:original_tp2}), we aim to derive a uniform bound over the entire parameter space $\Theta_2$. In this case, the rejection probability is controlled by the largest of the pairwise contributions, and hence the bound is given by the maximum of the corresponding pairwise terms. 

The proof 
is provided in Supplementary Material \ref{sec: thm:mfdr_bound_proof}. To control mFDR at level $\alpha,$ we seek $\lambda\in R^{n(n-1)/2}$ such that $\mathcal R_1(\lambda)\le \alpha$.

For computational simplicity, 
we impose a common pairwise nominal level 
$\alpha_p.$ For each pair $(j,j'),$ define  
$$\lambda_{jj'}(\alpha_p)=\sup\{\lambda:\frac{E[\mbox{Lfdr}_{jj'} 1\{\mbox{Lfdr}_{jj'}\le \lambda\}]}{E[1\{\mbox{Lfdr}_{jj'}\le\lambda\}]}\le\alpha_p\},$$ which can be implemented via the step-up procedure in Section \ref{sec:method2}. 
Let
$$\lambda(\alpha_p)=\{\lambda_{jj'}(\alpha_p): j,j'=1,\ldots, n, j\ne j'\}.$$ 
We then select 
$$\alpha_p=\sup\{t:\mathcal R_1\{\lambda(t)\}\le \alpha\}, $$ 
and reject according to 
$$\delta=1\{\mbox{Lfdr}_{jj'}\le \lambda_{jj'} (\alpha_p),\forall j\ne j'\}.$$

\subsection{Parameter estimation}
To implement the oracle procedure in Section \ref{oracle}, we need to estimate the prior probabilities 
$\pi_j, j =1, \ldots, n+1$, and the unknown functions appearing in $\mathcal R_1(\lambda),$ including the pairwise Lfdr, joint densities of the $p$-values, and marginal non-null densities. 
For unknown function estimations, 
The functional components can be estimated using the 
methods developed in Section \ref{sec:method2}. 
Here we focus on estimating $\pi_j, j =1, \ldots, n+1.$ 

Let 
$$\pi_0= P(\theta_1 = \dots = \theta_n =1),$$ 
so that $\sum_{j=0}^{n+1}\pi_j=1.$ There are $n+1$ unknown parameters $\pi_0, \pi_1, \ldots, \pi_n,$ and we construct $n+1$ equations to identify them. 

The key observation is that, when a signal replicates across studies, the corresponding $p$-values tend to be simultaneously small. For a small threshold $\tau>0$, the events
\begin{equation}
\Big\{\max_{j=1,\ldots,n} p_j \le \tau\Big\}
\quad\text{and}\quad
\Big\{\max_{j',j'\ne j} p_{j'} \le \tau\Big\},\quad j=1,\ldots,n, \label{eq:event}
\end{equation}
are therefore informative about configurations with at most one null study, i.e., 
$h(\theta)\le 1.$ 

Decomposing the probabilities of the $n+1$ events in (\ref{eq:event}) yields $n+1$ equations. 
Ignoring terms of order $O(\tau^2)$, we obtain the 
approximations
\begin{equation}
\label{eq:approx_all}
\begin{aligned}P(\max\limits_{j=1,\ldots,n}p_{j}\le \tau)&\approx\pi_0P(\max\limits_{j=1,\ldots, n}p_{j}\le \tau \mid\theta_1=\cdots=\theta_n=1)\\&+\tau \sum\limits_{j=1}^n \pi_jP(\max\limits_{j',j'\ne j}p_{j'}\le \tau  \mid \theta_{-j}=1),
\end{aligned}
\end{equation}
and, for $j=1, \ldots, n,$ ignoring terms of order $O(\tau)$, we have
\begin{equation}
    \label{eq:approx_mis1}
P(\max\limits_{j',j'\ne j}p_{j'}\le \tau)\approx(\pi_0+\pi_j)P(\max\limits_{j',j'\ne j}p_{j'}\le \tau\mid \theta_{-j}=1). 
\end{equation}
Solving (\ref{eq:approx_all}) and (\ref{eq:approx_mis1}), yields the estimators 
\begin{equation}
\label{eq:tilde_pi0}
       \tilde \pi_0=\frac{P(\max\limits_{j=1,\ldots, n}p_{j}\le \tau)-\tau\sum\limits_{j=1}^n P(\max\limits_{j',j'\ne j}p_{j'}\le \tau)}{P(\max\limits_{j=1,\ldots, n}p_{j}\le \tau\mid\theta_1=\cdots=\theta_n=1 )-\tau \sum\limits_{j=1}^n P(\max\limits_{j',j'\ne j}p_{j'}\le \tau\mid\theta_{-j}=1)},
\end{equation}
and
\begin{equation}
\label{eq:tilde_pij}
  \tilde\pi_j=\frac{P(\max\limits_{j',j'\ne j}p_{j'}\le \tau)}{P(\max\limits_{j',j'\ne j}p_{j'}\le \tau\mid \theta_{-j}=1)}-\tilde \pi_0, \quad j=1,\ldots, n.  
\end{equation}
We then set 
$\tilde\pi_{n+1}=1-\sum\limits_{j=1}^n \tilde \pi_j-\tilde\pi_0$. Detailed derivations 
are provided in Supplementary Material \ref{sup:proportion}. 

Let $F_j(x) = P(p_j \le x \mid \theta_j = 1)$ and define $$F_{\min}(x)=\min_{1\le j \le n}F_j(x).$$ We impose the following assumption.

(C6) $\tau\le F_{\min}(\tau)$ and $\lim\limits_{\tau\to 0}\frac{F_{\min}(\tau)}{\tau}=+\infty$.

This condition formalizes the requirement that small $p$-values are substantially more likely under the non-null than under the null. 
For example, if $X\sim N(\mu,1)$ and the one-sided $p$-value is $1-\Phi(X)$, where $\Phi(\cdot)$ is the cdf of the standard normal distribution, then (C6) holds; 
see Supplementary Material \ref{sup:C6}. 

For sufficiently small $\tau,$ both the numerator and denominator in (\ref{eq:tilde_pi0}) are strictly positive 
under (C6). 

Substituting $\tilde{\pi}_j, j =1, \ldots, n+1$ into $\mathcal R_1(\lambda)$ yields 
\begin{equation}
\label{eq:R2}
    \mathcal R_2(\lambda,\tau):= \frac{\tilde \pi_{n+1}\max\limits_{j,j',j\ne j'} E(\frac{1\{\mbox{Lfdr}_{jj'}\le \lambda_{jj'} \} }{f_{jj'}(p_j,p_{j'})})+\sum\limits_{j=1}^n\tilde \pi_j\min\limits_{j',j'\ne j}E(\frac{1\{\mbox{Lfdr}_{jj'}\le \lambda_{jj'} \}f_{j'}(p_{j'}) }{f_{jj'}(p_j,p_{j'})})}{E\{\delta(\lambda)\}}.
\end{equation}
For $j=1,\ldots,n$, define
    \begin{equation}
        \label{eq:ej}E_j(\lambda)=\min\limits_{j',j'\ne j}E(\frac{1\{\mbox{Lfdr}_{jj'}\le \lambda_{jj'} \}f_{j'}(p_{j'}) }{f_{jj'}(p_j,p_{j'})})=\min\limits_{j',j'\ne j} E(1\{\mbox{Lfdr}_{jj'}\le \lambda_{jj'} \}\mid\theta_j=0,\theta_{j'}=1),
    \end{equation}
    and
    \begin{equation}
    \label{eq:e0}E_0(\lambda)=\max\limits_{j,j',j\ne j'}E(\frac{1\{\mbox{Lfdr}_{jj'}\le \lambda_{jj'} \}}{f_{jj'}(p_j,p_{j'})})=\max\limits_{j,j',j\ne j'} E(1\{\mbox{Lfdr}_{jj'}\le \lambda_{jj'}\}\mid \theta_j=\theta_{j'}=0). 
    \end{equation}
Let
$$\Lambda:=\{\lambda: \min\limits_{j=1,\ldots, n}E_j(\lambda)\ge E_0(\lambda)\}.$$ 
Intuitively, $\Lambda$ contains thresholds under which rejection is more likely when exactly one study is non-null than when both are null; this set is nonempty 
(see Supplementary Material \ref{sup:lambda}). 

\begin{proposition}
    \label{thm:R_compare}
    Under (C6), for any $\lambda\in \Lambda$, there exists $\tau^{*}>0$ such that for all $\tau<\tau^{*}$, 
    $$
    \mathcal R_2(\lambda,\tau)\ge \mathcal R_1(\lambda).
    $$
\end{proposition}

The proof of Proposition \ref{thm:R_compare} is provided in Supplementary Material \ref{thm:R_compare_proof}.


Thus $\mathcal R_2(\lambda,\tau)$ provides a conservative upper bound for $\mathcal R_1(\lambda)$ when $\tau$ is sufficiently small.
For fixed $\lambda$, if the non-null densities are known or can be consistently estimated, the expectations in 
$\mathcal R_2(\lambda,\tau)$ can be approximated by sample averages via the law of large numbers, enabling practical implementation. 

\subsection{FDR control} 
Estimating $\tilde \pi_0$ in (\ref{eq:tilde_pi0}) and $\tilde \pi_{j}$ for $j =1, \ldots, n$ in (\ref{eq:tilde_pij}) requires evaluating the probabilities 
$$P(\max\limits_{j=1,\ldots,n} p_j \le \tau ),$$ $$
P(\max\limits_{j',j'\ne j} p_{j'} \le \tau ),$$ $$
P(\max\limits_{j=1,\ldots,n} p_j \le \tau \mid \theta_1=\cdots=\theta_n=1),$$ and
$$P(\max\limits_{j',j'\ne j} p_{j'} \le \tau \mid \theta_{-j}=1).$$  The first two probabilities can be consistently estimated using the law of large numbers. The latter two are obtained by plugging in the estimated cumulative distribution functions of the non-null $p$-values, which are constructed using the procedure described in Section \ref{sec:method2}. This yields estimates $\hat{\pi}_j, j =0, \ldots, n+1.$

Using these estimates, the plug-in version of $\mathcal R_2(\lambda, \tau)$ in (\ref{eq:R2}) is given by
\begin{equation}
\label{eq:R_3}
\mathcal R_3(\lambda,\tau)=\frac{\hat\pi_{n+1}\max\limits_{j,j',j\ne j'}\frac{1}{m}\sum\limits_{i=1}^m\frac{1\{\widehat{\operatorname{Lfdr}}_{i,jj'}\le \lambda_{jj'}\}}{\hat{f}_{jj'}(p_{i,j},p_{i,j'})}+\sum\limits_{j=1}^n \hat \pi_j\min\limits_{j',j'\ne j}\frac{1}{m}\sum\limits_{i=1}^m\frac{1\{\widehat{\operatorname{Lfdr}}_{i,jj'}\le \lambda_{jj'}\}\hat{f}_{j'}(p_{i,j'})}{\hat{f}_{jj'}(p_{i,j},p_{i,j'})} }{\frac{1}{m}\sum\limits_{i=1}^m1\{\widehat{\operatorname{Lfdr}}_{i,jj'}\le \lambda_{jj'},\forall j, j'=1,\ldots, n,j\ne j'\}}. 
\end{equation}

The complete procedure is summarized in Algorithm \ref{alg2}. 
\begin{algorithm}
\caption{Replicability analysis for $n$ studies}
\label{alg2}
\begin{algorithmic}[1]
\REQUIRE For $m$ SNPs and $n$ studies, a $p$-value matrix $P\in [0,1]^{m\times n}$, where $p_{ij}$ denotes the $p$-value for SNP $i$ in study $j$, and a pre-specified FDR level $\alpha$.
\ENSURE Rejection index $\mathcal I$

\STATE 
For each SNP $i$ and study pairs $(j,j')$ with $j \ne j'$, compute $\widehat{\mathrm{Lfdr}}_{i,jj'},$ the estimated pairwise Lfdr. Also estimate $\hat{f}_k$ for $k = 1, \dots, n$, where $\hat{f}_k$ denotes the density of the non-null $p$-values for study $k,$ as described in Section \ref{sec:method2}.  

\STATE 
Computer $\hat{\pi}_j, j =0, \ldots, n+1$ using (\ref{eq:tilde_pi0}) and (\ref{eq:tilde_pij}).

\STATE For each pair $(j,j')$ with $j\ne j'$, and for $t\in (0,1)$, define 
\[
\hat{\lambda}_{jj'}(t) = \sup_{\lambda} \left\{
\frac{\frac{1}{m} \sum_{i=1}^m \widehat{\mathrm{Lfdr}}_{i,jj'} \, 1\{ \widehat{\mathrm{Lfdr}}_{i,jj'} \le \lambda \}}
{\frac{1}{m} \sum_{i=1}^m 1\{ \widehat{\mathrm{Lfdr}}_{i,jj'} \le \lambda \}}
\le t
\right\}.
\]
\STATE Let 
\[
\hat\lambda(t)=(\hat\lambda_{12}(t),\ldots,\hat\lambda_{n-1,n}(t))\in R^{n(n-1)/2}.
\] and 
compute $\mathcal R_3(\hat\lambda(t),\tau)$ 
in 
(\ref{eq:R_3}), where $\tau>0$ is a small constant.

\STATE For $\hat\lambda(t)\in \Lambda,$ compute
\begin{equation}
\label{eq:alpha_hat}
    \hat{\alpha}_p = \sup_{t} \left\{ \mathcal{R}_3\big( \hat{\lambda}(t),\tau \big) \le \alpha \right\}.
\end{equation}

\STATE Return 
\begin{equation}
\label{eq:rejection_set}
    \mathcal I=\{i:\ \widehat{\mathrm{Lfdr}}_{i,jj'} \le \hat{\lambda}_{jj'}(\hat{\alpha}_p),\ \forall j\ne j',\ j,j'=1,\ldots, n\}.
\end{equation}
\end{algorithmic}
\end{algorithm}

  In step 5 of Algorithm \ref{alg2}, we require that $\hat{\lambda}(t) \in \Lambda.$ We next describe how to transform any threshold vector $\lambda$ 
   lying outside the admissible set $\Lambda$ back into $\Lambda.$ 

Let $(j_{1},j_{2})$ be a study pair that achieves
\[
E\!\left(1\{\mathrm{Lfdr}_{j_{1},j_{2}}\le \lambda_{j_{1},j_{2}}\}\mid 
\theta_{j_{1}}=1,\theta_{j_{2}}=0\right)
=
\min_{j=1,\ldots,n} E_j(\lambda).
\]
For indices $(j_3, j_4)$ 
satisfying 
\[
E\!\left(1\{\mathrm{Lfdr}_{j_{3},j_{4}}\le \lambda_{j_{3},j_{4}}\}\mid
\theta_{j_{3}}=\theta_{j_{4}}=0\right)
\;>\;
E\!\left(1\{\mathrm{Lfdr}_{j_{1},j_{2}}\le \lambda_{j_{1},j_{2}}\}\mid 
\theta_{j_{1}}=1,\theta_{j_{2}}=0\right),
\]
we have already showed that when $j_3 = j_1$ and $j_4 = j_2,$ the pair $(j_1, j_2)$ belongs to $\Lambda$ (see (\ref{eq:tmp_2comp}) in the Supplementary Material). Thereore, it suffices to consider the case $(j_{3},j_{4})\ne (j_1,j_2).$ In this case, 
we decrease $\lambda_{j_{3},j_{4}}$ to a new value $\lambda^{*}_{j_{3},j_{4}}$ such that
\[
E\!\left(1\{\mathrm{Lfdr}_{j_{3},j_{4}}\le \lambda_{j_{3},j_{4}}^{*}\}\mid
\theta_{j_{3}}=\theta_{j_{4}}=0\right)
\le
E\!\left(1\{\mathrm{Lfdr}_{j_{1},j_{2}}\le \lambda_{j_{1},j_{2}}\}\mid 
\theta_{j_{1}}=1,\theta_{j_{2}}=0\right).
\]

This adjustment ensures that the required ordering condition is satisfied. 
    

Algorithm \ref{alg2} therefore constructs fully data-driven thresholds $\widehat{\lambda}_{jj'}(\widehat{\alpha}_p)$ 
that balances the empirical estimate of the false discovery proportion with the target level~$\alpha$.
By combining pariwise Lfdr estimates with the mixture proportion estimates $(\widehat{\pi}_0,\widehat{\pi}_1,\ldots,\widehat{\pi}_{n+1})$,
the procedure provides a principled and scalable framework for multi-study replicability analysis, extending the two-study setting to an arbitrary number of studies.

The following theorem establishes the 
asymptotic validity of the method. Its proof if relegated in Supplementary Material \ref{thm:multi-fdr_proof}.

\begin{theorem}
\label{thm:multi-fdr}
Suppose we have $n$ studies and $m$ hypotheses. For every pair of studies, conditions (C1)–(C5) hold, with the parameters $\xi_{00}, \xi_{10}, \xi_{01}$ and $\xi_{11}$ in (C2) bounded below by a small positive constant, and assume that (C6) holds for all studies. 
Consider testing the composite null hypothesis
$$H_{0i}=\{\theta_{i1}\theta_{i2}\dots\theta_{in}=0\},$$
using Algorithm 2 
with the rejection set $\mathcal I$ defined in (\ref{eq:rejection_set}).
Then, for any prespecified level $\alpha,$ 
we have \[
    \limsup_{m\to\infty} \mbox{FDR}_m \le \alpha.
    \]
\end{theorem}

\section{Simulation studies}\label{sec:simulation}
\subsection{Independent case with two studies}\label{subsec1}
We conduct simulation studies to evaluate the finite sample performance of the proposed method in terms of FDR control and statistical power. The data are generated as follows. Given prior probabilities 
$(\xi_{00}, \xi_{01}, \xi_{10}, \xi_{11})$, we generate the joint hidden states from a multinomial distribution with 
$$ P(\theta_{i1}=k, \theta_{i2}=l)=\xi_{kl}$$ for $k,l\in\{0,1\}.$ 
For study $j\ (j=1,2),$ the test statistics are generated according to $X_{ij}\mid\theta_{ij} \sim \theta_{ij}N(0, \sigma_j^2) + (1-\theta_{ij})N(\mu_j,\sigma_j^2),\ i=1,\dots,m$, where $N(\mu,\sigma^2)$ denotes the normal distribution with mean $\mu$ and variance $\sigma^2$, and $m$ is the number of SNPs. The one-sided $p$-values are computed as $p_{ij} = 1-\Phi\left(\frac{X_{ij}}{\sigma_j}\right)$ for $i=1,\dots,m$ and $j=1,2$, where $\Phi$ is the cumulative distribution function of the standard normal distribution. 

We compare the proposed method with several competing
methods, including \textit{ad hoc} BH, MaxP \citep{benjamini2009selective}, IDR \citep{li2011measuring}, MaRR \citep{philtron2018maximum}, radjust \citep{bogomolov2018assessing}, AdaFilter \citep{wang2022detecting}, and JUMP \citep{lyu2023jump}. Implementation details are provided in Supplementary Material \ref{sec:comparison_methods}. 

\begin{figure}[!ht]
	\centering{\includegraphics[width=0.7\textwidth]{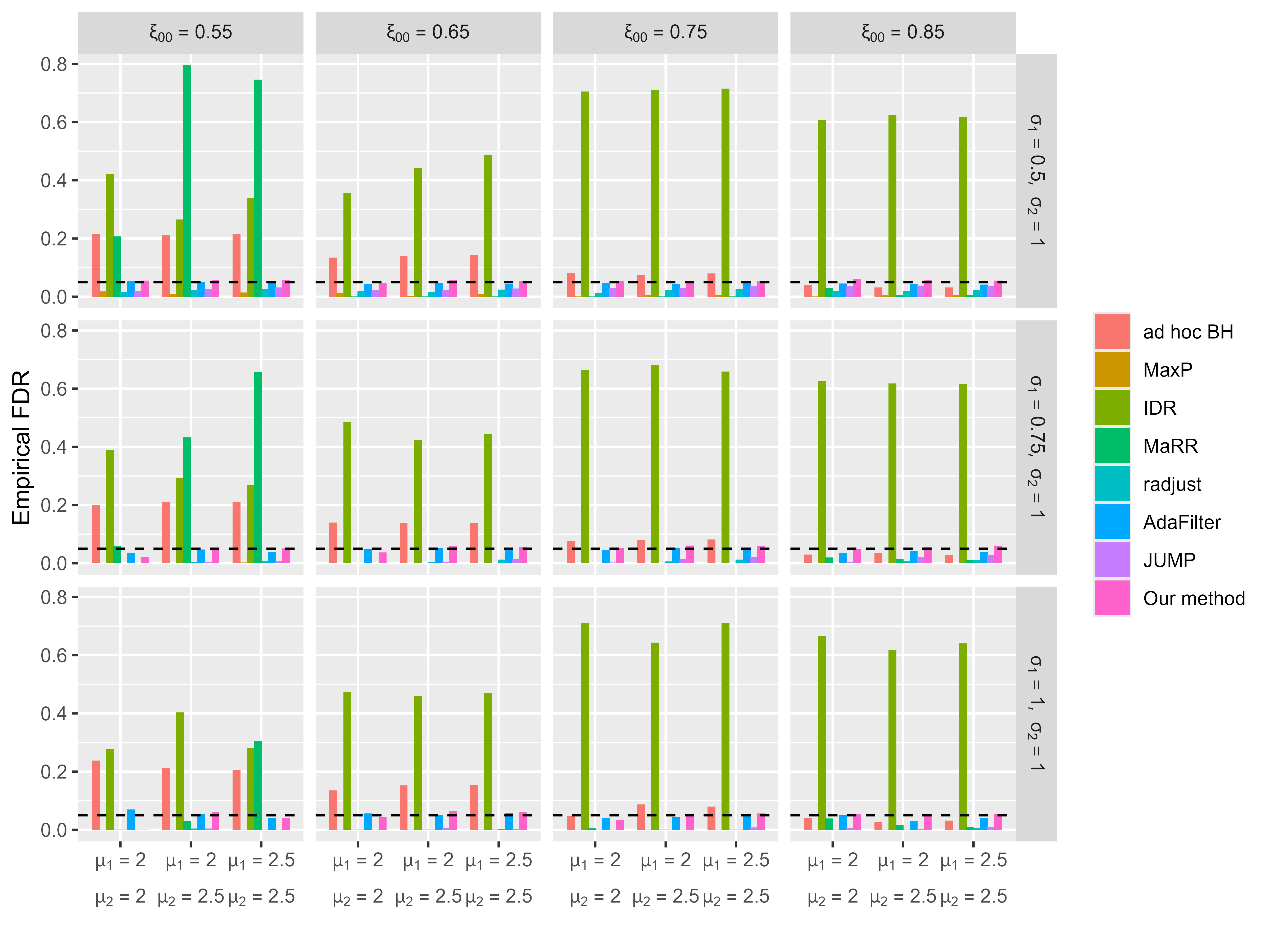}}
	\caption{FDR control of different methods for independent cases with two studies.} \label{fig:simu_fdr}
\end{figure}

\begin{figure}[!ht]
	\centering{\includegraphics[width=0.7\textwidth]{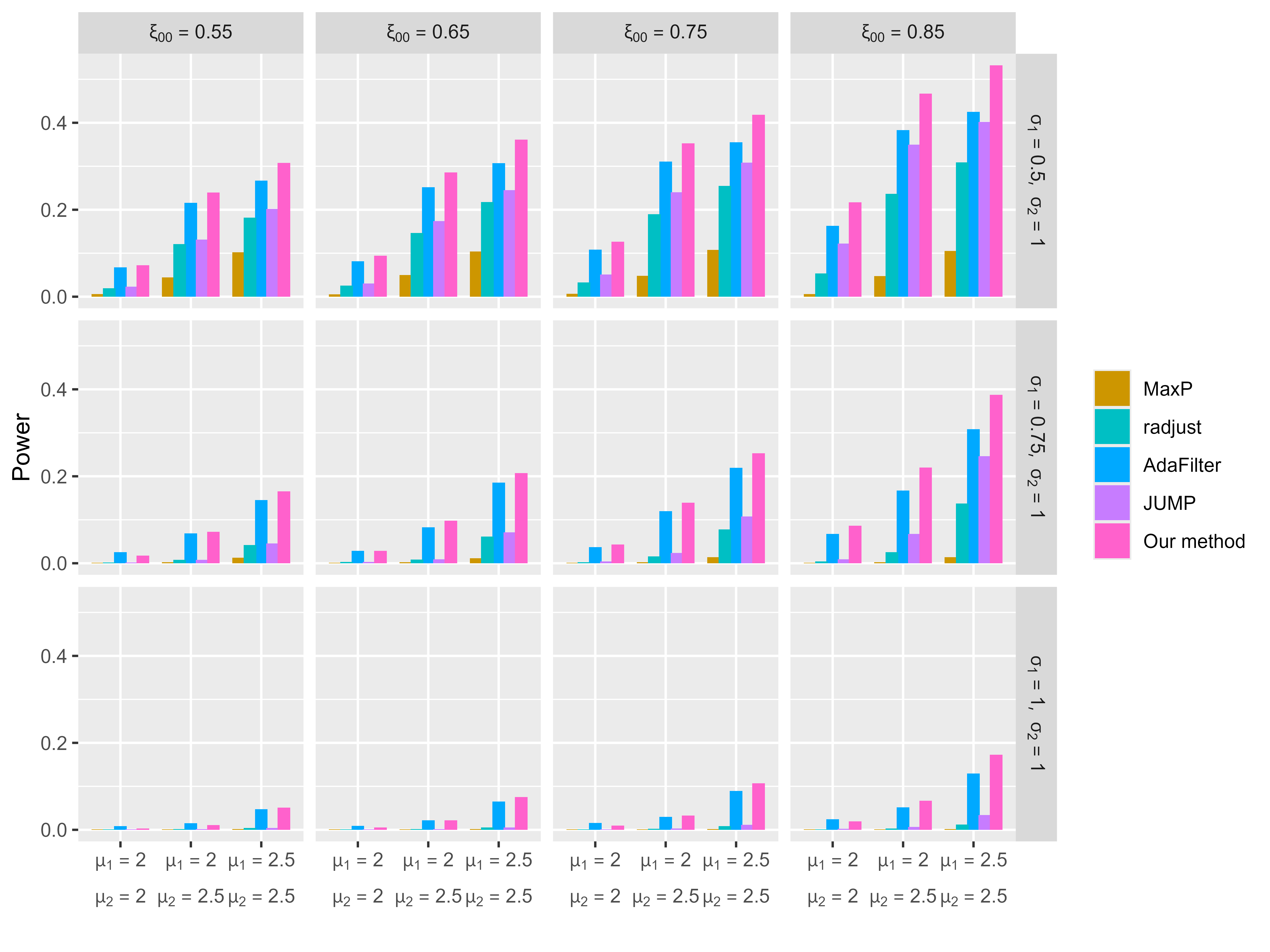}}
	\caption{Power comparison of different methods for independent cases with two studies.} \label{fig:simu_power}
\end{figure}

We set $m=10,000, \xi_{11} = 0.05$ and $\xi_{01} = \xi_{10}$. 
We vary $\mu_1,\mu_2,\sigma_1,\sigma_2$ and $\xi_{00}$ across simulation scenarios. 
For each configurate, empirical FDR and power are estimated 
based on $100$ simulated datasets. Results at the nominal FDR level $0.05$ are presented in Figures \ref{fig:simu_fdr} and \ref{fig:simu_power}, respectively. 

In Figure \ref{fig:simu_fdr}, 
the horizontal dashed line represents the target FDR level. We observe that \textit{ad hoc} BH, IDR, and MaRR fail to control the FDR, whereas the remaining methods achieve valid FDR control across the considered settings. 

Figure \ref{fig:simu_power} compares the power of the methods that successfully control the FDR. We observe that MaxP and radjust exhibit relatively low power, AdaFilter and JUMP achieve moderate power, and the proposed method attains the highest power.

Figure \ref{fig:simu_nom_emp} reports results under 
a sparser configuration with $m=10,000, \xi_{00} = 0.95, \xi_{11} = 0.02$, $\xi_{01} = \xi_{10} = 0.015, \mu_1 = \mu_2 = 2$ and $\sigma_1=\sigma_2=1.$ We evaluate the performance of different methods across nominal FDR levels ranging from $0.001$ to $0.2$. The left panel compares nominal and empirical FDR, where the black diagnoal line (slope 1) serves as a reference. 
Under this sparse setting, IDR and maRR exhibit inflated FDR. The empirical FDR of the proposed method closely tracks the nominal level, indicating accurate control. 
JUMP and AdaFilter also maintain valid FDR control whereas \textit{ad hoc} BH, MaxP, and radjust are overly conservative. In terms of power (right panel), 
\textit{ad hoc} BH, MaxP, and radjust have very low power. AdaFilter and JUMP demonstrate moderate power, while the proposed method achieves the highest power across all nominal levels. 
\begin{figure}[!ht]
\centering{\includegraphics[width=0.7\textwidth]{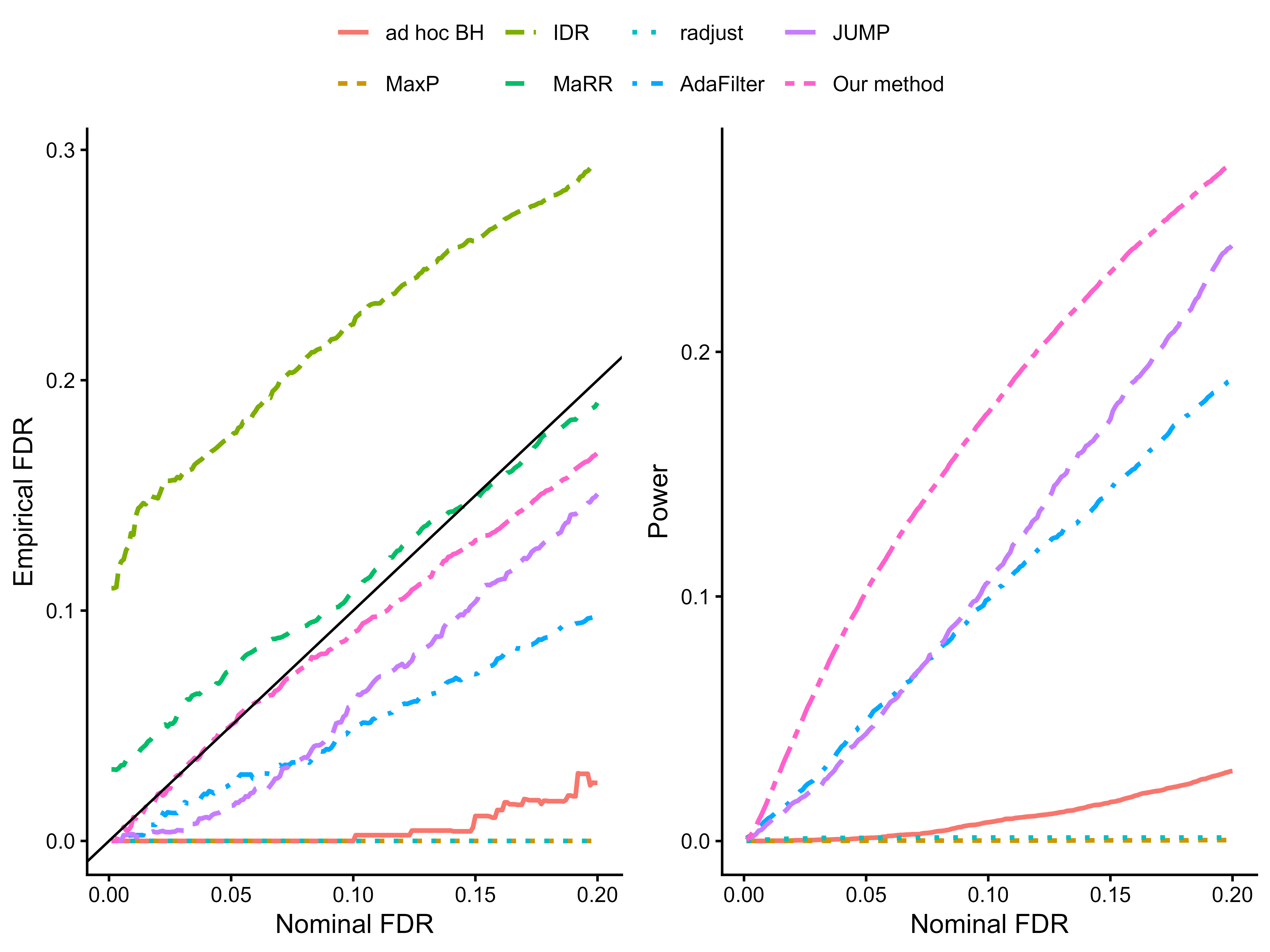}}
	\caption{Left: Empirical versus nominal FDR for independent cases with two studies. Right: Power comparison at different nominal FDR levels.} \label{fig:simu_nom_emp}
\end{figure}

\subsection{Dependent case with two studies}
To assess the robustness of the proposed procedure under local undependence, we 
generate data with correlated test statistics within each study. Specifically, we consider $m=10,000$ hypotheses per study. The hypotheses are divided into $100$ blocks, and each block is further partitiond into two equal-sized sub-blocks. Within each sub-block, test statistics share a constant correlation: one sub-block has correlation $0.2$ and another has correlation $-0.2.$ Different blocks are independent. 

For study $j$ ($j=1,2$), let $\theta_j^*=(\theta_{kj}^*)_{k=1}^{100}\in R^{100}$ denote the block-level latent states. For each block $k,$ the joint configuration $(\theta_{k1}, \theta_{k2})$ follows
$$P(\theta_{k1}=l_1,\theta_{k2}=l_2)=\xi_{l_1l_2} \quad \mbox{for} \quad l_1,l_2\in\{0,1\},$$ 
and configurations are independent across blocks. 

Within each study, the test statistics are generated from a multivariate normal distribution: 
$$ X_{j}\sim \mbox{MVN}(\mu, \Sigma),$$ where the mean vector is $\mu=\theta^*_{j}\cdot\mu_j,$ with $\mu_j$ specified as in Section \ref{subsec1}. The one-sided $p$-values are computed as
$$p_{ij} = 1-\Phi\left(\frac{X_{ij}}{\sigma_j}\right), \quad i=1,\dots,m, \quad j=1,2,$$ where $\Phi$ is the cumulative distribution function of the standard normal distribution. 

To illustrate the dependence structure, we use the following $4 \times 4$ covariance matrix with $\rho = 0.2$ and $-0.2:$  
$$
\Sigma=\begin{pmatrix}
    1    & 0.2  & -0.2 & -0.2\\
    0.2  & 1    & -0.2 & -0.2\\
    -0.2 & -0.2 & 1    & 0.2\\
    -0.2 & -0.2 & 0.2  & 1
\end{pmatrix}.
$$
Larger blocks follow the same correlation pattern.
\begin{figure}[!ht]
	\centering{\includegraphics[width=0.7\textwidth]{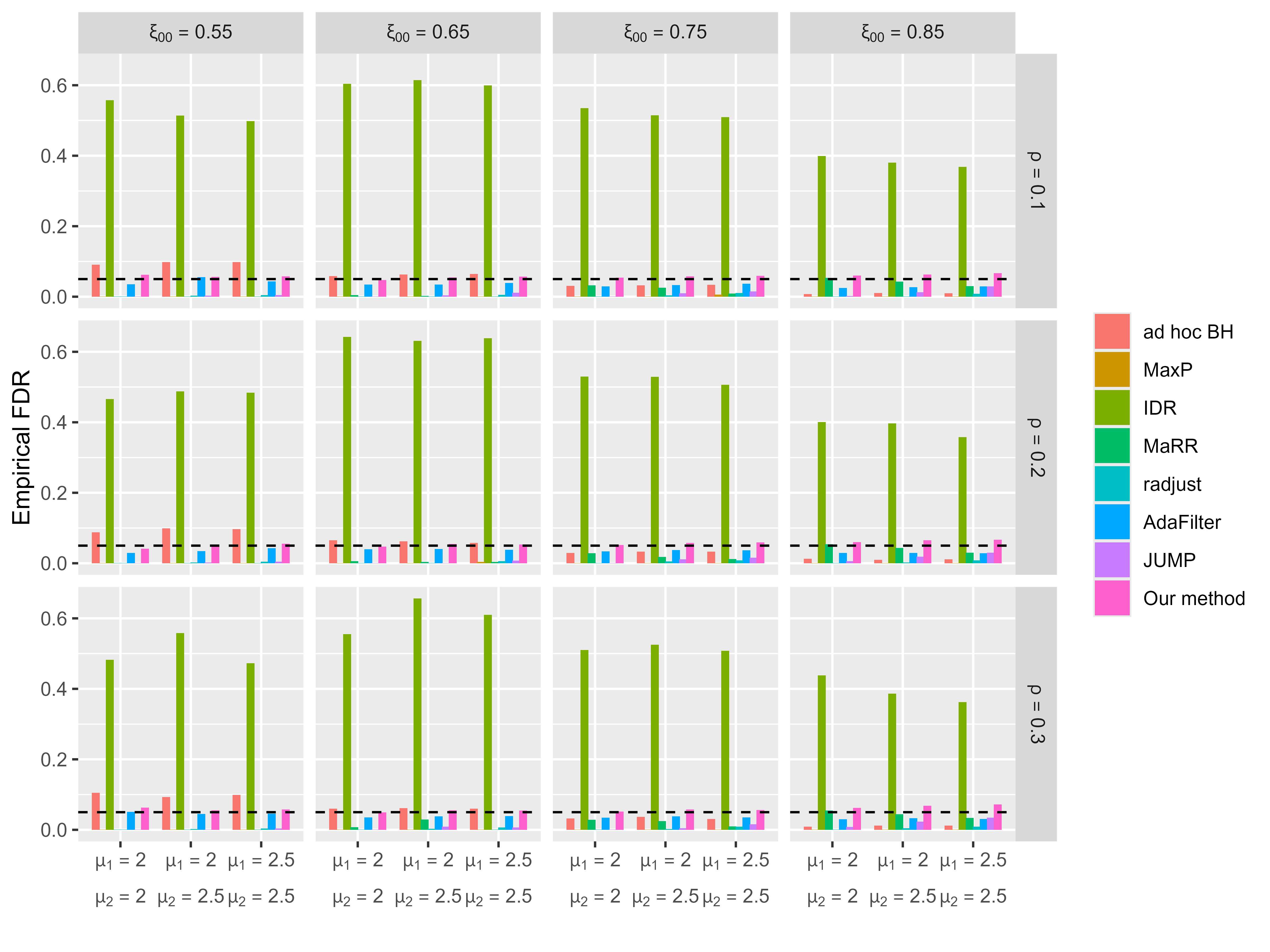}}
	\caption{FDR control of different methods for dependent cases with two studies.} \label{fig:simu_corr_fdr}
\end{figure}

\begin{figure}[!ht]
	\centering{\includegraphics[width=0.7\textwidth]{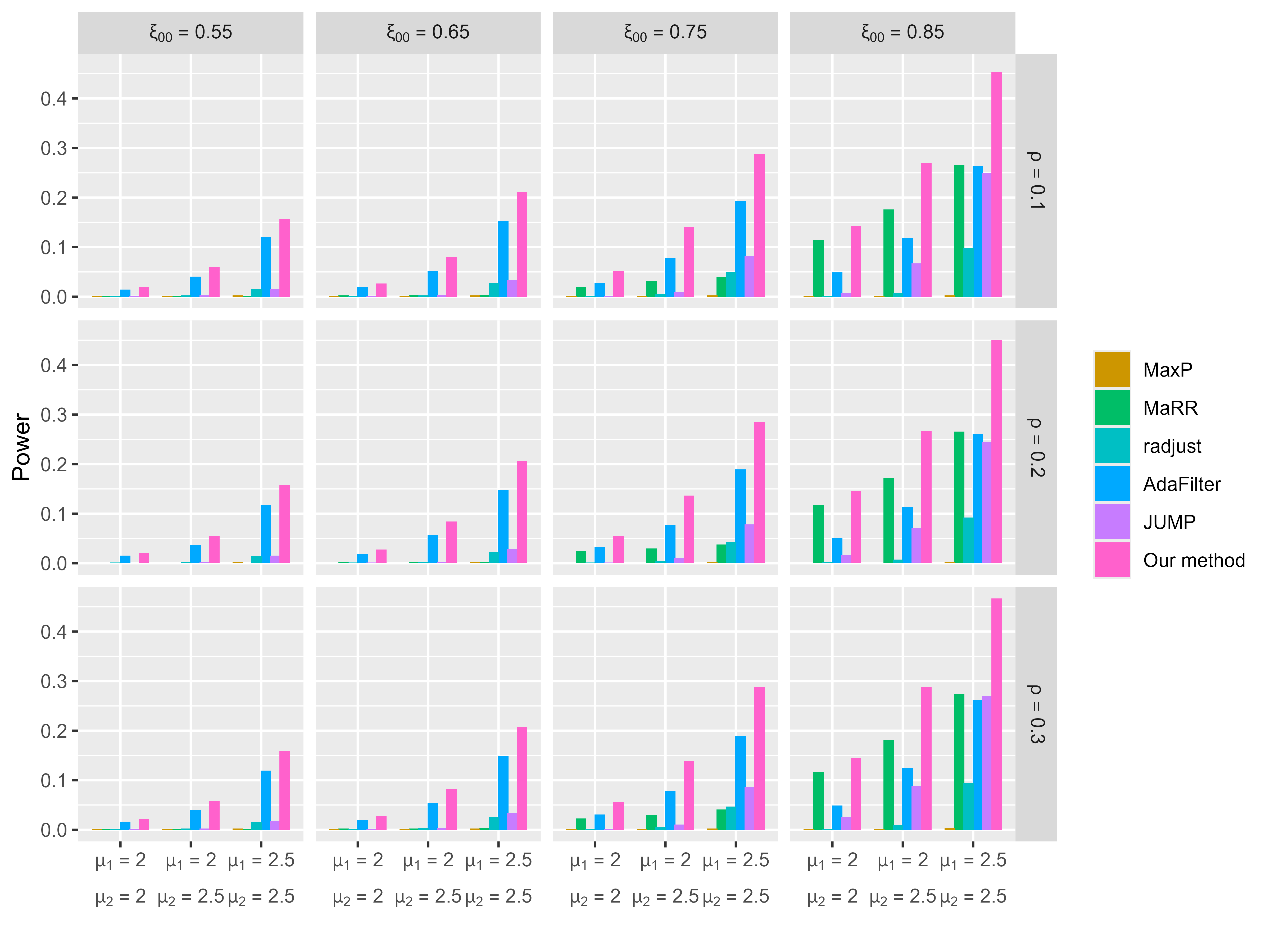}}
	\caption{Power comparison of different methods for dependent cases with two studies.} \label{fig:simu_corr_power}
\end{figure}

Figures \ref{fig:simu_corr_fdr} and \ref{fig:simu_corr_power} report FDR control and power 
at the nominal FDR level $0.05,$ respectively. 
In Figure \ref{fig:simu_corr_fdr}, the horizontal dashed line represents the target FDR level. We observe that \textit{ad hoc} BH and IDR fail to control the FDR under dependence. In contrast, MaxP, MaRR, radjust, AdaFilter, JUMP and the proposed method maintain valid FDR control.

Figure \ref{fig:simu_corr_power} compares power among the methods that control the FDR. 
MaxP and radjust exhibit relatively low
power. AdaFilter, MaRR and JUMP achieve moderate power, while our proposed method attains the highest power across all settings.

Figure \ref{fig:simu_nom_emp_dependent} further examines a sparser scenario with 
$m=10,000, \xi_{00} = 0.95, \xi_{11} = 0.02$, $\xi_{01} = \xi_{10} = 0.015, \mu_1 = \mu_2 = 2$ and $\rho=0.2$. The nominal FDR ranges from $0.001$ to $0.2$. The black diagonal line with slope 1 serves as a reference for exact FDR control. 

In this sparse setting, IDR and maRR exhibit inflated FDR. Our proposed method shows close agreement between nominal and empirical FDR across all levels. 
JUMP and AdaFilter also achieve valid FDR control, whereas \textit{ad hoc} BH, MaxP, and radjust are overly conservative. In terms of power, \textit {ad hoc}
BH, MaxP, and radjust perform poorly; AdaFilter and JUMP provide moderate power, and
our proposed method achieves the highest power throughout the range of nominal FDR levels.

\begin{figure}[!ht]
\centering{\includegraphics[width=0.7\textwidth]{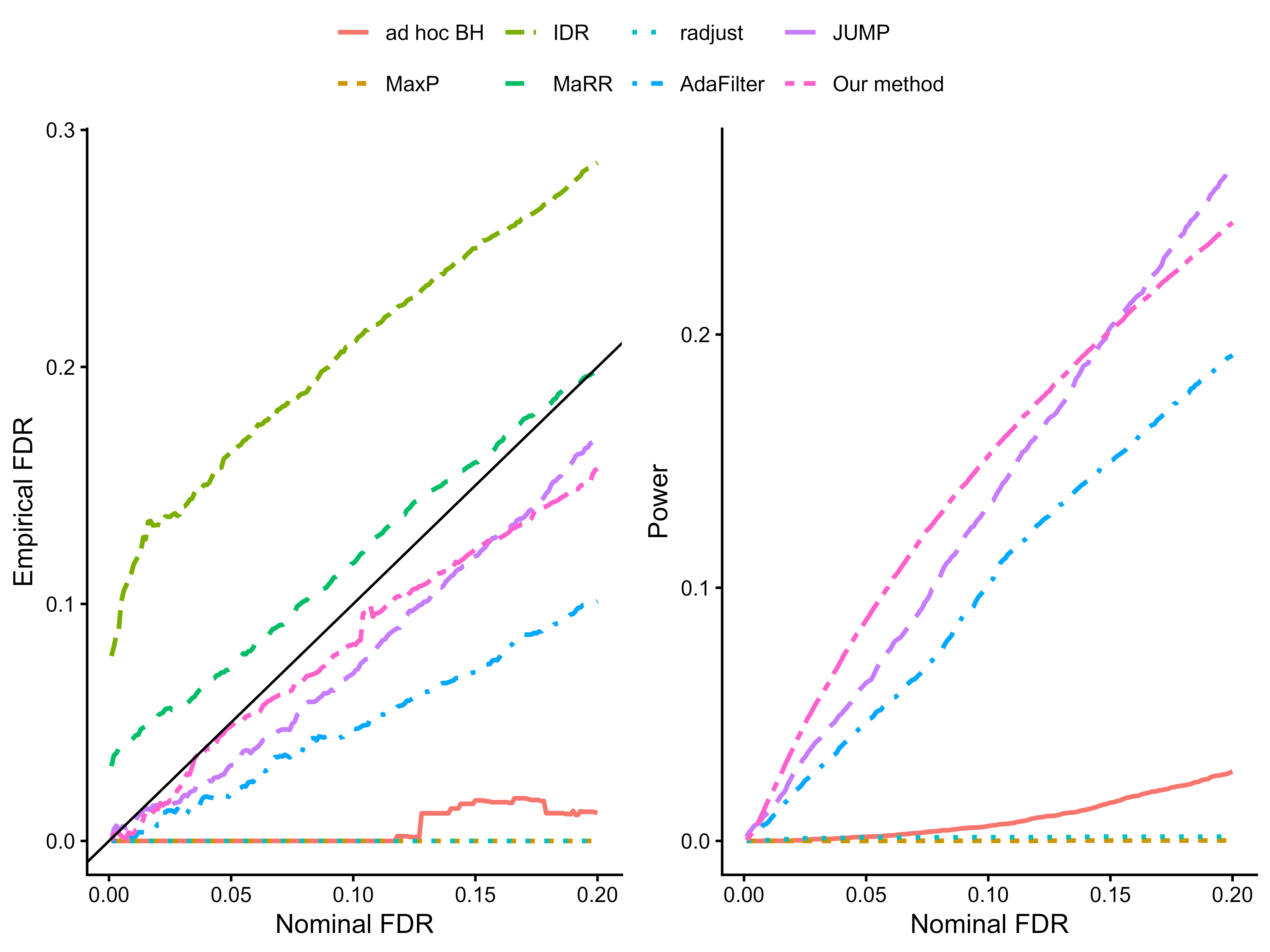}}
	\caption{Left panel: FDR control at different nominal levels for dependent cases with two studies. Right panel: Power comparison at different nominal FDR levels for dependent cases with two studies.} \label{fig:simu_nom_emp_dependent}
\end{figure}

\subsection{Three studies}
In this section, we evaluate FDR control and power for replicability analysis across three studies.

For each hypothesis $i,$ the joint latent states $(\theta_{i1}, \theta_{i2}, \theta_{i3})$ are generated from a multinomial distribution with 
$$P(\theta_{i1}=k, \theta_{i2} = l, \theta_{i3}=r)=\xi_{klr} \quad \mbox{for} \quad k,l,r\in\{0, 1\}.$$ 
For study $j$ ($j=1,2,3$), the test statistics are generated from 
$$X_{ij}\mid\theta_{ij}\sim \theta_{ij}N(0, 1) +(1-\theta_{ij})N(\mu_j, 1), i=1,\dots,m,$$
where $\mu_j>0$. The one-sided $p$-values are computed as
$$p_{ij}=1-\Phi(X_{ij}), \quad, i=1,\dots,m, \quad j=1,2,3.$$
Since IDR and radjust are designed for two studies only, they are excluded from this comparison. 
We extend JUMP to three-study setting. Implementation details are provided in Supplementary Material \ref{sec:comparison_methods}. 

We set $m=10,000, \mu_1 = \mu_2 = 2, \xi_{001} = \xi_{010} = \xi_{100}$ and $ \xi_{011} = \xi_{101} = \xi_{110} = 0.02$. The parameters $\mu_3, \xi_{000}$ and $\xi_{001}$ are varied across simulation settings.  
For each configuration, FDR and power 
are estimated based on 100 simulated datasets. 

Figures \ref{fig:simu_three_fdr} and \ref{fig:simu_three_pow} report FDR control and power at the nominal level $0.05,$
respectively. 
In Figure \ref{fig:simu_three_fdr}, the horizontal dashed line represents the target FDR level. All methods control the FDR except MaRR, which exhibits inflation in some settings. Figure \ref{fig:simu_three_pow} displays only the methods that achieve valid FDR control. MaxP and radjust show relatively low power, while JUMP and AdaFilter achieve moderate power. Our proposed method attains the highest power across all scenarios.

Figure \ref{fig:simu_nom_emp_three} presents a further comparison under the setting 
$$m=10,000, \xi_{000} = 0.28, \xi_{001}=\xi_{010}=\xi_{100} = 0.14, \xi_{011}=\xi_{101}=\xi_{110}=0.06, \xi_{111} = 0.12,$$ with $$\mu_1 = \mu_2 =2.5, \mu_3=3.5.$$ The nominal FDR ranges from $0.001$ to $0.2$.
A diagonal line with slope one serves as a reference for exact FDR control. 

In this setting, all methods control the FDR, although MaRR and MaxP are noticeably conservative. In terms of power, \textit{ad hoc} BH, AdaFilter and JUMP demonstrate moderate performance, whereas our proposed method achieves the highest power uniformly across the entire range of nominal FDR levels. 

\begin{figure}[!ht]
	\centering{\includegraphics[width=0.7\textwidth]{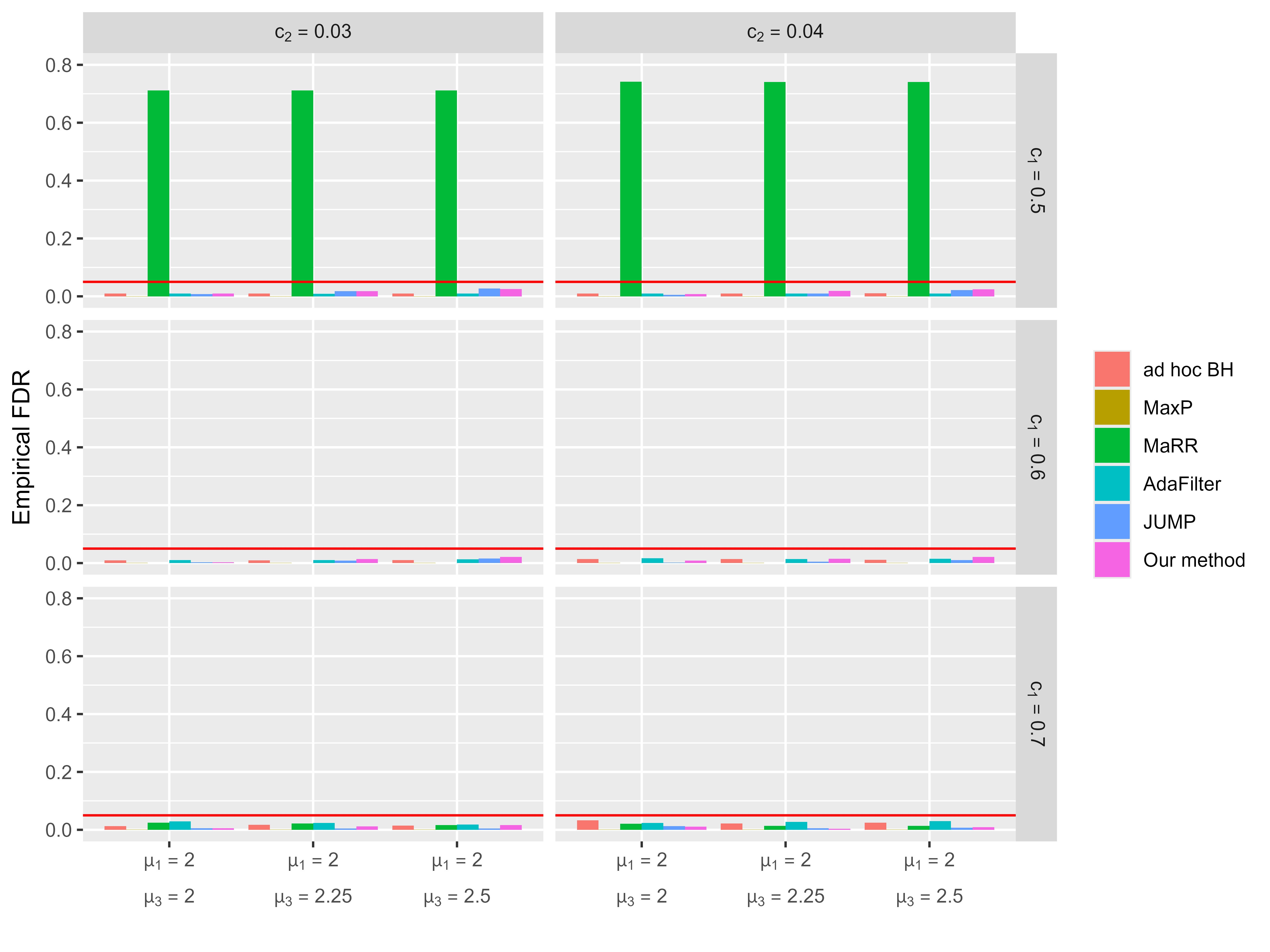}}
	\caption{FDR control of different methods for three studies. } \label{fig:simu_three_fdr}
\end{figure}

\begin{figure}[!ht]
    \centering{
    \includegraphics[width=0.7\linewidth]{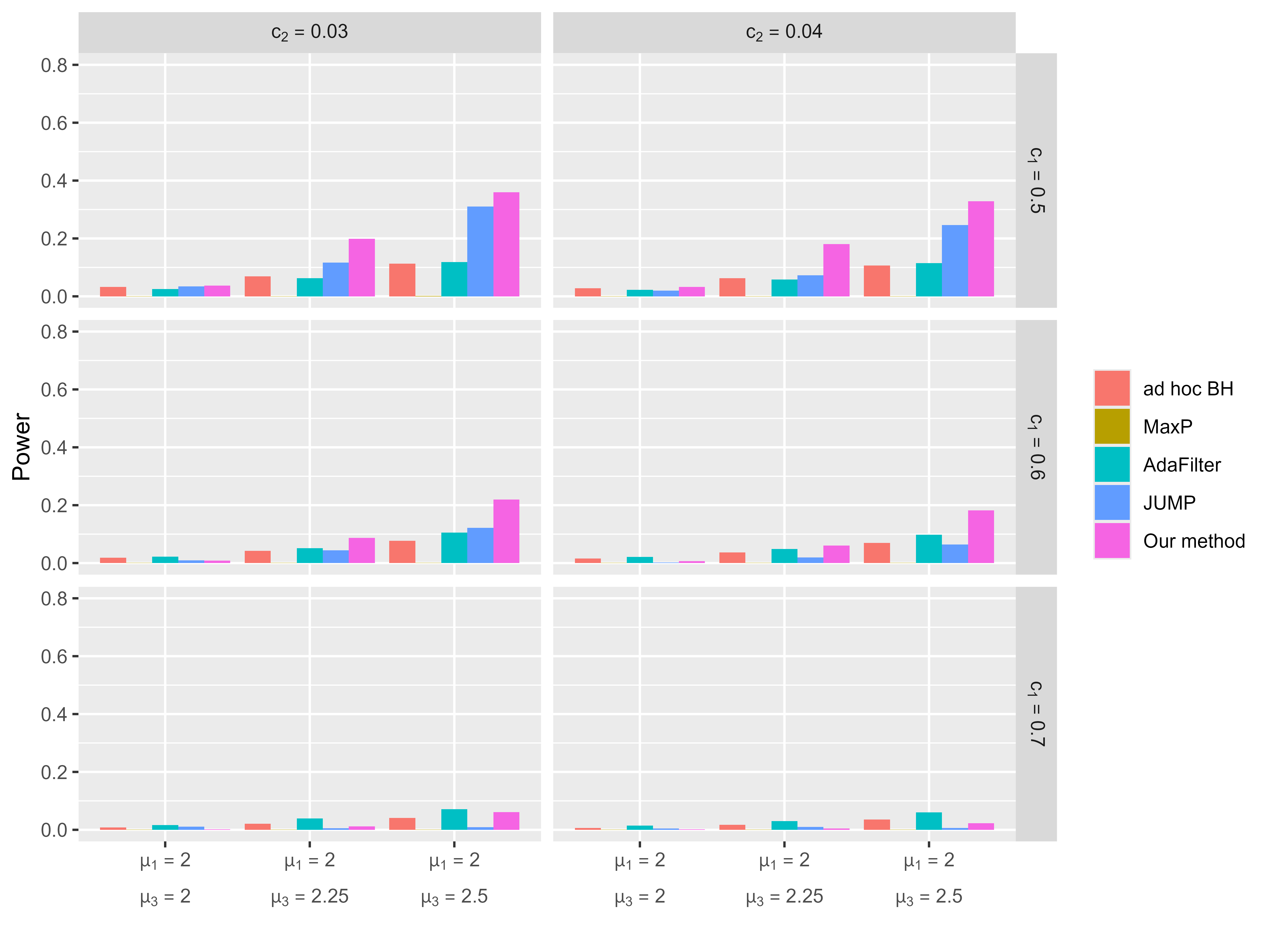}}
    \caption{Power comparison of different methods for three studies.}
    \label{fig:simu_three_pow}
\end{figure}

\begin{figure}[!ht]
\centering{\includegraphics[width=0.7\textwidth]{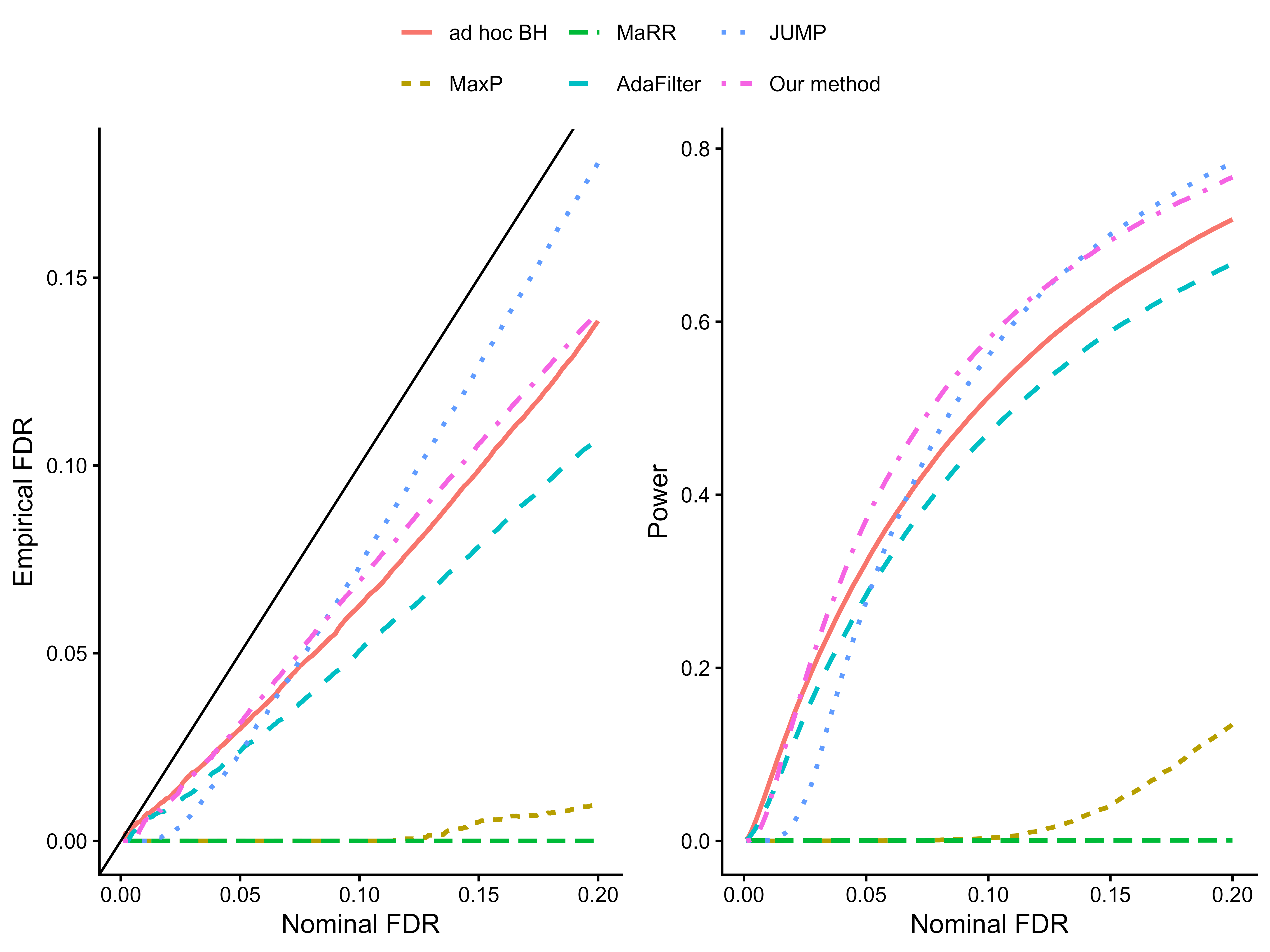}}
	\caption{Left panel: FDR control at different nominal levels for three studies. Right panel: Power comparison at different nominal FDR levels for three studies.} \label{fig:simu_nom_emp_three}
\end{figure}

\section{Data analysis}\label{sec:data}
We illustrate the proposed method using genome-wide association study (GWAS) data. \cite{mahajan2022multi} assembled an ancestrally diverse collection of 122 GWAS comprising $180,834$ type 2 diabetes (T2D) cases and $1,159,055$ controls across five ancestry groups. We conduct a replicability analysis of the East Asian-specific and European-specific GWAS to identify SNPs associated with T2D that replicate across the two populations. 

The datasets are publicly available from the DIAbetes Genetics Replication And Meta-analysis Consortium at \url{https://www.diagram-consortium.org/downloads.html}. The East Asian dataset contains summary statistics for $760,565$ SNPs on Chromosome 2, while the European dataset includes $879,506$ SNPs on the same chromosome. We focus on the $760,565$ SNPs shared by both studies. Let $p_{i1}, p_{i2}, i =1, \ldots, m$ denote the $p$-values for SNPs in the East Asian and European ancestry studies, respectively.

\begin{figure}[!ht]
	\centering{\includegraphics[width=\textwidth]{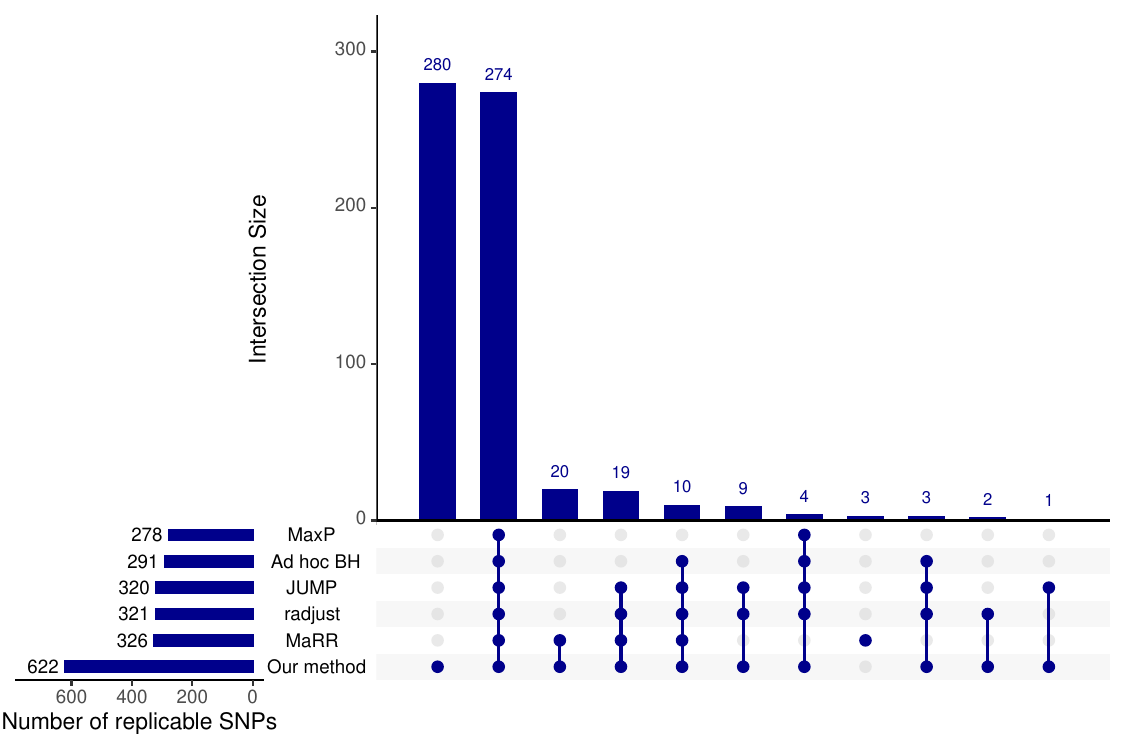}}
	\caption{Replicability analysis results of the T2D GWAS data.} \label{fig:analysis_t2d}
\end{figure}

We compare our approach with several 
competing replicability analysis methods. 
The IDR method is excluded because it is not computationally scalable for this dataset and did not complete within $10$ hours. The results at FDR level $0.01$ are summarized in Figure \ref{fig:analysis_t2d}. Among the compared method, MaxP is the most conservative, followed by \textit {ad hoc} BH, JUMP, radjust and MaRR. 
Our method identifies 622 replicable SNPs, including 280 SNPs that are not detected by any of the other methods. This finding is consistent with the simulation results, where the proposed method demonstrated the highest power while maintaining FDR control.

Given the substantially larger sample size in the European ancestry dataset, more significant signals are expected in the European population than in the East Asian population. This 
pattern is reflected in the estimated mixture proportions: 
$$\hat\xi_{00} = 0.8923, \quad \hat\xi_{01} = 0.0984, \quad \hat\xi_{10} = 0.0003, \quad \hat\xi_{11} = 0.0089. $$ 
The estimated non-null density functions for the two studies, $\hat f_1$ and $\hat f_2$, are displayed in Figure \ref{fig:t2d_nonnull}. 

To validate the findings, we refer to the NHGRI-EBI GWAS Catalog \citep{sollis2023nhgri}, which curates published SNP-trait associations. 
We declare a locus replicable if at least one identified SNPs maps to a T2D locus reported in the GWAS Catalog. 
If multiple significant SNPs fall within a locus, the SNP with the strongest association is designed as the lead SNP. The mapped gene is defined as the gene overlapping or closest to the lead SNP. 

Among the $280$ SNPs uniquely identified by our method, $7$ are directly reported in the GWAS Catalog. In addition, $233$ SNPs fall within $9$ known T2D loci tagged by SNPs detected by our method (see Table \ref{tab:t2d}). Of the reamaining $40$ SNPs, $37$ lie within three loci (2q23.1, 2q31.1, and 2q36.3) previously reported in the GWAS Catalog and tagged by other SNP markers. 

We furthe annotate the $280$ uniquely identified SNPs using SNPnexus, a web-based functional  annotation tool for human genetic variation \citep{oscanoa2020snpnexus}. 
These SNPs map to $41$ genes, many of which have established links to T2D. For instance,  \textit{GCKR} has been associated with T2D and kidney-related phenotypes \citep{ho2022association}. Variants in \textit{BCL11A} influence fetal hemoglobin levels and affect insulin response and glucagon secretion, thereby modulating T2D risk \citep{xu2013corepressor, cauchi2012european}. 
The \textit{GRB14} gene is known to inhibit insulin signaling and contribute to insulin resistance in T2D and obesity \citep{gondoin2017identification}. These biological findings provide additional support for the validity of the SNPs uniquely identified by our method.

\begin{figure}[!ht]
	\centering{\includegraphics[width=\textwidth]{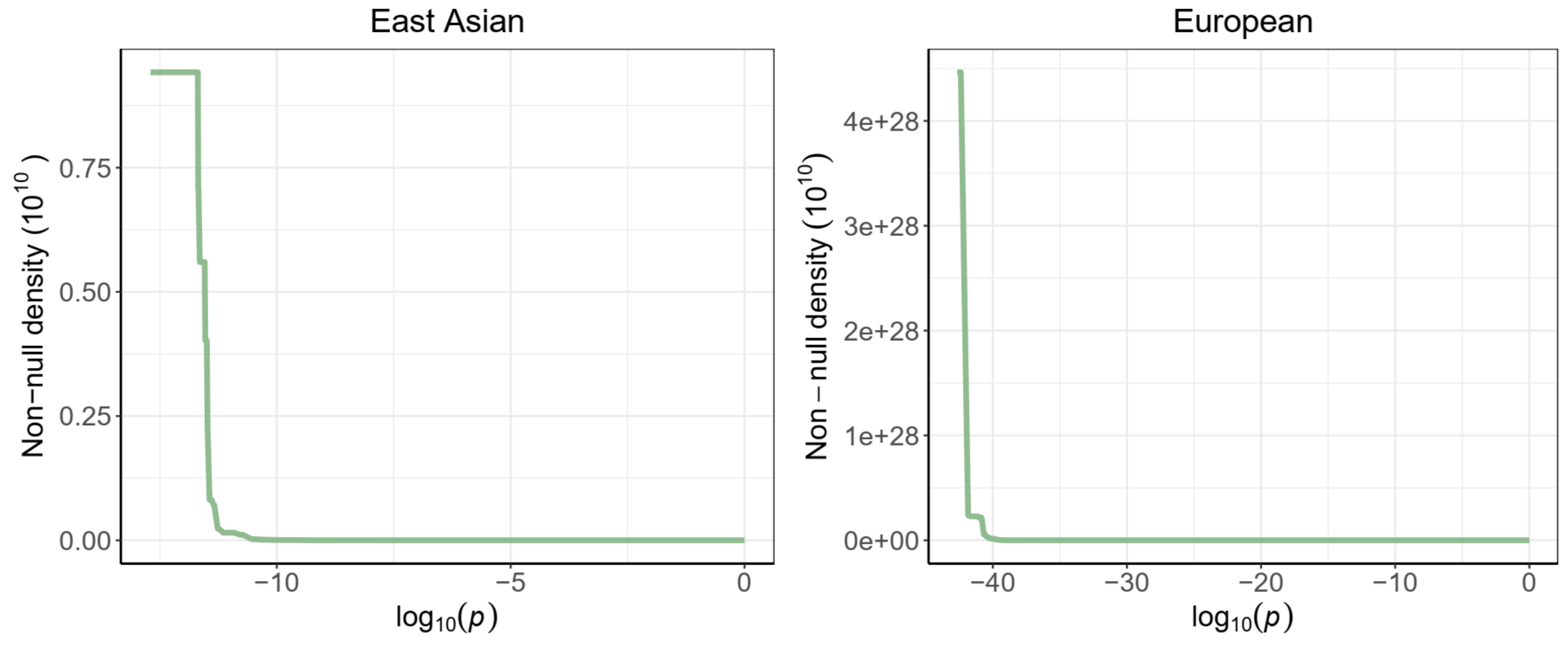}}
	\caption{Estimates of the non-null probability densities in two T2D studies} \label{fig:t2d_nonnull}
\end{figure}

\begin{table}[!ht]
    \centering
    \caption{Main characteristics of the 
    T2D loci tagged by SNPs uniquely identified by the proposed method.}
    \label{tab:t2d}
    \vspace{0.5cm}
    \
    \scalebox{0.9}{
    \renewcommand\arraystretch{1.5}
    \begin{threeparttable}
        \begin{tabular}{ llll }
    \hline\hline
        \textbf{Locus} & \textbf{Lead SNP} & \textbf{Location of lead SNP}\tnote{1} & \textbf{Mapped gene}     \\
        \hline
        2p14     & rs6752053   &  chr:65,439,540   & \textit{LINC02934} \\
        2p16.1   & rs243018    & chr2:60,359,572  & \textit{MIR4432HG}\\
        2p21     & rs12712928  & chr:44,964,941 & \textit{SIX3,KRTCAP2P1}\\
        2p23.3   & rs1260326, rs6547692  & chr:27,508,073, chr:27,512,105 & \textit{GCKR}\\
        2p25.3   & rs35913461  & chr2:653,575  & \textit{TMEM18, LINC01875}         \\
        2q24.2  & rs305686  & chr:162,767,422  &\textit{KCNH7} \\
        2q24.3  & rs3769873  & chr2:164,709,073   & \textit{COBLL1}\\
        2q34   & rs3828242  & chr2:211,410,212  & \textit{ERBB4}\\
        2q37.1  & rs838720  & chr2:233,394,635  & \textit{DGKD}\\
    \hline\hline
    \end{tabular}

\begin{tablenotes}
    \footnotesize\item[1] Locations are based on Genome Assembly GRCh38/hg38.
\end{tablenotes}
\end{threeparttable}}
\end{table}

\section{Concluding remarks}\label{sec:discussion}
In this paper, we proposed an empirical Bayes framework for robust and powerful inference in high-dimensional multi-study replicability analysis.
The method explicitly accommodates study heterogeneity by jointly modeling latent configurations across studies while allowing study-specific non-null densities. 
To enhance robustness against model misspecification, we incorporated a nonparametric density estimation procedure under a monotonicity constraint. 
By systematically enumerating the components of the composite null hypothesis, the proposed approach effectively borrows information across both genetic variants and studies, leading to improved power while maintaining valid false discovery control. 

We established that the procedure asymptotically controls the FDR. In addition, we derived the optimal minimax convergence rate for the estimated density function and showed that the Lfdr based rejection rule achieves optimal power among procedures operating at the same nominal FDR level. Extensive simulation studies demonstrate that our method outperforms existing alternatives. Applications to multi-cohort GWAS data further reveal biologically meaningful associations that competing methods may fail to detect.

Our theoretical development assumes conditional independence of $p$-values given the joint latent states. Although this assumption can be relaxed to allow weak dependence, as supported by our simulation results, the corresponding theoretical analysis becomes substantially more involved and is left for future investigation. Addressing stronger dependence structures, such as those arising from hidden Markov models, would require additional methodological advances.

The proposed inference framework is built on the use of $p$-values to assess replicability across studies. Despite ongoing debates regarding their limiations, $p$-values remain widely used due to their versatility and wide availability.
In many large-scale collaborative projects, privacy concernss or data-sharing restrictions limit access to individual-level data, whereas summary-level $p$-values are easily accessible.
Moreover, $p$-values can be obtained through various approaches, including asymptotic approximations, permutation procedures, and exact finite-sample methods.
Consequently, our framework is platform-independent and broadly applicable whenever summary-level $p$-values are accessible.

Although the empirical Bayes procedure extends naturally to more than two studies, the number of mixture components increases exponentially with the number of studies, posing computational challenges when $n$ is large.
The extensions described in Section \ref{sec:extend} mitigate this issue by introducing pairwise-based estimation and aggregation strategies, thereby making large-scale multi-study replicability analysis computationally feasible.

\section*{Acknowledgment}
The research of Hongyuan Cao is partially supported by NSF DMS 2311249. We thank Yingying Wei for helpful discussions.
\newpage

\bibliographystyle{abbrvnat}
\bibliography{ref}

\newpage
\appendix

\appendix

\newpage
\section{Proofs} \label{sec:proof}
\numberwithin{equation}{section}
\numberwithin{proposition}{section}
\numberwithin{lemma}{section}
\numberwithin{theorem}{section}
\setcounter{equation}{0}
\setcounter{proposition}{0}
\setcounter{lemma}{0}
\setcounter{theorem}{0}

\subsection{EM algorithm for MLE}
\label{sup:EM}
Let ${p}_1=\{p_{i1}\}_{i=1}^m$ and ${p}_2=\{p_{i2}\}_{i=1}^m$ be the observed $p$-values, and ${\theta}_1=\{\theta_{i1}\}_{i=1}^m$ and ${\theta}_2=\{\theta_{i2}\}_{i=1}^m$ be 
corresponding hidden states. The joint log-likelihood function of (${p}_1,{p}_2,{\theta}_1,{\theta}_2$) can be written as
$$
\begin{aligned}
l({p}_1,{p}_2,{\theta}_1,{\theta}_2)&=\frac{1}{m}\sum\limits_{i=1}^m\{(1-\theta_{i1})(1-\theta_{i2})\log\xi_{00}+\theta_{i1}(1-\theta_{i2})\log\xi_{10}+(1-\theta_{i1})\theta_{i2}\log\xi_{01}\\&\quad+\theta_{i1}\theta_{i2}\log\xi_{11}+\theta_{i1}\log f_1(p_{i1})+\theta_{i2}\log f_2(p_{i2})\}.
\end{aligned}
$$
We iteratively implement the following two steps.

\textbf{E-step:} Given current estimates of 
$w^{(k)}=(\xi_{00}^{(k)},\xi_{10}^{(k)},\xi_{01}^{(k)},\xi_{11}^{(k)},f_1^{(k)},f_2^{(k)})$, obtain the 
conditional expectation of the log-likelihood function as 
$$
\begin{aligned}
Q(w|w^{(k)})&=E_{{\theta}_1,{\theta}_2 \mid w^{(k)}}\{l({p}_1,{p}_2,{\theta}_1,{\theta}_2)\}\\&
=\frac{1}{m}\sum\limits_{i=1}^m\{\gamma_{i,00}^{(k)}\log\xi_{00}+\gamma_{i,10}^{(k)}\log\xi_{10}+\gamma_{i,01}^{(k)}\log\xi_{01}+\gamma_{i,11}^{(k)}\log\xi_{11}\\&\quad+(\gamma_{i,10}^{(k)}+\gamma_{i,11}^{(k)})\log f_1(p_{i1})+(\gamma_{i,01}^{(k)}+\gamma_{i,11}^{(k)})\log f_2(p_{i2})\},
\end{aligned}
$$
where 
the posterior probabilities of hidden states are calculated as follows.  
$$
\begin{aligned}
    \gamma_{i,00}^{(k)}&=\frac{\xi_{00}^{(k)}}{\xi_{00}^{(k)}+\xi_{10}^{(k)}f_1^{(k)}(p_{i1})+\xi_{01}^{(k)}f_2^{(k)}(p_{i2})+\xi_{11}^{(k)}f_1^{(k)}(p_{i1})f_2^{(k)}(p_{i2})},
    \\
    \gamma_{i,10}^{(k)}&=\frac{\xi_{10}^{(k)}f_1^{(k)}(p_{i1})}{\xi_{00}^{(k)}+\xi_{10}^{(k)}f_1^{(k)}(p_{i1})+\xi_{01}^{(k)}f_2^{(k)}(p_{i2})+\xi_{11}^{(k)}f_1^{(k)}(p_{i1})f_2^{(k)}(p_{i2})},
    \\
    \gamma_{i,01}^{(k)}&=\frac{\xi_{01}^{(k)}f_2^{(k)}(p_{i2})}{\xi_{00}^{(k)}+\xi_{10}^{(k)}f_1^{(k)}(p_{i1})+\xi_{01}^{(k)}f_2^{(k)}(p_{i2})+\xi_{11}^{(k)}f_1^{(k)}(p_{i1})f_2^{(k)}(p_{i2})},
    \\
    \gamma_{i,11}^{(k)}&=\frac{\xi_{11}^{(k)}f_1^{(k)}(p_{i1})f_2^{(k)}(p_{i2})}{\xi_{00}^{(k)}+\xi_{10}^{(k)}f_1^{(k)}(p_{i1})+\xi_{01}^{(k)}f_2^{(k)}(p_{i2})+\xi_{11}^{(k)}f_1^{(k)}(p_{i1})f_2^{(k)}(p_{i2})}.
\end{aligned}
$$

\textbf{M-step:} Update $w^{(k+1)}$ by
$$
w^{(k+1)}=\operatornamewithlimits{argmax}_{w\in \Omega} Q(w|w^{(k)}).
$$
subject to the constraint that $\xi_{00}+\xi_{01}+\xi_{10}+\xi_{11}=1$ and $f_1$ and $f_2$ are non-increasing density functions. 
We have 
$$
\xi_{00}^{(k+1)}=\frac{1}{m}\sum\limits_{i=1}^m \gamma_{i,00}^{(k)},\quad\xi_{10}^{(k+1)}=\frac{1}{m}\sum\limits_{i=1}^m \gamma_{i,10}^{(k)},\quad \xi_{01}^{(k+1)}=\frac{1}{m}\sum\limits_{i=1}^m \gamma_{i,01}^{(k)},\quad\xi_{11}^{(k+1)}=\frac{1}{m}\sum\limits_{i=1}^m \gamma_{i,11}^{(k)},
$$
and
\begin{align}
    f_1^{(k+1)}&=\operatornamewithlimits{argmax}_{f_1 \in \mathbb{H}}\sum\limits_{i=1}^m (\gamma_{i,10}^{(k)}+\gamma_{i,11}^{(k)})\log f_1(p_{i1}), \label{pava1}%
    \quad 
    \\
    f_2^{(k+1)}&=\operatornamewithlimits{argmax}_{f_2 \in \mathbb{H}}\sum\limits_{i=1}^m (\gamma_{i,01}^{(k)}+\gamma_{i,11}^{(k)})\log f_2(p_{i2}),\label{pava2}
\end{align}
where $\mathbb{H}$ is the space of a non-increasing density function. 
We repeat the above \textbf{E-step} and \textbf{M-step} until the algorithm converges. 

Next, we provide specific steps to solve (\ref{pava1}). 
Let $ 0=p_{(0),1}\le\cdots\le p_{(i),1}\le\cdots\le p_{(m),1}\le1=p_{(m+1),1}$ be the order statistics of ${p_{1}}.$ Denote $\eta_{i1}=\gamma_{i,10}^{(k)}+\gamma_{i,11}^{(k)}, i=1,\dots,m$ and let $\eta_{(i),1}$ 
corresponds to $p_{(i),1}$.
Denote $y_i =f_1(p_{(i),1})$, ${y}=(y_1,\cdots,y_m)$ and $\mathcal Q=\{(y_1,\cdots,y_m):y_1\ge y_2\ge \cdots\ge y_m\}.$ The optimization problem (\ref{pava1}) can be written as 
$$
\hat {{y}}=\operatornamewithlimits{argmax}_{{y}\in \mathcal Q} \sum\limits_{i=1}^m \eta_{(i),1}\log y_{{i}}, \text{  subject to } \sum\limits_{i=1}^m \{p_{(i),1}-p_{(i-1),1}\}y_i=1.
$$

We first ignore the monotonic constraint $\mathcal Q$. By applying the Lagrangian multiplier $\nu_1$, the objective function to maximize is
$$
L( y,\nu_1)=\sum\limits_{i=1}^m\eta_{(i),1}\log y_{i}+\nu_1 \left[\sum\limits_{i=1}^m\{p_{(i),1}-p_{(i-1),1}\}y_i-1\right].
$$

The maximizer of $\nu_1$ and $y_i$
satisfy
\begin{align*}
    \frac{\partial L( y,\nu_1)}{\partial y_i}&=\frac{\eta_{(i),1}}{y_i}+\nu_1\{p_{(i),1}-p_{(i-1),1}\}=0,\\
    \frac{\partial L( y,\nu_1)}{\partial \nu_1}&=\sum\limits_{i=1}^m\{p_{(i),1}-p_{(i-1),1}\}y_i-1=0.
\end{align*}
We have
\begin{align*}
    &\tilde\nu_1 = - \sum_{i=1}^m \eta_{(i),1} 
    \\ 
    &\tilde y_i = \frac{\eta_{(i),1}}{\{\sum_{i=1}^m \eta_{(i),1}\}\{p_{(i),1}-p_{(i-1),1}\}}, \quad i=1,\dots,m.
\end{align*}

To incorporate the monotonic constraint $\mathcal Q$, we have
$$
{\hat{y}}=\operatornamewithlimits{argmin}_{{y}\in \mathcal Q}\left\{-L({y},\tilde\nu_1)\right\}=\operatornamewithlimits{argmin}_{{y}\in \mathcal Q}\sum_{i=1}^m\eta_{(i),1}\left[-\log y_i - \frac{-\{\sum_{i=1}^m \eta_{(i),1}\}\{p_{(i),1}-p_{(i-1),1}\}}{\eta_{(i),1}}y_i\right].
$$

Let ${u} = (u_1,\dots,u_m)$ and 
$$
\hat{{u}}=\operatornamewithlimits{argmin}_{{u}\in \mathcal Q}\sum_{i=1}^m\eta_{(i),1}\left[u_i-\frac{-\{\sum_{i=1}^m \eta_{(i),1}\}\{p_{(i),1}-p_{(i-1),1}\}}{\eta_{(i),1}}\right]^2.
$$
The solution takes the max-min form
$$
\hat u_i = \max_{b\ge i}\min_{a\le i} \frac{-\{\sum_{i=1}^m \eta_{(i),1}\}\sum_{k=a}^b\{p_{(k),1}-p_{(k-1),1}\}}{\sum_{k=a}^b\eta_{(k),1}}.
$$
According to Theorem 3.1 of \cite{barlow1972isotonic}, the final estimates of (\ref{pava1}) are given by $\hat f_1(p_{(i),1}) = \hat y_i=-\frac{1}{\hat u_i}$. This can be implemented by PAVA \citep{JSSv032i05}.

The estimation of $f_2$ can be obtained by solving the optimization problem (\ref{pava2}) in the same way, and we omit the details. 

\subsection{Proof of Proposition \ref{identify}}
\label{sec:id_proof}
\begin{proof}
Recall that $w=(\xi_{00},\xi_{10},\xi_{01},\xi_{11},f_1,f_2)$ and
$$p_w(x,y)=\xi_{00}+\xi_{10}f_1(x)+\xi_{01}f_2(y)+\xi_{11}f_1(x)f_2(y).$$
Integrating out $y$ and $x$ from $p_w(x,y)$ respectively,
we get
\begin{equation}\label{marginal-x}
g_{w,1}(x)=\int_0^1 p_{w}(x,y) dy=(\xi_{00}+\xi_{01})+(\xi_{10}+\xi_{11})f_1(x),
\end{equation}and 
\begin{equation}\label{marginal-y}
g_{w,2}(y)=\int_0^1 p_{w}(x,y) dx=(\xi_{00}+\xi_{10})+(\xi_{01}+\xi_{11})f_2(y).
\end{equation}
By (\ref{marginal-x}) and (\ref{marginal-y}), we have
$$
\begin{aligned}p_w(x,y)&=\xi_{00}+\frac{\xi_{10}}{\xi_{10}+\xi_{11}}\{g_{w,1}(x)-(\xi_{00}+\xi_{01})\}+\frac{\xi_{01}}{\xi_{01}+\xi_{11}}\{g_{w,2}(y)-(\xi_{00}+\xi_{10})\}\\&\quad+\frac{\xi_{11}}{(\xi_{10}+\xi_{11})(\xi_{01}+\xi_{11})}\{g_{w,1}(x)-(\xi_{00}+\xi_{01})\}\{g_{w,2}(y)-(\xi_{00}+\xi_{10})\}\\&=\frac{\xi_{00}\xi_{11}-\xi_{10}\xi_{01}}{(\xi_{10}+\xi_{11})(\xi_{01}+\xi_{11})}\{1-g_{w,1}(x)\}\{1-g_{w,2}(y)\}+g_{w,1}(x)g_{w,2}(y).
\end{aligned}
$$

Note $p_w(x,y)=p_{w^*}(x,y)$ almost everywhere, where
$$w=(\xi_{00},\xi_{01},\xi_{10},\xi_{11},f_1,f_2),\quad\text{and}\quad w^*=(\xi_{00}^*,\xi_{01}^*,\xi_{10}^*,\xi_{11}^*, f_1^*,f_2^*).$$ The marginal distributions of $p_{w}$ and $p_{w^*}$ should be the same almost everywhere. Specifically, $$g_{w,1}(x)=g_{w^*,1}(x), \quad \mbox{and} \quad g_{w,2}(y)=g_{w^*,2}(y)\quad\text{a.e.}$$ 
Consequently $(\xi_{00},\xi_{10},\xi_{01},\xi_{11})$ and $(\xi_{00}^*,\xi_{10}^*,\xi_{01}^*,\xi_{11}^*)$ have to satisfy the following conditions. 
\begin{equation}
\xi_{00}+\xi_{10}+\xi_{01}+\xi_{11}=\xi_{00}^*+\xi_{10}^*+\xi_{01}^*+\xi_{11}^*=1, \label{id_4}
\end{equation}
and
\begin{equation}
    \frac{\xi_{00}\xi_{11}-\xi_{10}\xi_{01}}{(\xi_{10}+\xi_{11})(\xi_{01}+\xi_{11})}=\frac{\xi^*_{00}\xi^*_{11}-\xi^*_{10}\xi^*_{01}}{(\xi^*_{10}+\xi^*_{11})(\xi^*_{01}+\xi^*_{11})} \label{id_3}.
\end{equation}

Under assumption (C1) that $$\lim\limits_{x\to 1}f_1(x)=\lim\limits_{x\to 1}f_1^*(x)=\lim\limits_{x\to 1}f_2(x)=\lim\limits_{x\to 1}f_2^*(x)=0,$$
by (\ref{marginal-x}) and (\ref{marginal-y}), when $x\to 1$ we have 
    \begin{align}
        \xi_{00}+\xi_{10}&=\xi^*_{00}+\xi^*_{10}\label{id_1}=\lim\limits_{x\to 1}g_{w,1}(x),\quad\text{and}\\
        \xi_{00}+\xi_{01}&=\xi^*_{00}+ \xi^*_{01}\label{id_2}=\lim\limits_{x\to 1}g_{w,2}(x).
    \end{align}
    By (\ref{id_1}),(\ref{id_2}),(\ref{id_4}), we get
    \begin{equation}
    \xi_{01}+\xi_{11}=\xi^*_{01}+\xi^*_{11}, \quad \xi_{10}+\xi_{11}= \xi^*_{10}+\xi^*_{11}.\label{id_5}
    \end{equation}
    Plugging into (\ref{id_3}), we have
    $$ \xi_{00}\xi_{11}-\xi_{10}\xi_{01}= \xi^*_{00}\xi^*_{11}-\xi^*_{10} \xi^*_{01}.$$
    Note that 
    $$\xi_{00}\xi_{11}-\xi_{10}\xi_{01}=\xi_{00}\xi_{11}+\xi_{00}\xi_{01}-\xi_{00}\xi_{01}-\xi_{10}\xi_{01}=\xi_{00}(\xi_{01}+\xi_{11})-\xi_{01}(\xi_{00}+\xi_{10}).$$
    We have
    \begin{equation}
      \xi_{00}(\xi_{01}+\xi_{11})-\xi_{01}(\xi_{00}+\xi_{10})=\xi^*_{00}(\xi^*_{01}+\xi^*_{11})-\xi^*_{01}( \xi^*_{00}+\xi^*_{10}).\label{id_6}  
    \end{equation}
    By (\ref{id_1}) and (\ref{id_5}), (\ref{id_6}) is equivalent to 
    \begin{equation*}
      \xi_{00}(\xi_{01}+\xi_{11})-\xi_{01}(\xi_{00}+\xi_{10})=\xi^*_{00}(\xi_{01}+\xi_{11})-\xi^*_{01}( \xi_{00}+\xi_{10}).
    \end{equation*}
    So that
    \begin{equation}
        (\xi_{00}-\xi^*_{00})(\xi_{01}+\xi_{11})-(\xi_{01}-\xi^*_{01})(\xi_{00}+\xi_{10})=0.\label{id_7}
    \end{equation}
    By (\ref{id_2}), we have $\xi_{00}-\xi^*_{00}=-(\xi_{01}-\xi^*_{01})$. Plugging into (\ref{id_7}), we have 
    \begin{equation}
    \label{id_8}
    (\xi_{01}-\xi_{01}^*)(\xi_{00}+\xi_{10}+\xi_{01}+\xi_{11})=\xi_{01}-\xi^*_{01}=0.
    \end{equation}
   Plugging (\ref{id_8}) into (\ref{id_1}), (\ref{id_2}), and (\ref{id_4}), we have
    $$ \xi_{00}=\xi^*_{00},\xi_{10}=\xi^*_{10},\xi_{01}=\xi^*_{01},\xi_{11}=\xi^*_{11}.$$
    By (\ref{marginal-x}) and (\ref{marginal-y}), we obtain 
    $$f_1(x)=f_1^*(x),\quad f_2(y)=f_2^*(y) \quad a.e.$$
\end{proof}

\subsection{Proof of Proposition \ref{bracketing_2}}
\label{sec:bracket_proof_s1}

In this proof, the assumption $F\in L_{2+\epsilon'}$ for some $\epsilon'>0$ in (C2) can be replaced by a weaker condition $F\in L_{2,2/3}$, where the Lorentz norm $\|F\|_{a_1,a_2}$ is defined by
    \begin{equation}
    \label{eq:Lorentz}
    \|F(x)\|_{a_1,a_2}=\left(\int_0^{\infty}\left(\kappa \mu\left\{x:|f(x)|^{a_1} \geq \kappa\right\}\right)^{\frac{a_2}{a_1}} \frac{d \kappa}{\kappa}\right)^{\frac{1}{a_2}},
    \end{equation}
where $\mu$ is the Lebesgue measure.
The proposed functional space $L_{2,2/3}$ is a refinement of the classical $L_2$ space, providing finer control of tail behavior. Basic properties of Lorentz spaces can be found in Chapter 4.4 of \cite{bennett1988interpolation}. Specifically, as the support of $F$ is the bounded interval $(0,1)$, we have $L_{2+\epsilon'}\subset L_{2,2/3}\subset L_{2,2}=L_{2}$ for $\epsilon'>0$.

Under conditions (C1) and (C2), we rewrite $\mathcal F$ as follows.
$$\mathcal F=\{p_w=\xi_{00}+\xi_{10}f_1(x)+\xi_{01}f_2(y)+\xi_{11}f_1(x)f_2(y):w=(\xi_{00},\xi_{10},\xi_{01},\xi_{11},f_1(x),f_2(y))\in \Omega\},$$
where
$$\Omega=\Delta\times \tilde {\mathcal G}\times \tilde {\mathcal G},$$
where, for a small constant $c$, and density functional space $\mathcal G$ defined in (\ref{G1}),
$$\begin{aligned}
    \Delta&=\{(\xi_{00},\xi_{10},\xi_{01},\xi_{11})\in R^4_+:\xi_{00}+\xi_{10}+\xi_{01}+\xi_{11}=1,\xi_{00}\ge c \},\\
    \tilde {\mathcal G}&=\{f: f\in \mathcal G,f\le F,\lim\limits_{x\to1} f(x)=0\},\\
\end{aligned}$$ 
Without loss of generality, we further assume that $F(x)$ is non-increasing and large enough such that $F(x)>3,\forall x\in (0,1)$, and $c$ is small enough such that $c<1/5$.

We first show that the bracketing number of $\mathcal {\bar{F}}^{1/2}$ can be bounded by the bracketing number of $\mathcal{F}$, summarized in  
Proposition \ref{bracketing_1}. 
Then we calculate the upper bound of $N_{[]}(\epsilon,\mathcal F,\|\cdot\|_2)$, and the lower bound of $N_{[]}(\epsilon,\mathcal{\bar F}^{1/2},\|\cdot\|_2),$ respectively.
 
\begin{proposition}
\label{bracketing_1}
Under (C1) and (C2), 
we have 
$$N_{[]}(\epsilon,\mathcal{\bar F}^{1/2},\|\cdot\|_2)\le N_{[]}(2\sqrt{2c}\epsilon,\mathcal F,\|\cdot\|_2).$$
\end{proposition}

\subsubsection{Proof of Proposition \ref{bracketing_1}} 
Let $[p^L_1,p^U_1],\cdots,[p^L_K,p^U_K]$ be the smallest $\epsilon$-brackets of $\mathcal F.$ 
Without loss of generality, we further assume that $p^U_k,p^L_k\ge 0, k =1, \ldots, K,$ because every function in the functional space $\mathcal{F}$ is bounded away from $c>0$. We have
$$\left|\sqrt{\frac{p^U_k+p_{w_0}}{2}}-\sqrt{\frac{p^L_k+p_{w_0}}{2}}\right|
=\frac{1}{\sqrt 2}\frac{|p_k^U-p_k^L|}{\sqrt{p^U_k+p_{w_0}}+\sqrt{p^L_k+p_{w_0}}}
\le \frac{1}{2\sqrt{2c}} |p^U_k-p^L_k|,$$
where we used $p_{w_0} \ge c$ specified in (C2) in the last inequality. 

Then 
$$\left\|\sqrt{\frac{p^U_k+p_{w_0}}{2}}-\sqrt{\frac{p^L_k+p_{w_0}}{2}}\right \|_2\le \frac{1}{2\sqrt{2c}}\|p^U_k-p^L_k\|_2\le \frac{1}{2\sqrt{2c}}\epsilon.$$
Therefore, 
$$N_{[]}(\frac{1}{2\sqrt{2c}}\epsilon,\mathcal{\bar F}^{1/2},\|\cdot\|_2)\le N_{[]}(\epsilon,\mathcal F,\|\cdot\|_2).$$

\subsubsection{Calculating the upper bound}
\underline{\textbf{Step 1:}} Calculate the bracketing entropy of one-dimensional non-increasing density functions supported on $(0,1)$ with envelop function $F\in L^{2,2/3}$.

\begin{lemma}
\label{bracketing}
    Let $\mathcal M$ be the space of the one-dimensional non-increasing density function supported on $(0,1)$ with envelop function $F\in L_{2,2/3}$. Then there exists a constant $C_F>0$ such that
    $$\log N_{[]}(\delta, \mathcal M,\|\cdot\|_2)\le C_F\frac{1}{\delta}.$$
\end{lemma}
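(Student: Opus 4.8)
The plan is to deduce the bound from the classical bracketing estimate for \emph{bounded} monotone functions and to absorb the possible blow-up of the envelope $F$ near the origin by a multiscale decomposition of the domain. The basic building block is the classical Birman--Solomjak bracketing estimate \citep{wellner1996weak}: the class of non-increasing functions $g:[0,L]\to[0,M]$ satisfies $\log N_{[]}(\eta,\cdot,\|\cdot\|_2)\le C M\sqrt{L}/\eta$. This follows from the canonical case $[0,1]\to[0,1]$, where the bound is $C/\eta$, by the rescaling $x\mapsto x/L$, $g\mapsto g/M$ together with the identity $\|g\|_{L_2(0,L)}=M\sqrt{L}\,\|g(L\,\cdot)/M\|_{L_2(0,1)}$. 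I would first record this scaled estimate, since every $f\in\mathcal M$ is non-increasing and so fits this template on any subinterval bounded away from $0$.

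Next I would partition $(0,1]$ into intervals $J_k=(t_{k+1},t_k]$, $k\ge 0$, with $t_0=1$ and $t_k\downarrow 0$ chosen adaptively to the level sets of $F$ (for instance so that $F$ at most doubles across each $J_k$). Because $F$ is non-increasing and $f\le F$, every $f\in\mathcal M$ is bounded by $M_k:=F(t_{k+1})$ on $J_k$, so the scaled estimate gives $\log N_{[]}(\eta_k,\mathcal M|_{J_k},\|\cdot\|_2)\le C M_k\sqrt{\ell_k}/\eta_k$ with $\ell_k=t_k-t_{k+1}$. Gluing one bracket from each piece produces a bracket for $\mathcal M$ on all of $(0,1)$ whose squared $L_2$-size is $\sum_k\eta_k^2$; I would therefore impose the budget $\sum_k\eta_k^2\le\delta^2$ and, after the Lagrange-optimal choice $\eta_k\propto (M_k\sqrt{\ell_k})^{1/3}$, obtain a total bracketing entropy of order $\delta^{-1}\big(\sum_k (M_k\sqrt{\ell_k})^{2/3}\big)^{3/2}$.

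The heart of the proof, and the step I expect to be the main obstacle, is to show that the multiplicative constant $\big(\sum_k (M_k\sqrt{\ell_k})^{2/3}\big)^{3/2}$ is finite and controlled by $\|F\|_2$ from (C2). This is exactly where the hypothesis $F\in L^2$ must be used, and where the choice of the partition $\{t_k\}$ is delicate: comparing $\int_{J_k}F^2$ with $\ell_k M_k^2$ via the monotonicity of $F$ relates the summand $(M_k\sqrt{\ell_k})^{2/3}$ to $\big(\int_{J_k}F^2\big)^{1/3}$, so one must choose the $t_k$ to simultaneously control the oscillation of $F$ on each $J_k$ and the decay of the local masses $\int_{J_k}F^2$. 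A single truncation at one point $a$ is visibly insufficient here: forcing the discarded tail $\|F\mathbf{1}_{(0,a)}\|_2\le\delta$ drives $a\to 0$, and with it the pointwise bound $F(a)$ and the entropy $F(a)/\delta$, to infinity faster than $\delta^{-1}$. Once the summability is secured and the constant is bounded in terms of $\|F\|_2$, collecting the per-scale estimates yields $\log N_{[]}(\delta,\mathcal M,\|\cdot\|_2)\le C/\delta$, as claimed.
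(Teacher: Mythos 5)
Your multiscale route is genuinely different from the paper's proof, which performs exactly the one-point truncation you dismiss: the paper picks $s=s(\delta)$ with $\int_0^s F^2(t)\,dt=\delta^2/2$, brackets $(0,s)$ wholesale by $[0,F\mathbf{1}_{(0,s)}]$, and invokes a bound $\log N_{[]}(\delta/\sqrt2,\mathcal M',\|\cdot\|_2)\le C/\delta$ for the truncated class $\mathcal M'=\{f\mathbf{1}_{[s,1]}\}$, which is uniformly bounded by $F(s)$. Your objection to this step is correct: for non-increasing functions with values in $[0,M]$ the classical bracketing bound in $L_2(\mathrm{Leb})$ carries the factor $M$ (your scaled Birman--Solomjak estimate $CM\sqrt{L}/\eta$ is the right statement), so the truncation argument actually yields $CF(s(\delta))/\delta$ with $F(s(\delta))\to\infty$ whenever $F$ is unbounded; the paper's citation suppresses this dependence, and this is a real gap in the paper's own proof, not just in a strawman.

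However, your proposal breaks down at exactly the step you flag as the heart of the argument, and the breakdown is fatal under (C2) alone: the constant $\bigl(\sum_k (M_k\sqrt{\ell_k})^{2/3}\bigr)^{3/2}$ is \emph{not} controlled by $\|F\|_2$, for any choice of partition. Take blocks $J_k=(t_{k+1},t_k]$ on which $F\equiv M_k=2^{k+2}$ with lengths $\ell_k=k^{-2}M_k^{-2}$ (and $F\equiv 4$ on the remaining right-hand interval); then $\int_{J_k}F^2=k^{-2}$ sums, so $F\in L^1\cap L^2$, yet $M_k\sqrt{\ell_k}=k^{-1}$ and $\sum_k (M_k\sqrt{\ell_k})^{2/3}=\sum_k k^{-2/3}=\infty$. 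Moreover this is not an artifact of gluing independent brackets: placing in each of the first $N\asymp\delta^{-2}$ blocks a two-point alternative with values in $[M_{k-1},M_k]$ at per-block $L_2$-separation $\eta=\delta/\sqrt N\le (2k)^{-1}$, and selecting a Gilbert--Varshamov code of size $e^{cN}$ with Hamming distance $\ge N/8$, produces (after a negligible adjustment of the constant value on the right interval to normalize each function) $e^{c\delta^{-2}}$ non-increasing densities below $F$ that are pairwise $\gtrsim\delta$-separated in $L_2$, whence $\log N_{[]}(\delta,\mathcal M,\|\cdot\|_2)\gtrsim \delta^{-2}$ for this envelope. So the lemma as stated cannot be derived from $F\in L^2$ alone, and "securing the summability" is impossible rather than merely delicate. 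What your scheme does buy: under a modest strengthening such as $F(x)\le Cx^{-\alpha}$ with $\alpha<1/2$, or more generally the Dini-type condition $\sum_k \{2^{-k}F(2^{-k-1})^2\}^{1/3}<\infty$, the terms $(M_k\sqrt{\ell_k})^{2/3}$ decay geometrically on dyadic blocks and your optimization closes the proof with the claimed $C/\delta$ --- so your diagnosis correctly locates the obstruction that the paper's proof papers over, but your proposal, as written, stops precisely at the point where the statement itself needs a stronger hypothesis.
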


\begin{proof}
    We partition the space $(0,1)=\bigcup\limits_{i=1}^\infty I_i$, where $I_i=(x_{i},x_{i-1}]$ with $1= x_0\ge\cdots\ge x_n\ge\cdots>0$, and $x_i=\sup_x \{F(x)\ge 2^i\}.$ Therefore, we have 
    \begin{equation}
        \label{eq:Ii}
        I_i=\{x: 2^{i}>F(x)\ge 2^{i-1}\}
    \end{equation}

    We further define the functional space 
    $$
    \mathcal M_{w,h}=\{f(x): x\in (0,w), 0\le f(x)\le h, f \text{ is monotone nonincreasing}\}.
    $$
It has been shown by equation (2.5) on page 18 of \cite{van2000empirical} that $$\log N_{[]}(\delta,\mathcal M_{1,1},\|\cdot\|_2)\le \frac{C}{\delta}.$$
    So we have
    \begin{equation}
    \label{eq:mono_entropy}
    \log N_{[]}(\delta,\mathcal M_{w,h},\|\cdot\|_2)\le \frac{Ch w^{1/2}}{\delta}.    
    \end{equation}
    This is because if we take $f_1,f_2\in\mathcal M_{1,1}$ and define $g_1(x)=hf_1(\frac{x}{w})$ and $g_2(x)=hf_2(\frac{x}{w})$, we have $g_1(x),g_2(x)\in \mathcal M_{w,h}$ and
    $$
    \int_0^w|g_1(x)-g_2(x)|^2dx=h^2\int_0^w|f_1(\frac{x}{w})-f_2(\frac{x}{w})|^2dx=h^2w\int_0^1|f_1(x)-f_2(x)|^2dx.
    $$
    Therefore, we have $\log N_{[]}(\delta, \mathcal M_{w,h},\|\cdot\|)=\log N_{[]}(\frac{\delta}{hw^{1/2}}, \mathcal M_{1,1},\|\cdot\|)\le \frac{Chw^{1/2}}{\delta}$.

    Define $c_i=\int_{I_i}F^2(x)dx$. We find that 
    \begin{align}
        c_i&=\int_{I_i}F^2(x)dx\ge |I_i|F^2(x_{i-1})\ge |I_i|(2^{i-1})^2,\quad \text{and}  \nonumber \\
        c_i&=\int_{I_i}F^2(x)dx\le |I_i|F^2(x_{i})\le |I_i|4^i,  \label{eq:c_iI_i}
    \end{align}
     where $|I_i|$ is the length of the interval $I_i$.

    Within each interval $I_i$, we construct $\delta_i$ brackets $[f^L_{i,k}(x), f_{i,k}^U(x)], k=1,\ldots, N_i$, where $\delta_i=\frac{c_i^{1/6}}{\sqrt{\sum\limits_{i=1}^\infty c_i^{1/3}}}\delta$, and $N_i$ is $\delta_i$ bracketing number of monotone nonincreasing function supported by $|I_i|$ with envelop function $F(x)$. As the envelop function $F(x)\le 2^i$ when $x\in I_i$, based on (\ref{eq:mono_entropy}), we have $$\log N_i\le\log N_{[]}(\delta_i,\mathcal M_{|I_i|,2^i},\|\cdot\|_2)\le \frac{C2^i|I_i|^{1/2}}{\delta_i}.$$  
    Combining the local brackets constructed on each $I_i,$ we obtain the bracketing of $\mathcal M$ as follows. 
    $$
    B :=\left\{\left[f^L, f^U\right]: f^L=\sum_{i=1}^\infty f_{i, k_i}^L 1 _{I_i}, f^U=\sum_{i=1}^\infty f_{i, k_i}^U 1 _{I_i}, k_i \in\left\{1, \ldots, N_i\right\}\right\} .
    $$
    It is easy to verify that $\sum\limits_{i=1}^\infty \delta_i^2=\delta^2.$ Therefore $B$ forms a $\delta$-bracketing of $\mathcal M$ in $L_2$.
    
    So, the $\delta$ bracketing entropy of $\mathcal M$ is bounded by
    
    \begin{align}
    \log N_{[]}(\delta, \mathcal M,\|\cdot\|_2)\le & \sum\limits_{i=1}^\infty \log N_i\nonumber
    \\\le &\sum\limits_{i=1}^\infty \log N_{[]}({\delta_i,\mathcal M_{|I_i|, 2^{i}},\|\cdot\|_2})\nonumber
    \\\le &C\sum\limits_{i=1}^\infty \frac{2^{i}|I_i|^{1/2}}{\delta_i}\nonumber\\\le& C\sum\limits_{i=1}^\infty \frac{2\sqrt{c_i}}{\delta_i}\nonumber
    \\\le &2C(\sum\limits_{i=1}^\infty c_i^{1/3})^{3/2}\frac{1}{\delta},\label{eq: total_bracket}
    \end{align}
   where the second to the last inequality follows (\ref{eq:c_iI_i}), and the last inequality is obtained by substituting the value of $\delta_i$.

    Now we provide an upper bound of $\sum\limits_{i=1}^\infty c_i^{1/3}$. Based on the definition of the Lorentz norm defined in equation (\ref{eq:Lorentz}), we have
    $$
    \begin{aligned}
        \|F\|_{2,{2/3}}^{2/3}&=
    \int_0^{\infty}\left(\kappa \mu\left\{x:|F(x)|^2 \geq \kappa\right\}\right)^{\frac{1}{3}} \frac{d \kappa}{\kappa}
    \\&=\int_0^{\infty}\kappa^{-2/3} \mu\left\{x:|F(x)|^2 \geq \kappa\right\}^{\frac{1}{3}} d\kappa
    \\&=\int_0^1\kappa^{-2/3} \mu\left\{x:|F(x)|^2 \geq \kappa\right\}^{\frac{1}{3}} d\kappa+\sum\limits_{i=1}^\infty\int_{4^{i-1}}^{4^i}\kappa^{-2/3} \mu\left\{x:|F(x)|^2 \geq \kappa\right\}^{\frac{1}{3}} d\kappa.
    \end{aligned}
    $$
    Note that for the first term, as $c_1\le 4|I_1|\le 4$ in (\ref{eq:c_iI_i}), and $F(x)\ge1$ we have
    $$
    \int_0^1\kappa^{-2/3} \mu\left\{x:|F(x)|^2 \geq \kappa\right\}^{\frac{1}{3}} d\kappa=\int_0^1\kappa^{-2/3}d\kappa=3\ge \frac{3}{4^{4/3}}c_1^{1/3}.
    $$
    For the remaining terms, we have
    $$
    \begin{aligned}
    &\int_{4^{i-1}}^{4^i}\kappa^{-2/3} \mu\left\{x:|F(x)|^2 \geq \kappa\right\}^{\frac{1}{3}} d\kappa
    \\\ge &
    \int _{4^{i-1}}^{4^i}(4^i)^{-2/3}\mu\left\{x:|F(x)|^2 \geq 4^i\right\}^{\frac{1}{3}}d\kappa
    \\= & (4^i-4^{i-1})\cdot(4^i)^{-2/3}\mu\left\{x:|F(x)|^2 \geq 4^i\right\}^{\frac{1}{3}}
    \\\ge &\frac{3}{4}(4^i)^{1/3}|I_{i+1}|^{1/3}
    \\= &\frac{3}{4^{4/3}}(4^{i+1})^{1/3}|I_{i+1}|^{1/3}
    \\\ge &\frac{3}{4^{4/3}}c_{i+1}^{1/3}.\end{aligned}
    $$
 The second to the last inequality is based on (\ref{eq:Ii}) and $I_{i+1}\subset\{x: |F(x)|^2\ge 4^i\}$. The last inequality is based on (\ref{eq:c_iI_i}) that $c_i\le 4^i|I_{i}|$.
    Therefore, we have 
    \begin{align*}
        \|F\|_{2,2/3}^{2/3}&=\int_0^1\kappa^{-2/3} \mu\left\{x:|F(x)|^2 \geq \kappa\right\}^{\frac{1}{3}} d\kappa+\sum\limits_{i=1}^\infty\int_{4^{i-1}}^{4^i}\kappa^{-2/3} \mu\left\{x:|F(x)|^2 \geq \kappa\right\}^{\frac{1}{3}} d\kappa
        \\&\ge \frac{3}{4^{4/3}} c_1^{1/3} + \frac{3}{4^{4/3}}\sum\limits_{i=1}^\infty c_{i+1}^{1/3}
        \\&\ge \frac{3}{4^{4/3}}\sum\limits_{i=1}^\infty c_{i}^{1/3}.
    \end{align*}
    So we have 
    $$
    (\sum\limits_{i=1}^\infty c_{i}^{1/3})^{3/2}\le \{\frac{4^{4/3}}{3}\|F\|_{2,2/3}^{2/3}\}^{3/2}\le 4\|F\|_{2,2/3}.
    $$
    Plugging into (\ref{eq: total_bracket}), we have 
    $\log N_{[]}(\delta, \mathcal M,\|\cdot\|_2)\le 8C\|F\|_{2,2/3} \cdot \frac{1}{\delta}$.
    \end{proof}

\underline{\textbf{Step 2:}} Bound $\|p_{w^U}-p_{w^L}\|_2.$ 

Write $p_{w^U}(p_1,p_2)=\xi_{00}^U+\xi_{10}^Uf_1(p_1)+\xi_{01}^Uf_2(p_2)+\xi_{11}^Uf_1^U(p_1)f_2^U(p_2)$, and $p_{w^L}(p_1,p_2)=\xi_{00}^L+\xi_{10}^Lf_1(p_1)+\xi_{01}^Lf_2(p_2)+\xi_{11}^Lf_1^L(p_1)f_2^L(p_2)$, where $[\xi_{ij}^U,\xi_{ij}^L],i,j=0,1$ is one of the $\delta$ bracket of $\Delta$; $[f_1^U,f_1^L]$ is one of the $\delta$ bracket of $\mathcal G$; and $[f_2^U,f_2^L]$ is one of the $\delta$ bracket of $\mathcal G_2$, where $\Delta$, $\mathcal G_1$, and $\mathcal G$ are defined in (\ref{par_space}).
So 
we have $$\delta\ge\max\{|\xi_{ij}^U-\xi_{ij}^L|,\|f_1^U-f_1^L\|_2,\|f_2^U-f_2^L\|_2\},i,j=0,1.$$ Using the fact that
$$
\int A(x)B(y)dxdy=\int A(x)dx\int B(y)dy, 
$$
and the assumption $(C2)$,
 
we obtain
$$
\begin{aligned}
\|\xi_{11}^Uf_1^U(p_1)f_2^U(p_2)&-\xi_{11}^Lf_1^L(p_1)f_2^L(p_2)\|_2\\&\le\xi_{11}^U\|f_1^U(p_1)f_2^U(p_2)-f_1^L(p_1)f_2^L(p_2)\|_2+\|f_1^L(p_1)f_2^L(p_2)\|_2\cdot|\xi^U_{11}-\xi^L_{11}|\\&\le \|f_1^U(p_1)f_2^U(p_2)-f_1^L(p_1)f_2^L(p_2)\|_2+\|F\|^2_2\delta\\&\le \|F\|^2_2\delta+ \|f_1^U(p_1)(f_2^U(p_2)-f_2^L(p_2))\|_2+\|f_2^L(p_2)(f_1^U(p_1)-f_1^L(p_1))\|_2\\&\le \|F\|^2_2\delta+\|F\|_2\|f_2^U-f_2^L\|_2+\|F\|_2\|f_1^U-f_1^L\|_2\\&\le (\|F\|_2^2+2\|F\|_2)\delta.
\end{aligned}
$$
In the same way, we can further show that 
$$
\begin{aligned}
|\xi_{00}^U-\xi_{00}^L|&\le \delta,\\
\|\xi_{10}^Uf_1^U-\xi_{10}^Lf_1^L\|_2&\le (\|F\|_2+1)\delta,\\
\|\xi_{01}^Uf_2^U-\xi_{01}^Lf_2^L\|_2&\le (\|F\|_2+1)\delta.
\end{aligned}
$$
Then 
$$
\begin{aligned}
\|p_{w^U}-p_{w^L}\|_2&\le |\xi_{00}^U-\xi_{00}^L|+\|\xi_{10}^Uf_1^U-\xi_{10}^Lf_1^L\|_2+\|\xi_{01}^Uf_2^U-\xi_{01}^Lf_2^L\|_2+\|\xi_{11}^Uf_1^Uf_2^U-\xi_{11}^Lf_1^Lf_2^L\|_2\\&\le (\|F\|^2_2+4\|F\|_2+3)\cdot\delta.
\end{aligned}
$$


\underline{\textbf{Step 3}}: Construct the pairs of functions with bracketing entropy. 
We first construct the pairs in $\Delta, \mathcal G_1,\mathcal G_2$ separately and combine them together. 

Let $\delta>0$, and $K_1=\lceil\frac{1}{\delta}\rceil\le C_5\delta^{-1}$.
For $\xi_{11}$, write $\xi_{11,k}^L=\frac{k-1}{K_1},\xi_{11,k}^U=\frac{k}{K_1}$, $k=1,\cdots,K_1$. Then for any $\xi_{11}\in (0,1)$, there exists $k$ such that $\xi_{11,k}^L\le \xi_{11}\le \xi_{11,k}^U$ and $|\xi_{11,k}^L-\xi_{11,k}^U|\le \delta$. For $\xi_{00},\xi_{10},\xi_{01}$, we do the same thing. 

By Lemma \ref{bracketing}, we can choose the correponding bracketing function $(f_{1,k}^L,f_{1,k}^U)$, $k=1,\cdots,K_2$, where $K_2\le \exp(C_6/\delta)$. Then for any $f_1\in \mathcal G_1$ there exists $k$ such that $f_{1,k}^L\le f_1\le f_{1,k}^U$ with $\|f_{1,k}^L-f_{1,k}^U\|_2\le \delta$. The bracketing number of $\mathcal G_2$ is the same as the bracketing number of $\mathcal G_1$.

Consequently, we obtain the bracketing entropy of $\Delta, \mathcal G$, with
$$\log K_1\le C_5\log \delta^{-1},$$ and 
$$\log K_2\le C_6/\delta.$$

Therefore, for any $$p_{w}(x,y)=\xi_{00}+\xi_{10}f_1(x)+\xi_{01}f_2(y)+\xi_{11}f_1(x)f_2(y),$$ there exist $i_{00},i_{01},i_{10},i_{11},j_1$, and $j_2$ such that 
$$\begin{aligned}
\xi_{00,i_{00}}^L&\le \xi_{00}\le \xi_{00,i_{00}}^U,
\quad \xi_{01,i_{01}}^L\le \xi_{01}\le \xi_{01,i_{01}}^U,\\
\xi_{10,i_{10}}^L&\le \xi_{10}\le \xi_{00,i_{10}}^U,\quad
\xi_{11,i_{11}}^L\le \xi_{11}\le \xi_{11,i_{11}}^U,\\
f_{1,j_1}^L(x)&\le f_1(x)\le f_{1,j_1}^U(x),
\quad \mbox{and}\quad 
f_{2,j_2}^L(y)\le f_2(y)\le f_{2,j_2}^U(y).
\end{aligned}$$

Let $I=\{(i_{00},i_{01},i_{10},i_{01},j_1,j_2)\}$, where $i_{\cdot\cdot}=1,\cdots K_1$, and $j_{\cdot}=1,\cdots,K_2$, then $$\log N_{[]}(\delta,\mathcal F,\|\cdot\|_2)\lesssim \log |I|=\log K_1^4K_2^2\le C_7\delta^{-1}.$$

To sum up, we have
$$\log N_{[]}(\delta,\mathcal{\bar F}^{1/2},\|\cdot\|_2)\lesssim \log N_{[]}(\delta,\mathcal F,\|\cdot\|_2)\lesssim C_7\delta^{-1}.$$

\subsubsection{Calculating the lower bound}
The lower bound calculation is motivated by \cite{blei2007metric} and \cite{gao2007entropy}. These papers focus on the bounded monotone functional space, while we focus on the monotone density functional space. 
Besides, the functional space we work with is $\bar{\mathcal{F}}^{1/2}$ instead of $\mathcal F$. 


To calculate the lower bound of $N_{[]}(\delta,\bar{\mathcal{F}}^{1/2}, \|\cdot\|_2)$, we need to determine the packing number. The $\epsilon$ packing number of a functional space $\mathcal G$ is defined to be the minimal number of disjoint $g_i\in \mathcal G$ such that when $i\ne j$, $d(g_i,g_j)>\epsilon$. 
To generate $\delta$ packing of $\bar{\mathcal F}^{1/2}$, we divide the support of $f\in \bar{\mathcal F}^{1/2}$, $(0,1)$, into three parts. The first part of $f$ is the constant $3$, the second part of $f$ is a step function constructed based on \cite{blei2007metric} and \cite{gao2007entropy}, and the third part of $f$ is another constant so that $f$ is a density function. The three components are calibrated so that 
$f$ is non-increasing. 

Consider the following sequence $$A=\{ a: a=(a_1,a_2,\cdots,a_i,\cdots,a_K)\text{ where } a_i\in\{0,1\},i=1,\cdots,K\},$$
where $K=\lfloor \frac{1}{\sqrt{6912}\delta}\rfloor$, where $\lfloor x \rfloor$ is the biggest interger that is no greater than $x$. Assume that $\delta$ is small enough so that $K$ is large enough. Define
$$g_{ a}(x)=\begin{cases}3& 0\le x<\frac{1}{12}\\\frac{4}{3}\{\frac{K-k+1}{K}+\frac{a_k}{K}\}+\frac{1}{2}& \frac {k-1} {2K}+\frac{1}{12}\le x<\frac{k}{2K}+\frac{1}{12}\\t_{ a}&\frac{7}{12}\le x<1.\end{cases}$$
where $1\le k\le K$, and $t_{{a}}$ is choosen such that $g_{ a}(x)$ is a density function. In other words, $\int_0^1 g_{ a}(x)dx=1$. 
We obtain 
$$
\begin{aligned}
t_{ a}&=(1-\frac{7}{12})^{-1}\left\{1-\frac{3}{12}-\int_\frac{1}{12}^\frac{7}{12}g_{ a}(x)dx\right\}\\&=\frac{12}{5}\{\frac{3}{4}-\frac{1}{4}-\frac{1}{K}\sum\limits_{k=1}^{K}\frac{2}{3}(\frac{K-k+1}{K}+\frac{a_k}{K})\}\\&=\frac{2}{5}-\frac{4}{5K}-\frac{8}{5K^2}\sum\limits_{k=1}^{K}a_k.
\end{aligned}
$$
Note that $g_{ a}(\frac{1}{12}+)=\frac{4}{3}(1+\frac{a_1}{K})+\frac{1}{2}<3=g_{ a}(\frac{1}{12} -)$ and $g_{ a}(\frac{7}{12}+)<\frac{2}{5}<\frac{1}{2}<g_{ a}(\frac{7}{12}-)$, and $g_{ a}(x)$ is nonincreasing in $(\frac{1}{12},\frac{7}{12})$. This implies that $g_{ a}(x)$ is non-increasing in $(0,1)$.

For $ a\ne {a'}$, ${a}, {a'}\in A$, define the Hamming distance
$$d_{h}({a},{a'})=\mbox{card}\{k:a_k\ne a'_{k},k=1,\cdots,K\},$$ where $\mbox{card}$ means cardinality. For example $d_h((0,0,0),(1,1,0))=2$.
For each $ a\in A$, consider the ball $$B({a})=\{{a'}\in A,d_{h}({a},{a'})\le\lfloor K/16\rfloor\}.$$
For $0\le i\le \lfloor K/16\rfloor$, the number of points $a'$ with $d_h(a,a')=i$ is $\begin{pmatrix}
K\\i
\end{pmatrix}.$

Using the fact that 
$$\begin{pmatrix}
K\\i
\end{pmatrix}\le (\frac{Ke}{i})^i \quad\text{and} \quad (16e)^{1/16}<\sqrt{2},$$
we have
$$
\begin{aligned}
|B( a)|&=|\bigcup\limits_{i=0}^{\lfloor K/16\rfloor}\{a':d_h(a,a')=i\}|
\\&\le \sum\limits_{i=0}^{\lfloor K/16\rfloor}\begin{pmatrix}K\\i\end{pmatrix}\\&<\sum\limits_{i=0}^{\lfloor K/16\rfloor}\begin{pmatrix}
K\\\lfloor K/16\rfloor
\end{pmatrix}\\&=(1+\frac{K}{16})\begin{pmatrix}
K\\\lfloor K/16\rfloor
\end{pmatrix}\\&\le (1+\frac{K}{16})(\frac{Ke}{K/16})^{K/16}\\&=(1+\frac{K}{16})\{ (16e)^{1/16}\}^{K}.\end{aligned}$$
So that when $K$ is large enough, we have
$$
|B({a})|\le (1+\frac{K}{16})\{ (16e)^{1/16}\}^{K}=(\sqrt{2})^K[(1+\frac{K}{16})\cdot\{\frac{(16 e)^{1/16}}{\sqrt{2}}\}^K]\le (\sqrt{2})^K=2^{K/2}.
$$
Each ball contains less than $2^{K/2}$ functions. Each of the $|A|=2^K$ functions must be contained in at least one ball so that there are at least $2^{K/2}$ disjoint balls.

Let $A'\subset A$ such that $D=\{B( a), a\in A'\}$ is a set of disjoint balls. Then we have $|D|\ge 2^{K/2}.$
 Randomly choose two $B( a),B( a')\in D$ with $ a\ne  a'$, then $d_{h}( a, a')>\lfloor K/16\rfloor$. Let $f_{ a}(x,y)=g_{ a}(x),f_{ a'}(x,y)=g_{{a'}}(x)$, so that $f_a(x,y)$ has the same value for different $y$ when $x$ is fixed. Notice that for any non-increasing density function $f(x,y)$ supported in $(0,1)^2$, and for any $0<a,b<1,$ we have 
$$1=\int_0^1\int_0^1f(x,y)dxdy\ge\int_0^a\int_0^bf(x,y)dxdy\ge \int_0^a\int_0^bf(a,b)dxdy=abf(a,b),$$
then $f(a,b)\le\frac{1}{ab}$.

We have
\[
\begin{aligned}
\int_0^1&\int_0^1\left|\sqrt{\frac{f_{w_0}(x,y)+f_{ a}(x,y)}{2}}-\sqrt{\frac{f_{w_0}(x,y)+f_{ a'}(x,y)}{2}}\right|^2dxdy\\&
=\frac{1}{2}\int_0^1\int_0^1\left|\frac{|g_{ a}(x)-g_{{a}'}(x)|}{\sqrt{f_{w_0}(x,y)+f_{ a}(x,y)}+\sqrt{f_{w_0}(x,y)+f_{a'}(x,y)}}\right|^2dxdy\\&
\ge\frac{1}{2} \int_0^1\int_0^1\left|\frac{|g_{ a}(x)-g_{{a}'}(x)|}{2\sqrt{\frac{2}{xy}}}\right|^2dxdy\\
&=\frac{1}{16}\int_0^1 y\int_{0}^1 x|g_{ a}(x)-g_{{a}'}(x)|^2dxdy\\&
\ge \frac{1}{32}\int_{\frac{1}{12}}^{\frac{7}{12}}x|g_{ a}(x)-g_{{a}'}(x)|^2dx\\&
\ge \frac{1}{32}\int_{\frac{1}{12}}^{\frac{7}{12}}\frac{1}{12}|g_{ a}(x)-g_{{a}'}(x)|^2dx\\&
\ge \frac{1}{384} \sum\limits_{k=1}^K I\{a_k\ne a'_k\}\frac{1}{2K}(\frac{4}{3K})^2
\\&\ge \frac{1}{6912} K^{-2}.
\end{aligned}
\]
Recall that $K=\lfloor\frac{1}{\sqrt{6912}\delta}\rfloor\le \frac{1}{\sqrt{6912}\delta}$, we obtain
$$
\int_0^1\int_0^1\left|\sqrt{\frac{f_{0}(x,y)+f_{a}(x,y)}{2}}-\sqrt{\frac{f_{0}(x,y)+f_{a'}(x,y)}{2}}\right|^2dxdy\ge \delta^2.
$$

The number of $\delta$-packing is greater than $2^{K/2}$, or $2^{1/\sqrt{27648}\delta}$, and the bracketing entropy satisfies 
$$\log N_{[]}(\delta,\bar {\mathcal F}^{1/2},\|\cdot\|_2)\ge C_8/\delta.$$

\subsection{Proof of Theorem \ref{main_consistency}}
\label{sec:bracket_proof}
We use the following Proposition, adapted from Theorem 7.4 of \cite{van2000empirical}, to show the consistency of $p_{\hat w_n}$ based on Hellinger distance.
\begin{proposition}
\label{emp}
Let
$$J_B(\delta,\mathcal{\bar F}^{1/2},\|\cdot\|_2)=\int_{0}^\delta\sqrt{\log N_{[]}(\epsilon,\mathcal{\bar F}^{1/2}, \|\cdot\|_2)}d\epsilon,$$
where 
$$\bar{\mathcal{F}}^{1/2}=\{\sqrt{\frac{p_w+p_{w_0}}{2}}:w\in \Omega\}.$$

Take $\Psi(\delta) \geq J_B\left(\delta, \bar{\mathcal F}^{1/2}, \|\cdot\|_2\right)$ in such a way that $\Psi(\delta) / \delta^2$ is a non-increasing function of $\delta$. Then for a universal constant $\tilde c$ and for
$$
\sqrt{m} \delta_m^2 \geq \tilde c \Psi(\delta_m),
$$
we have for all $\delta \geq \delta_m$
$$
{P}\left(h\left(p_{\hat{w}}, p_{w_0}\right)>\delta\right) \leq \tilde c \exp \left(-\frac{m \delta^2}{\tilde c^2}\right),
$$
where $\hat{w}$ is defined in (\ref{mle}). 
\end{proposition}

From Proposition \ref{bracketing_2}, we have 
$$
\log N_{[]}(\epsilon,\mathcal{\bar F}^{1/2}, \|\cdot\|_2)\le C_1/\epsilon.
$$
Then, we obtain 
$$
\begin{aligned}
J_B(\delta,\mathcal{\bar F}^{1/2},\|\cdot\|_2)&=\int_{0}^\delta\sqrt{\log N_{[]}(\epsilon,\mathcal{\bar F}^{1/2}, \|\cdot\|_2)}d\epsilon\\&\le \int_0^\delta \sqrt{C_1/\epsilon} d\epsilon\\&=2\sqrt{C_1}\delta^{1/2}.
\end{aligned}
$$
Define $\Psi(\delta)=2\sqrt{C_1}\delta^{1/2}$, and $\delta_m=C_3m^{-1/3}$,where $C_3>0$ is a positive value, then we have
$$
\sqrt{m}\delta_m^2=C_3^{2}m^{-1/6}= \frac{C_3^{3/2}}{2\sqrt{C_1}}\{2\sqrt{C_1}(C_3m^{-1/3})^{1/2}\}=\tilde c\Psi(\delta_m),
$$
where $\tilde c=\frac{C_2^{3/2}}{2\sqrt{C_1}}$. By applying Proposition \ref{emp}, 
$\exists C_4,C_5>0$ such that
$$
{P}\left(h\left(p_{\hat{w}}, p_{w_0}\right)>C_3m^{-1/3} \right)\leq C_4 \exp \left[-C_5m^{1/3}\right].
$$

\subsection{Proof of Corollary \ref{parameter consistency us}}
\label{sec: consitency proof}
We adopt Lemma 5.2 in \citep{van1993hellinger} 
to show the consistency of $w$. 

\begin{proposition}
    \label{parameter consistency}
    Let $\Omega \subset \Omega^*$, where $\left(\Omega^*, \tau\right)$ is a compact metric space. Suppose $w \rightarrow p_w, w \in \Omega^*$ is $\mu$-almost everywhere continuous and that $w_0$ is identifiable. Then, if $h\left(f_{w_m}, f_{w_0}\right) \rightarrow 0$ for some sequence $\left\{w_m\right\}$, where $h$ is the Hellinger distance, we have $\tau\left(w_m, w_0\right) \rightarrow 0$.
\end{proposition} 

In proposition \ref{identify}, we have shown the identifiability of $w_0$. To use Proposition \ref{parameter consistency}, we need to find a compact space $\Omega^*$, and show that $w\to p_w$ is a continuous mapping.

Define a bigger parameter space $\Omega^*=\Delta\times \mathcal G^*\times \mathcal G^*$, where $$\mathcal G^ *=\{f: f \text{ is non-increasing on }(0,1],\|f\|_1\le 1,0\le f \le F \text{ a.e.}\}.$$ Here $f\in \mathcal G^*$ is no longer required to be a density function. 
The corresponding product metric $\tau$ is  
 \begin{align*}
 \tau(w,\hat w)&=|\xi_{00}-\hat\xi_{00}|+|\xi_{01}-\hat\xi_{01}|+|\xi_{10}-\hat\xi_{10}|+|\xi_{11}-\hat \xi_{11}|\\&\qquad+\int _0^1|f_1(x)-\hat f_1(x)|dx +\int _0^1|f_2(y)-\hat f_2(y)|dy.
\end{align*}

The following two lemmas will show that $w\to p_w$ is continuous, and $\Omega^*$ is compact.
\begin{lemma}
    \label{continuous p}
     Let $w=(\xi_{00},\xi_{01},\xi_{10},\xi_{11},f_1(x),f_2(y))$. The mapping $p:\Omega^*\to\mathcal F$, where $$p(w)(x,y)= p_w(x,y)=\xi_{00}+\xi_{10}f_1(x)+\xi_{01}f_2(y)+\xi_{11}f_1(x)f_2(y),$$ is a continuous mapping, where we use the product metric $\tau$ in $\Omega^*$ and the 
$L^1$ metric in $\mathcal F$.
\end{lemma}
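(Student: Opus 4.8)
The plan is to establish sequential continuity: if $w_n \to w$ in the metric $\tau$ on $\Omega^*$, then $\|p_{w_n} - p_w\|_1 \to 0$, where $\|\cdot\|_1$ is the $L^1$ norm on $[0,1]^2$. By the definition of $\tau$, convergence $w_n \to w$ means $|\xi_{ij}^{(n)} - \xi_{ij}| \to 0$ for each pair $(i,j)$, together with $d(f_1^{(n)}, f_1) \to 0$ and $d(f_2^{(n)}, f_2) \to 0$, where $d$ is the $L^1$ distance used to build $\tau$. First I would write the difference $p_{w_n} - p_w$ as a sum of four pieces, one for each of the constant term, the $f_1$ term, the $f_2$ term, and the bilinear product term, and bound the $L^1$ norm of each piece separately by the triangle inequality.

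The three linear pieces are routine. The constant piece contributes exactly $|\xi_{00}^{(n)} - \xi_{00}|$. For the $f_1$ piece I would add and subtract $\xi_{10}^{(n)} f_1$ to obtain
\begin{equation*}
\xi_{10}^{(n)} f_1^{(n)}(x) - \xi_{10} f_1(x) = \xi_{10}^{(n)}\bigl(f_1^{(n)}(x) - f_1(x)\bigr) + \bigl(\xi_{10}^{(n)} - \xi_{10}\bigr) f_1(x).
\end{equation*}
Its $L^1$ norm over $[0,1]^2$ (the integrand is free of $y$, so the $y$-integration contributes a factor $1$) is at most $\xi_{10}^{(n)}\, d(f_1^{(n)}, f_1) + |\xi_{10}^{(n)} - \xi_{10}|\,\|f_1\|_1$. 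Since $\xi_{10}^{(n)} \le 1$ and $\|f_1\|_1 \le 1$ on $\Omega^*$, both summands vanish in the limit. The $f_2$ piece is handled identically by symmetry.

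The only piece requiring care is the bilinear product term $\xi_{11}^{(n)} f_1^{(n)}(x) f_2^{(n)}(y) - \xi_{11} f_1(x) f_2(y)$. Here I would telescope in three steps, writing it as
\begin{equation*}
\xi_{11}^{(n)} \bigl(f_1^{(n)} - f_1\bigr) f_2^{(n)} + \xi_{11}^{(n)} f_1 \bigl(f_2^{(n)} - f_2\bigr) + \bigl(\xi_{11}^{(n)} - \xi_{11}\bigr) f_1 f_2.
\end{equation*}
Because each summand factors as a function of $x$ times a function of $y$, the identity $\int A(x) B(y)\, dx\, dy = \int A \int B$ lets me bound the three $L^1$ norms by $\xi_{11}^{(n)}\, d(f_1^{(n)}, f_1)\,\|f_2^{(n)}\|_1$, $\xi_{11}^{(n)}\, \|f_1\|_1\, d(f_2^{(n)}, f_2)$, and $|\xi_{11}^{(n)} - \xi_{11}|\, \|f_1\|_1\,\|f_2\|_1$ respectively. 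Using once more that every $\xi$ lies in $[0,1]$ and that each density in $\mathcal{G}_1^*, \mathcal{G}_2^*$ has $L^1$ norm at most $1$, each bound tends to zero. Summing the four contributions yields $\|p_{w_n} - p_w\|_1 \to 0$.

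I expect no serious obstacle here: the substance is only the telescoping decomposition of the bilinear term, and the crucial structural inputs are the uniform $L^1$ bound $\|f\|_1 \le 1$ built into $\mathcal{G}_1^*$ and $\mathcal{G}_2^*$ together with the boundedness of the mixing weights in $\Delta$, which keep every coefficient controlled uniformly in $n$. It is worth noting that this is exactly why the lemma is stated on the enlarged space $\Omega^*$ rather than $\Omega$, since the argument uses only the envelope and normalization constraints and not the density or limit-at-$1$ conditions.
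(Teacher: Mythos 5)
Your proof is correct and follows essentially the same route as the paper's: the same telescoping decomposition of the four terms (including the three-way split of the bilinear term), the same use of the factorization $\int A(x)B(y)\,dx\,dy = \int A \int B$, and the same uniform bounds $\xi_{ij}\le 1$ and $\|f\|_1\le 1$ built into $\Omega^*$; the only cosmetic difference is that you phrase it as sequential continuity while the paper gives the quantitative Lipschitz-type bound $\|p_w - p_{w^*}\|_1 \le 8\delta$ whenever $\tau(w,w^*)\le\delta$. One small remark: your argument (like the paper's) in fact uses only the normalization $\|f\|_1\le 1$, not the envelope $F$, so your closing comment slightly overstates what is needed.
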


\begin{lemma}
    \label{compact theta}
     $\Omega^*$ is a compact space.
\end{lemma}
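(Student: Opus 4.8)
The plan is to prove compactness of $\Omega^*=\Delta\times\mathcal G_1^*\times\mathcal G_2^*$ one factor at a time and then use the fact that a finite product of compact metric spaces, equipped with the sum metric $\tau$, is itself compact; since $\Omega^*$ is a metric space it suffices to establish sequential compactness of each factor. The factor $\Delta$ is immediate: it is the intersection of the probability simplex with the closed half-space $\{\xi_{00}\ge c\}$, hence a closed and bounded subset of $\mathbb{R}^4$, and therefore compact by the Heine--Borel theorem.

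The substance of the argument lies in showing that $\mathcal G_1^*$ (and, identically, $\mathcal G_2^*$) is sequentially compact under the $L^1$ metric $d(f,g)=\int_0^1|f-g|\,dx$. Given an arbitrary sequence $\{f_n\}\subset\mathcal G_1^*$, I would first recover uniform local boundedness despite the fact that the envelope $F$ may diverge at $0$. Since each $f_n$ is non-increasing with $\|f_n\|_1\le 1$, for any $\delta\in(0,1)$ monotonicity yields $\delta f_n(\delta)\le\int_0^\delta f_n\le 1$, so $f_n(\delta)\le\delta^{-1}$ and the sequence is uniformly bounded on every interval $[\delta,1]$. Applying Helly's selection theorem on each $[1/k,1]$ and diagonalizing over $k$, I would extract a subsequence (relabeled $\{f_n\}$) that converges pointwise on $(0,1]$, outside a countable set, to a non-increasing limit $f$.

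It then remains to verify that $f\in\mathcal G_1^*$ and that the convergence is in fact in $L^1$. The limit $f$ inherits monotonicity and the bounds $0\le f\le F$ a.e.\ from the pointwise limit, while Fatou's lemma gives $\|f\|_1\le\liminf_n\|f_n\|_1\le 1$, so $f\in\mathcal G_1^*$. Finally, because $|f_n-f|\le 2F$ with $F\in L^1$ by (C2), the dominated convergence theorem upgrades the a.e.\ convergence to $\int_0^1|f_n-f|\,dx\to 0$. This last step is the crux: the integrable envelope is precisely what converts pointwise convergence into $L^1$ convergence, while monotonicity is what makes Helly's theorem applicable at all. The main obstacle is the behavior near $0$, where $F$ may blow up and destroy uniform boundedness; the monotonicity bound $f_n(\delta)\le\delta^{-1}$ is what circumvents this, confining the difficulty to a shrinking neighborhood of $0$ whose contribution is controlled by the integrability of $F$. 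With all three factors sequentially compact and $\tau$ the sum metric, nested extraction of convergent subsequences delivers sequential compactness of $\Omega^*$, completing the proof.
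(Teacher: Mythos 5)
Your proof is correct, but it takes a genuinely different route from the paper. The paper proves compactness of $\mathcal G_1^*$ by the metric-space characterization ``complete $+$ totally bounded'': total boundedness is obtained from the Kolmogorov--Riesz $L^1$-compactness criterion, where the key translation estimate $\int |g(x+y)-g(x)|\,dx = 2\int_0^y g(x)\,dx \le 2\int_0^\delta F(x)\,dx$ uses exactly the same two ingredients you rely on (monotonicity away from $0$, integrability of the envelope near $0$); completeness is obtained from completeness of $L^1$ together with an a.e.-convergent subsequence argument showing the limit stays in the class. You instead prove sequential compactness directly: Helly's selection theorem on $[1/k,1]$ (made applicable by the pointwise bound $f_n(\delta)\le \delta^{-1}$ from monotonicity and $\|f_n\|_1\le 1$), a diagonal extraction to get everywhere-pointwise convergence on $(0,1]$, Fatou to keep $\|f\|_1\le 1$, and dominated convergence with dominating function $2F\in L^1$ to upgrade to $L^1$ convergence. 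Your route is more elementary and self-contained (no Kolmogorov--Riesz needed), and your final verification that the limit lies in $\mathcal G_1^*$ plays precisely the role of the paper's completeness step; what the paper's route buys is modularity, since citing the Riesz criterion avoids the selection-plus-diagonalization bookkeeping. You also treat $\Delta$ (Heine--Borel) and the product structure (nested subsequence extraction under the sum metric $\tau$) explicitly, which the paper leaves implicit; that is a small point in your favor. One cosmetic remark: after diagonalization, Helly gives convergence at \emph{every} point of $(0,1]$, so your hedge ``outside a countable set'' is unnecessary, and dropping it also removes any need to redefine $f$ on an exceptional set to make it genuinely non-increasing.
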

Combining identifiability (Proposition \ref{identify}), continuous mapping (Lemma $\ref{continuous p}$), and compact space (Lemma $\ref{compact theta}$), 
we have $\tau(\hat w_n,w_0)\overset p\to 0$.

\subsubsection{Proof of Lemma \ref{continuous p}}
Now we show that $w\to p_w,w\in \Omega^*$ is continuous. Assume that $\tau(w^*,w)\le \delta$, using the fact that
$$
\int A(x)B(y)dxdy=\int A(x)dx\int B(y)dy, 
$$
we obtain
$$
\begin{aligned}
\|\xi_{11}f_1(p_1)f_2(p_2)&-\xi_{11}^*f_1^*(p_1)f_2^*(p_2)\|_1
\\&\le\xi_{11}\|f_1(p_1)f_2(p_2)-f_1^*(p_1)f_2^*(p_2)\|_1+\|f_1^*(p_1)f_2^*(p_2)\|_1\cdot|\xi_{11}-\xi_{11}^*|\\&\le \|f_1(p_1)f_2(p_2)-f_1^*(p_1)f_2^*(p_2)\|_1+\delta\\&\le \delta+ \|f_1(p_1)(f_2(p_2)-f_2^*(p_2))\|_1+\|f_2^*(p_2)(f_1(p_1)-f_1^*(p_1))\|_1\\&\le \delta+\|f_2-f_2^*\|_1+\|f_1-f_1^*\|_1\\&\le 3\delta,
\end{aligned}
$$
In the same way, we can show that 
\begin{align}
|\xi_{00}-\xi_{00}^*|&\le \delta\nonumber\\
\|\xi_{10}f_1-\xi_{10}^*f_1^*\|_1&\le 2\nonumber\delta\\
\|\xi_{01}f_2-\xi_{01}^*f_2^*\|_1&\le 2\delta \label{part_dist}
\end{align}

Therefore
\begin{align}
\|p_{w}-p_{w^*}\|_1&\le |\xi_{00}-\xi_{00}^*|+\|\xi_{10}f_1-\xi_{10}^*f_1^*\|_1+\|\xi_{01}f_2-\xi_{01}^*f_2^*\|_1 \nonumber\\
&\qquad+\|\xi_{11}f_1(p_1)f_2(p_2)-\xi_{11}^*f_1^*(p_1)f_2^*(p_2)\|_1\nonumber\\
&\le 8\delta,\label{total_dist}\end{align}
implying that $w\to p_w$ is continuous.

\subsubsection{Proof of Lemma \ref{compact theta}}
To show $\mathcal G^*$ is compact, it is sufficient to show that $\mathcal G^*$ is totally bounded and complete.
$\mathcal G^*$ is totally bounded when $\mathcal G^*$ can be covered by a finite number of $\epsilon$-balls. $\mathcal G^*$ is complete when every Cauchy sequence in it converges.

\underline{\textbf{Step 1}}: $\mathcal G^*$ is totally bounded.

The proof is an application of Kolmogorov–Riesz Theorem \citep{hanche2010kolmogorov}. The Theorem says that $\mathcal G^*$ is totally bounded with metric $L^1$ if and only if 
\begin{enumerate}
    \item $\mathcal G^*$ is bounded,
    \item for any $\epsilon>0$, there exists $R,$ and $\delta$, such that for any $g\in \mathcal G^*$, $|y|<\delta$, we have
$$\int_{|x|>R} g(x)< \epsilon,\quad \text{and}\quad\int_{-\infty}^\infty |g(x+y)-g(x)| dx < \epsilon.$$
    
\end{enumerate}

\begin{remark}
    $\mathcal{G}_1^*$ being bounded means that for any $g,g'\in \mathcal G^*$, $\int|g(x)-g'(x)|dx<C_9$, where $C_9$ is a constant. It is not the same as $\forall g\in \mathcal{G}^*,g$ is bounded.
\end{remark}

The remaining part is to verify the above conditions.
\begin{enumerate}
    \item $\mathcal G^*$ is bounded because for any $g\in \mathcal G^*$, we have $\|g-0\|_1\le 1$, so that for any $g,g'\in \mathcal G^*$, $\|g-g'\|_1\le \|g-0\|_1+\|g'-0\|_1\le 2 $.
    \item Taking $R=1$ will guarantee that $\int_{|x|>R} g(x)dx=0<\epsilon$.
    \item 
    Take $\delta$ such that $2\int_0^\delta F(x)\le \epsilon$. It is achievable because $F\in L^1$.
    Without loss generality, we can assume that $y>0$, so that
    $$
    \begin{aligned}
    \int_{-\infty}^\infty |g(x+y)-g(x)| dx&=\int_{-y}^0|g(x+y)-g(x)| dx+ \int_{0}^\infty|g(x+y)-g(x)| dx
    \\&= \int_{-y}^0g(x+y) dx+\int_0^\infty \{g(x)-g(x+y)\}dx
    \\&=2\int _0^y g(x)dx\\&\le 2\int _0^y F(x)dx\\&\le 2\int_0^\delta F(x)dx\\&\le \epsilon.
    \end{aligned}
    $$
\end{enumerate}

After verifying above conditions, by Kolmogorov–Riesz Theorem, $\mathcal G^*$ is totally bouned.

\underline{\textbf{Step 2}}: $\mathcal G^*$ is complete.

If $\{f_m\}$ is a Cauchy sequence in $\mathcal G^*\subset L^1$, and $f_m\in \mathcal G^*$, then there exists $f\in L^1$ such that $\|f_m-f\|_1\to 0$, because of the completeness of $L^1$ space. By Corollary 2.3 in \citep{stein2009real}, there exists a subsequence $f_{m_k}$ such that $f_{m_k}(x)\to f(x)$ almost everywhere. As $0\le f_m\le F$, $f_m$ is non-increasing and $\|f_m\|\le 1$, we have $f\le F$, $f$ is non-increasing and $0\le f\le F$ almost everywhere, which implies that $f\in \mathcal G^*$.
So that $\mathcal G^*$ is complete.

By showing that $\mathcal G^*$ is totally bounded and complete, we prove that $\mathcal G^*$ is compact. The proof of compactness of $\mathcal G^*$ is exactly the same as that of $\mathcal G^*,$ so we omit the details.

\subsection{Proof of Theorem \ref{Minimax}}
\label{minimax_proof}
Let $\mathcal F$ be the functional space defined in (\ref{eq:mixture}). Define the bounded functional space $\mathcal F_{C_l,C_u}=\{p\in \mathcal F,C_l\le p\le C_u\}$, where $C_u> C_l$.
As $\mathcal F_{C_l,C_u}\subseteq \mathcal F$, we have
$$
\mathcal R\ge \inf\limits_{\tilde p_m\in \bar{\mathcal F}_m}\sup\limits_{p_{w_0}\in \mathcal F_{C_l,C_u}} E_{p_{w_0}}[h^2(\tilde p_m,p_{w_0})],
$$
where $\mathcal R$ is the true minimax rate defined in (\ref{eq:minimax}), and $h(p,p')$ is the Hellinger distance between function $p$ and $p'$.

We use the result of Theorem 1 in \cite{yang1999information} to get the lower bound of the minimax rate. We define $H(\epsilon, {\mathcal F}_{C_l,C_u},d_{KL})$ as the $\epsilon$ bracketing entropy for the space $\mathcal {F}_{C_l,C_u}$  with respect to $d_{KL}$, where $d_{KL}$ is square root of the KL divergence, and
$$
d^2_{KL}(p_1,p_2)=\int p_1(x)\log \frac{p_1(x)}{p_2(x)} dx.
$$
We further define $M(\epsilon,{\mathcal F}_{C_l,C_u},h)$ as the $\epsilon$ packing entroy for the space $\mathcal {F}_{C_l,C_u}$  with respect to the Hellinger distance $h$.

Here is a modified version of the Theorem 1 in  \cite{yang1999information}.
\begin{theorem}
\label{thm: Fato}
    Let $\epsilon_m$ be determined by $\epsilon_m^2=H(\epsilon_m)/m$, where $H(\epsilon)$ is an upper bound of $H(\epsilon, {\mathcal F}_{C_l,C_u},d_{KL})$. Let $\epsilon_{m,d}$ be the solution of $M(\epsilon_{m,d})=4m\epsilon^2_m+2\log 2$, where $M(\epsilon)$ is a lower bound of $M(\epsilon,{\mathcal F}_{C_l,C_u},h).$
    We have
    $$
    \inf _{\tilde p_m\in \mathcal{\bar F}_m} \sup _{p_{w_0} \in \mathcal{F}} E_{p_{w_0}} h^2(\tilde p_m,p_{w_0}) \geq\frac{1}{8}\epsilon_{m, d}^2.
    $$ 
\end{theorem}

Next, we use Theorem (\ref{thm: Fato}) to derive the lower bound of our minimax rate. When $p_1,p_2\in \mathcal F_{C_l,C_u}$, we find that the upper bound of the KL divergence between two functions is 
$$
\begin{aligned}
&d^2_{KL}(p_1,p_2)\\
=&\int p_1(x)\log \frac{p_1(x)}{p_2(x)} dx\\\le&\int \{\frac{p_1(x)}{p_2(x)}-1\} p_1(x)dx
\\=&\int\frac{p^2_1(x)}{p_2(x)}-2p_1(x)+p_2(x) dx
\\=&\int \frac{\{p_1(x)-p_2(x)\}^2}{p_2(x)}dx
\\=&\int \frac{\{\sqrt{p_1(x)}+\sqrt{p_2(x)}\}^2}{p_2(x)}\{\sqrt{p_1(x)}-\sqrt{p_2(x)}\}^2dx
\\\le& \frac{8C_u}{C_l} \frac{1}{2}\int \{\sqrt{p_1(x)}-\sqrt{p_2(x)}\}^2dx.
\\\le& \frac{8C_u}{C_l} h^2(p_1,p_2).
\end{aligned}$$
Therefore, we obtain that  
$$
H(\epsilon,{\mathcal F}_{C_l,C_u},d_{KL})\le H(\frac{C_l^{1/2}}{2\sqrt{2}C_u^{1/2}}\epsilon,{\mathcal F}_{C_l,C_u},h)\le K_1 \epsilon^{-1}:=H(\epsilon),
$$
where $K_1$ is a constant.
The lower bound of the KL divergence is calculated by
$$
\begin{aligned}
&d^2_{KL}(p_1,p_2)\\
=&-2\int p_1(x)\log \sqrt\frac{p_2(x)}{p_1(x)} dx\\\ge&-2\int \{\sqrt\frac{p_2(x)}{p_1(x)}-1\} p_1(x)dx
\\=&-2\int\{\sqrt{p_1(x)p_2(x)}-p_1(x)\}dx
\\=&\int\{p_1(x)+p_2(x)- 2\sqrt{p_1(x)p_2(x)}\}dx
\\=&\int \{\sqrt{p_1(x)}-\sqrt{p_2(x)}\}^2dx
\\=&2h^2(p_1,p_2).
\\\ge &h^2(p_1,p_2).
\end{aligned}$$
Therefore
$$
M(\epsilon,{\mathcal F}_{C_l,C_u},d_{KL})\ge M(\epsilon,{\mathcal F}_{C_l,C_u},h)\ge K_2\epsilon^{-1}:=M(\epsilon).
$$

By solving equations $\epsilon_m^2=H(\epsilon_m)/m$, and $M(\epsilon_{m,d})=4m\epsilon_{m}^2+2\log 2$, we find that $\epsilon_m= K_1^{1/3}m^{-1/3}:=K_3m^{-1/3}$, and $\epsilon_{m,d}=\frac{K_2}{4K_3^2m^{1/3}+2\log 2}>K_4 m^{-1/3}$. Here $K_4=\frac{K_2}{8K_3^2}$ because when $\epsilon_{m}\asymp m^{-1/3}$, $4m\epsilon_{m}^2\ge 2\log 2$, and  $K_2\epsilon^{-1}_{m,d}= 4m\epsilon_{m}^2+ 2\log 2\le 8m\epsilon_{m}^2=K_3^2m^{1/3}.$

Then by Theorem (\ref{thm: Fato}), we find that 
\[
\inf _{\tilde p_m\in \bar {\mathcal{F}}_m} \sup _{p_{w_0} \in \mathcal{F}} E_{p_{w_0}}[h^2(\tilde p_m,p_{w_0})] \geq\inf _{\tilde p_m\in \bar{\mathcal{F}}_m} \sup _{p_{w_0} \in \mathcal{F}_{C_l,C_u}} E_{f_0}[h^2(\tilde p_m,p_{w_0})] \geq \frac{K_4^2}{8}m^{-2/3}.
\]

For the upper bound of $\mathcal R$, we take $\hat p_{m}$ to be the maximum likelihood estimation for the density $p_{w_0}$. Then by Theorem \ref{main_consistency}, $\forall p_{w_0}\in \mathcal F$ there exists $C_{10}$ such that
$$
E_{p_{w_0}}[h^2(\hat p_{m}, p_{w_0})]\le C_{10}m^{-2/3}.
$$
Therefore, we obtain that 
$$
\inf _{\tilde p_m\in \bar {\mathcal{F}}_m} \sup _{p_{w_0} \in \mathcal{F}} E_{p_{w_0}}[h^2(\tilde p_m,p_{w_0})] \leq E_{p_{w_0}}[h^2(\hat p_m, p_{w_0})]\le C_{10}m^{-2/3}.
$$

Finally, the minimax rate
$$
\inf _{\tilde p_m\in \bar {\mathcal{F}}_m} \sup _{p_{w_0} \in \mathcal{F}} E_{p_{w_0}}[h^2(\tilde p_m,p_{w_0})]\asymp m^{-2/3}.
$$

\subsection{Proof of Theorem \ref{fdr_theorem}}
\label{sec: two_lfdr_proof}
We first prove that when using the MLE, $\hat w_n$, to estimate the local false discovery rate (Lfdr), 
the average absolute difference between estimated Lfdr and true Lfdr converges to $0$
based on the following proposition. 
\begin{proposition}
\label{lfdr consistency}
    Under (C1) and (C2), if $\hat w=(\hat\xi_{00},\hat\xi_{10},\hat \xi_{01},\hat \xi_{11},\hat f_{1},\hat f_{2})$ is the MLE, 
    and $w_0=(\xi_{00},\xi_{10},\xi_{01},\xi_{11},f_{1},f_{2})$ is the true parameter, 
    we have
    $$
    \frac{1}{m}\sum\limits_{i=1}^m\left|\widehat{\operatorname{Lfdr}}(p_{i1},p_{i2})-\operatorname{Lfdr}(p_{i1},p_{i2}) \right|\overset{p}{\to} 0,
    $$
    where
    \begin{equation*}
    \widehat{\operatorname{Lfdr}}(p_{i1},p_{i2})=\frac{\hat \xi_{00} +\hat \xi_{10} \hat f_{1}\left(p_{1 i}\right)+\hat\xi_{01} \hat f_{2}\left(p_{2 i}\right)}{\hat \xi_{00} +\hat \xi_{10} \hat f_{1}\left(p_{1 i}\right)+\hat\xi_{01} \hat f_{2}\left(p_{2 i}\right)+\hat \xi_{11} \hat f_{1}\left(p_{1 i}\right) \hat f_{2}\left(p_{2 i}\right)},
    \end{equation*}
    and
    $$
      \operatorname{Lfdr}(p_{i1},p_{i2})= \frac{ \xi_{00} +\xi_{10} f_{1}\left(p_{1 i}\right)+\xi_{01} f_{2}\left(p_{2 i}\right)}{\xi_{00} +\xi_{10} f_{1}\left(p_{1 i}\right)+\xi_{01} f_{2}\left(p_{2 i}\right)+\xi_{11} f_{1}\left(p_{1 i}\right) f_{2}\left(p_{2 i}\right)}.
    $$
\end{proposition}

\subsubsection{Proof of Proposition \ref{lfdr consistency}}
Let $\mathcal H=\{\frac{\xi_{00}+\xi_{10}f_1(p_1)+\xi_{01}f_2(p_2)}{\xi_{00}+\xi_{10}f_1(p_1)+\xi_{01}f_2(p_2)+\xi_{11}f_1(p_1)f_2(p_2)}:w\in \Omega\}$.
For $w,w^*\in \Omega$ and $\tau(w,w^*)<\delta$, notice that $\xi_{00}\ge c, \xi_{00}^*\ge c, p_w\ge c,$ and $p_{w^*}\ge c$. We have
$$
\begin{aligned}\left\|\frac{p_w-\xi_{11}f_1f_2}{p_w}-\frac{p_{w^*}-\xi^*_{11}f^*_1f^*_2}{p_{w^*}}\right\|_1&=\left\|\frac{\xi_{11} f_1f_2 p_{w^*}-\xi^*_{11} f_1^*f_2^*p_{w}}{p_w p_{w^*}}\right\|_1
\\&=\left\|\frac{\xi_{11} f_1f_2 p_{w^*}-\xi_{11}f_1f_2p_{w}+\xi_{11}f_1f_2p_{w}-\xi^*_{11} f_1^*f_2^*p_{w}}{p_w p_{w^*}}\right\|_1
\\&\le \left\|\frac{\xi_{11}f_1f_2|p_{w^*}-p_{w}|+p_{w}|\xi_{11}f_1f_2-\xi_{11}^*f_1^*f_2^*|}{p_{w}p_{w^*}}\right\|_1\\&\le \left\|\frac{\xi_{11}f_1f_2}{p_w}\frac{|p_w-p_{w^*}|}{p_{w^*}}\right\|_1+\left\|\frac{1}{p_{w^*}}|\xi_{11}f_1f_2-\xi_{11}^*f_1^*f_2^*\right\|_1\\&\le \frac{\|p_{w}-p_{w^*}\|_1+\|\xi_{11}f_1f_2-\xi_{11}^*f_1^*f_2^*\|_1}{c}\\&\le \frac{11}{c}\delta.\end{aligned}
$$
The last inequality holds because of (\ref{part_dist}) and (\ref{total_dist}).
Therefore, when $\tau(\hat w_n,w_0)\overset{p}{\to} 0$, we obtain that 
$$
\|\widehat{\mbox{Lfdr}}-\mbox{Lfdr}\|_1=\left\|\frac{p_{\hat w}-\hat \xi_{11}\hat f_1\hat f_2}{p_{\hat w}}-\frac{p_{w_0}-\xi_{0,11}f_{0,1}f_{0,2}}{p_{w_{0}}}\right\|_1\overset{p}{\to} 0.
$$

In addition, by Holder's inequality, we have
$$
\begin{aligned}
\int |\widehat{\mbox{Lfdr}}(p_{i1},p_{i2})&-\mbox{Lfdr}(p_{i1},p_{i2})| p_{w_0}(p_{i1},p_{i2}) dp_{i1}dp_{i2}
\\&\le \|p_{w_0}\|_{2}\left\{\int \left|\widehat{\mbox{Lfdr}}(p_{i1},p_{i2})-\mbox{Lfdr}(p_{i1},p_{i2})\right|^{2}dp_{i1}dp_{i2}\right\}^{1/2}
\\&\le\|p_{w_0}\|_{2}\left\{\int 2\left|\widehat{\mbox{Lfdr}}(p_{i1},p_{i2})-\mbox{Lfdr}(p_{i1},p_{i2})\right|dp_{i1}dp_{i2}\right\}^{1/2}
\\&\overset{p}{\to} 0.
\end{aligned}
$$
So that $\mathbb {E}|\widehat{\mbox{Lfdr}}-\mbox{Lfdr}|\overset{p}{\to} 0$.

By the Uniformly Law of Large Numbers (Lemma 3.1 in \citep{van2000empirical}), 
we can show that
$$
P_m |\widehat {\mbox{Lfdr}}-\mbox{Lfdr}| \overset{p}{\to} E |\widehat {\mbox{Lfdr}}-\mbox{Lfdr}|\overset{p}{\to} 0.$$

Now we verify the condition of the Uniformly Law of Large Numbers. This theorem requires that $\mathcal H'$ has finite bracketing number in $L_1(P_0)$ metric, where $\mathcal H'$ is defined as
$$\mathcal H'=\{|h-\mbox{Lfdr}(w_0)|: h\in\mathcal H\}.$$

Note that
$$\frac{\xi_{00}+\xi_{10}f_1(p_1)+\xi_{01}f_2(p_2)}{\xi_{00}+\xi_{10}f_1(p_1)+\xi_{01}f_2(p_2)+\xi_{11}f_1(p_1)f_2(p_2)}=1-\frac{\xi_{11}}{\frac{\xi_{00}}{f_1(p_1)f_2(p_2)}+\frac{\xi_{10}}{f_2(p_2)}+\frac{\xi_{01}}{f_1(p_1)}+\xi_{11}}
$$
is a monotone increasing function in $p_1,p_2$, and bounded by 1. By Theorem 1.1 in \cite{gao2007entropy}, the bracketing number of $\mathcal H $ is bounded, and we have
$$N_{[]}(\mathcal{H},\epsilon,\|\cdot\|_2)<\infty.$$ We aim to show that  
$$N_{[]}(\mathcal{H}',\epsilon, \|\cdot\|_2)<\infty,$$ which can be achieved by the following lemma.

\begin{lemma}
    \label{absolute bracketing}
    Let $\mathcal F$ be a functional space, and let $a\ge1$. Define
    $$|\mathcal F|=\{|f|,f\in \mathcal F\}.$$
    If $$N_{[]}(\mathcal F,\epsilon,\|\cdot\|_a)<\infty,$$ we have
    $$N_{[]}(|\mathcal F|,\epsilon,\|\cdot\|_a)\le N_{[]}(\mathcal F,\epsilon,\|\cdot\|_a).$$
\end{lemma}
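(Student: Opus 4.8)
The plan is to show that every $\epsilon$-bracket of $\mathcal F$ can be converted, pair by pair, into an $\epsilon$-bracket of $|\mathcal F|$ using no more brackets, so that the minimal covering number cannot increase. Recall that an $\epsilon$-bracket is a finite family $[f_1^L,f_1^U],\dots,[f_n^L,f_n^U]$ with $\|f_i^U-f_i^L\|_\alpha\le\epsilon$ such that every $f\in\mathcal F$ satisfies $f_i^L\le f\le f_i^U$ for some $i$. I would fix a minimal such family with $n=N_{[]}(\mathcal F,\epsilon,\|\cdot\|_\alpha)$ pairs, and for each index $i$ build a single pair $[g_i^L,g_i^U]$ that brackets $|f|$ whenever $[f_i^L,f_i^U]$ brackets $f$, and whose width is no larger.

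The construction I would use is the pointwise envelope of $|t|$ as $t$ ranges over the interval $[f^L(x),f^U(x)]$. Writing $f^L,f^U$ for a generic pair and suppressing the index, set
$$g^U=\max(|f^L|,|f^U|)=\max(f^U,-f^L),\qquad g^L=\max(f^L,-f^U,0).$$
I would then verify directly that $f^L\le f\le f^U$ implies $g^L\le|f|\le g^U$ pointwise. The upper bound follows from $|f|\le\max(|f^L|,|f^U|)$ after a quick check on the sign of $f$. For the lower bound, when $f^L,f^U$ share a sign one has $g^L=\min(|f^L|,|f^U|)\le|f|$; the only delicate point is when $f^L(x)<0<f^U(x)$, where $|f(x)|$ may equal $0$ and the correct lower envelope is precisely $g^L(x)=0$, which the formula $\max(f^L,-f^U,0)$ reproduces.

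It then remains to control the width. I would establish the pointwise inequality $0\le g^U(x)-g^L(x)\le f^U(x)-f^L(x)$ by a three-case analysis on the signs of $f^L(x),f^U(x)$: when both are nonnegative or both nonpositive the two sides are equal, and when $f^L(x)<0<f^U(x)$ one has $g^U-g^L=\max(-f^L,f^U)\le(-f^L)+f^U=f^U-f^L$, using $\max(a,b)\le a+b$ for $a,b\ge0$. Integrating the $\alpha$-th power yields $\|g^U-g^L\|_\alpha\le\|f^U-f^L\|_\alpha\le\epsilon$, so $[g_1^L,g_1^U],\dots,[g_n^L,g_n^U]$ is an $\epsilon$-bracket of $|\mathcal F|$ with $n$ pairs, and hence $N_{[]}(|\mathcal F|,\epsilon,\|\cdot\|_\alpha)\le n=N_{[]}(\mathcal F,\epsilon,\|\cdot\|_\alpha)$. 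The only real obstacle is the straddling-zero case, which appears in both the lower-bound and the width verifications; away from it the map $f\mapsto|f|$ is an isometry on each sign-definite piece, so nothing is lost.
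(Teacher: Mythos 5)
Your proposal is correct and follows essentially the same argument as the paper: your closed-form envelopes $g^U=\max(|f^L|,|f^U|)$ and $g^L=\max(f^L,-f^U,0)$ coincide case by case with the paper's three-case construction (both positive, straddling zero, both negative), and both proofs conclude via the same pointwise width bound $g^U-g^L\le f^U-f^L$ followed by integration. No gaps; the straddling-zero case you flag as delicate is handled the same way in the paper, with the lower envelope set to zero there.
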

\begin{proof}
    Let $[f_i^L,f_i^U],i=1,\cdots,K$ be $\epsilon$ bracketing of $\mathcal F$. Define $g^U_i,g^L_i$ in the following way:
    \begin{itemize}
        \item When $f_i^U(x)\ge f_i^L(x)> 0$, $g^U_i(x)=f^U_i(x)$, $g^L_i(x)=f^L_i(x)$;
        \item When $f_i^U(x)\ge 0\ge f_i^L(x),g^U_i(x)=\max\{|f^U_i(x)|,|f^L_i(x)|\}$, $g^L_i(x)=0$;
        \item When $0> f_i^U(x)\ge f_i^L(x)$, $g^U_i(x)=-f^L_i(x)$, $g^L_i(x)=-f^U_i(x)$.
    \end{itemize}
    In all the three cases, we have $|g_i^U(x)-g_i^L(x)|\le |f_i^U(x)-f_i^L(x)|$. When $f_i^L\le f\le f_i^U$, we also have
    $$g_i^L\le |f|\le g_i^U.$$ This implies that $[g_i^L,g_i^U],i=1,\cdots,K$ is also $\epsilon$ bracketing of functional space $|\mathcal F|$.
    Therefore we have 
    $$N_{[]}(|\mathcal F|,\epsilon,\|\cdot\|_a)\le N_{[]}(\mathcal F,\epsilon,\|\cdot\|_a).$$
\end{proof}
By Holder's inequality, for any $g\in L^2$, we have 
$$\|g\|_{L_1(P_0)}=\|g f_{w_0}\|_{1}\le \|g\|_{2} \|f_{w_0}\|_{2}, $$
so that both the space $\mathcal H$, and $\mathcal H'$ have a finite bracketing number in $L_1(P_0)$ metric. 
This satisfies the condition of the Uniformly Law of Large Numbers (Lemma 3.1 in \citep{van2000empirical}). 
So that 
$$
P_m |\widehat {\mbox{Lfdr}}-\mbox{Lfdr}| \overset{p}{\to} E |\widehat {\mbox{Lfdr}}-\mbox{Lfdr}|\overset{p}{\to} 0.
$$
    \textbf{Note:} We find that when $\xi_{00},\xi_{01},\xi_{10}>c'$ ($c'$ is a constant), and when $\tau(w,w^*)<\delta$,
    we have 
    $$
    \begin{aligned}\left\|\frac{f_j}{p_w}-\frac{f_j^*}{p_{w^*}}\right\|_1&=\left\|\frac{f_j p_{w^*}-f_j^*p_{w}}{p_w p_{w^*}}\right\|_1
    \\&= \left\|\frac{f_j p_{w^*}-f_j p_w+f_jp_w-f_j^*p_{w}}{p_w p_{w^*}}\right\|_1
    \\&\le \left\|\frac{f_j|p_{w^*}-p_{w}|+p_{w}|f_j-f_j^*|}{p_{w}p_{w^*}}\right\|_1
    \\&\le \left\|\frac{f_j}{p_w}\frac{|p_w-p_{w^*}|}{p_{w^*}}\right\|_1+\left\|\frac{|f_j-f_{j}^*|}{p_{w^*}}\right\|_1\\&\le \frac{\frac{1}{c'}\|p_{w}-p_{w^*}\|_1+\|f_j-f_{j}^*\|_1}{c'}\\&\le \frac{8/c'+1}{c'}\delta,\end{aligned}
    $$
    where $j=0,1$ and $2$. Here we set $ f^*_0=f_0=1$.
By the same arguments used in the proof of 
proposition \ref{lfdr consistency}, we also show that
   \begin{equation}
    \label{eq:lfdr_part}
       P_m\left|\frac{\hat f_j}{p_{\hat w}}-\frac{f_{0,j}}{p_{w_0}}\right|\overset{p}{\to} 0.
    \end{equation}

\subsubsection{Proof of Theorem \ref{fdr_theorem}}
We present the sketch of the proof as follows.
Define 
$$
\begin{aligned}
R_m(\lambda)&=\sum_{i=1}^m {I}\left\{\operatorname{Lfdr}(p_{i1},p_{i2}) \leq \lambda\right\},\\
V_m(\lambda)&=\sum\limits_{i=1}^m  I\{\operatorname{Lfdr}(p_{i1},p_{i2})\le \lambda\} (1-\theta_{i1}\theta_{i2}),\\
D_{m}(\lambda)&= \sum_{i=1}^m \operatorname{Lfdr}(p_{i1},p_{i2}) {I}\left\{\operatorname{Lfdr}(p_{i1},p_{i2}) \leq \lambda\right\},\\
\end{aligned}
$$
where $R_m(\lambda)$ is the number of rejections, $V_m(\lambda)$ is the number of false discoveries and $D_m(\lambda)$ is the estimated number of false discoveries. 
If we use the estimated local false discovery rate, the corresponding values are $\hat R_m(\lambda),\hat V_m(\lambda),$ and $\hat D_m(\lambda)$, where
$$
\begin{aligned}
\hat R_m(\lambda)&=\sum_{i=1}^m {I}\left\{\widehat{\operatorname{Lfdr}}(p_{i1},p_{i2}) \leq \lambda\right\},\\
\hat V_m(\lambda)&=\sum\limits_{i=1}^m  I\{\widehat{\operatorname{Lfdr}}(p_{i1},p_{i2})\le \lambda\} (1-\theta_{i1}\theta_{i2}),\\
\hat D_{m}(\lambda)&=\sum_{i=1}^m \widehat{\operatorname{Lfdr}}(p_{i1},p_{i2}) {I}\left\{\widehat{\operatorname{Lfdr}}(p_{i1},p_{i2}) \leq \lambda\right\}.\\
\end{aligned}
$$
With a good estimate of the local false discovery rate as that in Proposition \ref{lfdr consistency}, that satisfies 
$$\frac{1}{m} \sum_{i=1}^m\left|\widehat{\operatorname{Lfdr}}(p_{i1},p_{i2})-\operatorname{Lfdr}(p_{i1},p_{i2})\right| \overset p\to  0,$$
we have the following uniform consistency result:
$$\begin{aligned}\sup\limits_{\lambda\in[\lambda_0,1]}|\hat R_m(\lambda)-R_m(\lambda)|&\overset p \to 0,\\\sup\limits_{\lambda\in[\lambda_0,1]}|\hat D_{m}(\lambda)-D_m(\lambda)|&\overset p \to 0,\\\sup\limits_{\lambda\in[\lambda_0,1]}|\hat V_{m}(\lambda)-V_m(\lambda)|&\overset p \to 0,\end{aligned}$$
where $\lambda_0$ is defined in (C5).

Using the same way for proving ``Glivenko–Cantelli Theorem" (\citep{path}), we show that $\hat R_m(\lambda)/m,$ $\hat D_m(\lambda)/m,$ and $\hat V_m(\lambda)/m$ are uniformly consistent. In other words,
$$
\begin{aligned}
\sup\limits_{\lambda\in [\lambda_0,1]}|\hat R_m(\lambda)/m-B_1(\lambda)|&\overset p \to 0,\\
\sup\limits_{\lambda\in [\lambda_0,1]}|\hat D_{m}(\lambda)/m-B_2(\lambda)|&\overset p \to 0,\\
\sup\limits_{\lambda\in [\lambda_0,1]}|\hat V_m(\lambda)/m-B_2(\lambda)|&\overset p \to 0.
\end{aligned}$$
Besides 
we have
$$
\begin{aligned}
\sup\limits_{\lambda\in [\lambda_0,1]}\left|\frac{\hat D_{m}(\lambda)}{\hat R_m(\lambda)}-\frac{B_2(\lambda)}{B_1(\lambda)}\right|&\overset p \to 0,\\
\sup\limits_{\lambda\in [\lambda_0,1]}\left|\frac{V_{m}(\lambda)}{R_m(\lambda)}-\frac{B_2(\lambda)}{B_1(\lambda)}\right|&\overset p \to 0.\\
\end{aligned}
$$
So when $m$ is sufficiently large we have $$P\left\{\left|\frac{\hat D_m(\lambda_0)}{\hat R_m(\lambda_0)}-\frac{B_2(\lambda_0)}{B_1(\lambda_0)}\right|\le \frac{\alpha-B_2(\lambda_0)/B_1(\lambda_0)}{2}\right\}\to 1.$$ So that 
$$P\left\{\frac{\hat D_m(\lambda_0)}{\hat R_m(\lambda_0)}<\alpha\right\}\to 1,\quad{\text{and}}\quad P\{\hat \lambda_m>\lambda_0\}\to 1.$$
Besides from (C5), $P(R_m(\hat \lambda_m)>0)\to 1$, so the false discovery rate can be bounded by
$$
\begin{aligned}
\mbox{FDR}(\hat \lambda_m)&=\mathbb{E}\left(\frac{V_m(\hat\lambda_m)}{R_m(\hat\lambda_m)\vee 1}\right)
\\&=\mathbb{E}\left(\frac{V_m(\hat \lambda_m)}{R_m(\hat\lambda_m)}\right)
\\&=\mathbb{E}\left\{\frac{\hat D_m(\hat\lambda_m)}{\hat R_m(\hat\lambda_m)}+\left(\frac{V_m(\hat\lambda_m)}{R_m(\hat\lambda_m)}-\frac{B_2(\hat\lambda_m)}{B_1(\hat\lambda_m)}\right)+\left(\frac{B_2(\hat\lambda_m)}{B_1(\hat\lambda_m)}-\frac{\hat D_m(\hat \lambda_m)}{\hat R_m(\hat\lambda_m)}\right)\right\}
\\&\le \alpha+ \mathbb{E}\left|\frac{V_m(\hat\lambda_m)}{R_m(\hat\lambda_m)}-\frac{B_2(\hat\lambda_m)}{B_1(\hat\lambda_m)}\right|+\mathbb{E}\left|\frac{\hat D_m(\hat\lambda_m)}{\hat R_m(\hat\lambda_m)}-\frac{B_2(\hat\lambda_m)}{B_1(\hat\lambda_m)}\right|
\\&\le \alpha+ \mathbb{E}\sup\limits_{\lambda\in [\lambda_0,1]}\left|\frac{V_m(\lambda)}{R_m(\lambda)}-\frac{B_2(\lambda)}{B_1(\lambda)}\right|+\mathbb{E}\sup\limits_{\lambda\in [\lambda_0,1]}\left|\frac{\hat D_m(\lambda)}{\hat R_m(\lambda)}-\frac{B_2(\lambda)}{B_1(\lambda)}\right|
\end{aligned}
$$
Based on the following lemma, we prove the result that
$$\limsup\limits_{m\to\infty}\mbox{FDR}(\hat \lambda_m)\le \alpha.$$
\begin{lemma}
If $|X_m|\le C_{11}$, and $X_m\overset p\to 0$, then we have $$\limsup\limits_{m\to\infty} \mathbb E[X_m]\le0.$$
\end{lemma}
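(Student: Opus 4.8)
The plan is to prove this by a simple truncation argument that reduces everything to the defining property of convergence in probability, without invoking any measure-theoretic convergence theorem. Since the statement is one-sided, I only need to control the positive contribution to $E[X_m]$.

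First I would fix an arbitrary $\epsilon>0$ and split the expectation according to whether $|X_m|$ is small or large:
\[
E[X_m] = E\!\left[X_m\,\mathbf{I}\{|X_m|\le\epsilon\}\right] + E\!\left[X_m\,\mathbf{I}\{|X_m|>\epsilon\}\right].
\]
On the event $\{|X_m|\le\epsilon\}$ we have $X_m\le|X_m|\le\epsilon$, so the first term is at most $\epsilon\,P(|X_m|\le\epsilon)\le\epsilon$. On the complementary event I would use the uniform bound $|X_m|\le C$, which gives $X_m\,\mathbf{I}\{|X_m|>\epsilon\}\le C\,\mathbf{I}\{|X_m|>\epsilon\}$ and hence a second term bounded by $C\,P(|X_m|>\epsilon)$. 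Combining, $E[X_m]\le\epsilon+C\,P(|X_m|>\epsilon)$ for every $m$.

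Next I would take the limit superior in $m$. Because $X_m\overset{p}{\to}0$, by definition $P(|X_m|>\epsilon)\to0$, so the bad-event term vanishes and $\limsup_{m\to\infty}E[X_m]\le\epsilon$. As $\epsilon>0$ was arbitrary, letting $\epsilon\downarrow0$ yields $\limsup_{m\to\infty}E[X_m]\le0$, which is the claim.

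There is essentially no hard step here; the only point requiring care is keeping the inequalities one-sided and ordering the two limits correctly, namely sending $m\to\infty$ first with $\epsilon$ fixed so that convergence in probability eliminates the bounded bad-event term, and only afterward sending $\epsilon\downarrow0$. I would also note in passing that applying the identical argument to $-X_m$ gives $\liminf_{m\to\infty}E[X_m]\ge0$, so in fact $E[X_m]\to0$; this is the familiar fact that a uniformly bounded sequence converging in probability converges in $L^1$. Only the displayed one-sided conclusion is needed, however, in the bound on $\mathrm{FDR}(\hat\lambda_m)$ that completes the proof of Theorem \ref{fdr_theorem}.
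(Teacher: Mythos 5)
Your proof is correct, but it takes a genuinely different route from the paper's. The paper argues by subsequences: it first extracts a subsequence along which $\mathbb{E}[X_{m_k}]$ attains the limit superior, then uses the standard fact that convergence in probability yields a further subsequence with $X_{m_{k_l}}\overset{a.s.}{\to}0$, and finally applies the (reverse) Fatou lemma, justified by the uniform bound $|X_m|\le C$, to get $\limsup_m \mathbb{E}[X_m]\le \mathbb{E}[\limsup_l X_{m_{k_l}}]=0$. You instead avoid all measure-theoretic convergence machinery via the truncation bound $\mathbb{E}[X_m]\le \epsilon + C\,P(|X_m|>\epsilon)$, sending $m\to\infty$ with $\epsilon$ fixed and only then letting $\epsilon\downarrow 0$; the ordering of the two limits that you flag is indeed the only delicate point, and you handle it correctly. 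Your argument is more elementary and self-contained, resting only on the definition of convergence in probability and monotonicity of expectation, whereas the paper's subsequence-plus-Fatou template is the one that generalizes directly when the constant bound $C$ is weakened to domination by an integrable random variable or to uniform integrability (your truncation argument also extends to the uniformly integrable case, but then the large-deviation term needs a uniform-integrability estimate rather than the trivial bound $C\,P(|X_m|>\epsilon)$). Your closing remark that applying the same bound to $-X_m$ gives $\mathbb{E}[X_m]\to 0$, i.e.\ that a uniformly bounded sequence converging in probability converges in $L^1$, is also correct, and you rightly note that only the one-sided conclusion is used in bounding $\mathrm{FDR}(\hat\lambda_m)$.
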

\begin{proof}
We take a subsequence of $X_m$ such that $\limsup_m \mathbb E[X_m]=\lim_k \mathbb E[X_{m_k}]$. As $X_m\overset p \to 0$, we can further select a subsequence such that $X_{m_{k_l}}\overset {a.s.}\to 0$ when $l\to\infty$. (see p. 172 of \cite{path})

Then by Fatou's Lemma, because $|X_m|\le C_{11}$, we have $$\limsup_m\mathbb E[X_m]=\limsup_k\mathbb E[X_{m_k}]\le \mathbb E[\limsup_l X_{m_{k_l}}]=0.$$
\end{proof}
\subsection{Proof of Proposition \ref{thm:power}}
\label{sec: oracle_pow_proof}

For the rejection criteria $\delta(p_1,p_2)$, the power is
$$
E(\delta(p_1,p_2)\mid \theta_1\theta_2=1)=\int \delta(p_1,p_2)f_1(p_1)f_2(p_2)dp_1dp_2.
$$
The marginal false discovery rate, which is asymptotically the same as the false discovery rate when $m\to \infty$, is
{
\small
$$
\begin{aligned}
&E(1-\theta_1\theta_2\mid \delta(p_1,p_2)=1)\\=&\frac{P(\theta_1=0,\theta_2=0,\delta(p_1,p_2)=1)+P(\theta_1=1,\theta_2=0,\delta(p_1,p_2)=1)+P(\theta_1=0,\theta_2=1,\delta(p_1,p_2)=1)}{P(\delta(p_1,p_2)=1)}.
\end{aligned}
$$
}
Note that 
$$
\begin{aligned}P(\theta_1=i,\theta_2=j,\delta(p_1,p_2)=1)&=\xi_{ij}P(\delta(p_1,p_2)=1\mid \theta_1=i,\theta_2=j)
\\&=\xi_{ij}E(\delta(p_1,p_2)\mid \theta_1=i,\theta_2=j),
\end{aligned}
$$
where $\xi_{ij}=P(\theta_1=i,\theta_2=j)$.
We can obtain that
$$
E(1-\theta_1\theta_2\mid \delta(p_1,p_2)=1)=\frac{\int (\xi_{00}+\xi_{10}f_1(p_1)+\xi_{01}f_2(p_2))\delta(p_1,p_2) dp_1dp_2}{\int (\xi_{00}+\xi_{10}f_1(p_1)+\xi_{01}f_2(p_2)+\xi_{11}f_1(p_1)f_2(p_2))\delta(p_1,p_2) dp_1dp_2}.
$$
We want to find random variable $\delta(p_1,p_2)$ such that $E(\delta(p_1,p_2)\mid\theta_1\theta_2=1)$ is largest when $E(1-\theta_1\theta_2\mid \delta(p_1,p_2)=1)\le \alpha.$

We concentrate on the Lagrange function with Lagrange multiplier $\nu_2$
$$
\begin{aligned}&L(\alpha,\nu_2,\delta(p_1,p_2))\\=&-\int \delta(p_1,p_2)f_1(p_1)f_2(p_2)dp_1dp_2+\nu_2\left\{\int (\xi_{00}+\xi_{10}f_1(p_1)+\xi_{01}f_2(p_2))\delta(p_1,p_2) dp_1dp_2\right.
\\&\qquad-\left.\alpha \int (\xi_{00}+\xi_{10}f_1(p_1)+\xi_{01}f_2(p_2)+\xi_{11}f_1(p_1)f_2(p_2))\delta(p_1,p_2) dp_1dp_2\right\}
\\=&\int \{\nu_2 (1-\alpha)(\xi_{00}+\xi_{10}f_1(p_1)+\xi_{01}f_2(p_2))+(-1-\nu_2\alpha)f_1(p_1)f_2(p_2)\} \delta(p_1,p_2)dp_1dp_2.
\end{aligned}
$$
Let $\nu_2$ be fixed. $L(\alpha,\nu_2, \delta(p_1,p_2))$ is minimized when 
$$
\begin{aligned}
\delta(p_1,p_2)&=1\left\{\nu_2 (1-\alpha)(\xi_{00}+\xi_{10}f_1(p_1)+\xi_{01}f_2(p_2))+(-1-\nu_2 \alpha)f_1(p_1)f_2(p_2)\le0\right\}
\\&=1\{\mbox{Lfdr}\le\frac{1+\nu_2 \alpha}{1+\nu_2}\}.
\end{aligned}
$$
So that in the oracle setting, the rejection criteria $\delta(p_1,p_2)=1\{\mbox{Lfdr}(p_1,p_2)\le\tau\}$ will get the highest power when the FDR is controlled.

\subsection{Proof of results in section \ref{sec:extend}}
\subsubsection{Proof of Proposition \ref{thm:mfdr_bound}}
\label{sec: thm:mfdr_bound_proof}
Recall that $f(p_1,\dots,p_n\mid\theta)=\prod\limits_{j=1}^n f_j^{\theta_j}(p_j)$ is the conditional density function of the p-values given $\theta$, and $f_{jj'}(p_j,p_{j'})$ is the joint probability density function of $(p_j,p_{j'}), j\ne j', 1\le j, j'\le n$.

When $h(\theta)=1$, there exists $j$ such that $\theta_j=0$. Denote $\theta_{-j}=1$ as $\forall j'\ne j, \theta_{j'}=1$. For all $j'\ne j$, we have
$$
\begin{aligned}
&E(\delta(\lambda)\mid \theta)
\\=&E(\delta(\lambda)\mid \theta_j=0,\theta_{-j}=1)
\\=&\int 1\{\mbox{Lfdr}_{12}\le \lambda_{12},\dots,\mbox{Lfdr}_{n-1,n}\le \lambda_{n-1,n}\} f(p_1,\dots,p_n\mid \theta) dp_1,\dots,dp_n
\\\le&\int1\{\mbox{Lfdr}_{jj'}\le \lambda_{jj'} \} f(p_1,\dots,p_n\mid \theta) dp_1,\dots,dp_n
\\=&\int1\{\mbox{Lfdr}_{jj'}\le \lambda_{jj'} \} f(p_j,p_{j'}\mid \theta) dp_jdp_{j'}
\\=& \int 1\{\mbox{Lfdr}_{jj'}\le \lambda_{jj'} \}f_{j'}(p_{j'}) dp_jdp_{j'}\\=&E[\frac{1\{\mbox{Lfdr}_{jj'}\le \lambda_{jj'} \}f_{j'}(p_{j'}) }{f_{jj'}(p_{j},p_{j'})}].
\end{aligned}
$$
Consequently,
$$
E(\delta(\lambda)\mid \theta_j=0,\theta_{-j}=1)\le \min\limits_{j',j'\ne j}E[\frac{1\{\mbox{Lfdr}_{jj'}\le \lambda_{jj'} \}f_{j'}(p_{j'}) }{f_{jj'}(p_j,p_{j'})}].
$$

When $h(\theta)\ge 2$, there exist $j\ne j'$ with $\theta_j=\theta_{j'}=0$.
In this case, we have
$$
\begin{aligned}
  &E(\delta(\lambda)\mid \theta)
  \\=&\int 1\{\mbox{Lfdr}_{12}\le \lambda_{12},\dots,\mbox{Lfdr}_{n-1,n}\le \lambda_{n-1,n}\} f_{}(p_1,\dots,p_n\mid\theta) dp_1,\dots,dp_n  \\
  \le & \int 1\{\mbox{Lfdr}_{jj'}\le \lambda_{jj'} \} f_{}(p_1,\dots,p_n\mid \theta) dp_1\dots dp_n\\=&
  \int 1\{\mbox{Lfdr}_{jj'}\le \lambda_{jj'} \} f(p_j,p_{j'}\mid \theta)dp_jdp_{j'}\\=&
  \int 1\{\mbox{Lfdr}_{jj'}\le \lambda_{jj'} \} dp_jdp_{j'}\\=&E[\frac{1\{\mbox{Lfdr}_{jj'}\le \lambda_{jj'} \} }{f_{jj'}(p_j,p_{j'})}]\\\le& \max\limits_{\substack{j,j'=1,\ldots,n\\j\ne j'}} E[\frac{1\{\mbox{Lfdr}_{jj'}\le \lambda_{jj'} \} }{f_{jj'}(p_j,p_{j'})}].  
\end{aligned}
$$

We next provide an upper bound for the numerator of (\ref{eq:mfdr}). Denote 
\begin{equation}
    \pi_j=P(\theta_j=0,\theta_{-j}=1),\quad\text{ and }\quad \pi_{n+1}=P(h(\theta)\ge 2), j=1,\dots, n.
\end{equation}
Combining the upper bounds of $E(\delta(\lambda)\mid \theta)$ under $\Theta_1$ and $\Theta_2$ yields 
\begin{align}
&E(\delta(\lambda)\cdot1\{\mathcal H_0\}) \nonumber
\\\le&\sum\limits_{j=1}^n\pi_j E(\delta(\lambda)\mid \theta_j=0,\theta_{-j}=1)+\pi_{n+1}\max\limits_{\theta\in \Theta_2} E(\delta(\lambda)\mid \theta) \nonumber
\\\le & \sum\limits_{j=1}^n\pi_j\min\limits_{j',j'\ne j}E[\frac{1\{\operatorname{Lfdr}_{jj'}\le \lambda_{jj'} \}f_{j'}(p_{j'}) }{f_{jj'}(p_j,p_{j'})}]+\pi_{n+1}\max\limits_{\substack{j,j'=1,\ldots,n\\j\ne j'}} E[\frac{1\{\operatorname{Lfdr}_{jj'}\le \lambda_{jj'} \} }{f_{jj'}(p_j,p_{j'})}]. \nonumber
\end{align}


\subsubsection{Parameters estimation}
\label{sup:proportion}
Recall that $\pi_0=P(\theta_1=\cdots=\theta_n=1)$, $\pi_j=P(\theta_j=0,\theta_{-j}=1)$, and $\pi_{jj'}=P(\theta_{j}=\theta_{j'}=0,\theta_{-jj'}=1)$ where $j,j'=1,\dots, n$ and $j\ne j'$. By expanding the marginal distribution of $p$-values, we find that 
\begin{align*}
&P(\max\limits_{j=1,\ldots, n}p_{j}\le \tau)
\\=&\pi_0P(\max\limits_{j=1,\ldots, n}p_{j}\le \tau \mid\theta_1=\cdots=\theta_n=1)\\&+\sum\limits_{j=1}^n \pi_jP(\max\limits_{j'=1,\ldots, n}p_{j'}\le \tau  \mid \theta_j=0,\theta_{-j}=1)
\\&+\sum\limits_{j,j',j\ne j'} \pi_{jj'}P(\max\limits_{j''=1,\ldots, n}p_{j''}\le \tau  \mid \theta_j=\theta_{j'}=0,\theta_{-jj'}=1)
\\&+P_1
\\=&\pi_0P(\max\limits_{j=1,\ldots, n}p_{j}\le \tau \mid\theta_1=\cdots=\theta_n=1)\\&+\tau \sum\limits_{j=1}^n \pi_jP(\max\limits_{j',j'\ne j}p_{j'}\le \tau  \mid \theta_{-j}=1)
\\&+\tau ^2\sum\limits_{j,j',j\ne j'} \pi_{jj'}P(\max\limits_{j''\ne j,j'}p_{j''}\le \tau  \mid \theta_{-jj'}=1)
\\&+P_1,
\end{align*}
where 
\begin{equation}
  \label{temp:p1}
  P_1=P(\max\limits_{j=1,\ldots, n}p_j\le \tau,h(\theta)> 2) . 
\end{equation}
For the last equation, we utilize the properties that when the hidden states are given, the $p$-values are conditionally independent and that $p_j\mid\theta_j=0$ is uniformly distributed.
We also have 

\begin{align*}
&P(\max\limits_{j',j'\ne j}p_{j'}\le \tau)
\\=&(\pi_0+\pi_j)P(\max\limits_{j',j'\ne j}p_{j'}\le \tau\mid \theta_{-j}=1)
\\&+\sum\limits_{j',j'\ne j}(\pi_{j'}+\pi_{jj'})P(\max\limits_{j''\ne j,j'}p_{j''}\le \tau\mid \theta_{j'}=0 ,\theta_{-jj'}=1)
\\&+P_2(j)
\\=&(\pi_0+\pi_j)P(\max\limits_{j',j'\ne j}p_{j'}\le \tau\mid \theta_{-j}=1)
\\&+\tau\sum\limits_{j',j'\ne j}(\pi_{j'}+\pi_{jj'}) P(\max\limits_{j''\ne j,j'}p_{j''}\le \tau  \mid \theta_{-jj'}=1)
\\&+P_2(j),
\end{align*}
where 
\begin{equation}
\label{temp:p2}
P_2(j)=P(\max\limits_{j',j'\ne j}p_j\le \tau,h(\theta_{-j})\ge 2).
\end{equation}
When we omit the terms with $\tau^2$ and $P_1$ in (\ref{temp:p1}) and omit the terms with $\tau$ and $P_2(j)$ in (\ref{temp:p2}), $\pi_0$ is approximatly equal to 
$$
\tilde \pi_0=\frac{P(\max\limits_{j=1,\ldots, n}p_{j}\le \tau)-\tau\sum\limits_{j=1}^n P(\max\limits_{j',j'\ne j}p_{j'}\le \tau)}{P(\max\limits_{j=1,\ldots, n}p_{j}\le \tau\mid\theta_1=\cdots=\theta_n=1 )-\tau \sum\limits_{j=1}^n P(\max\limits_{j',j'\ne j}p_{j'}\le \tau\mid\theta_{-j}=1)}.
$$
We also approximate $\pi_j,j=1,\cdots, n$ by
$$
\tilde\pi_j=\frac{P(\max\limits_{j',j'\ne j}p_{j'}\le \tau)}{P(\max\limits_{j',j'\ne j}p_{j'}\le \tau\mid \theta_{-j}=1)}-\tilde \pi_0.
$$
Finally, we estimate $\pi_{n+1}=P(h(\theta)\ge 2)$ by $\tilde \pi_{n+1}=1-\sum\limits_{j=1}^n\tilde\pi_j-\tilde\pi_0$. 

In practice, we approximate $\tilde{\pi}_j$ using the estimated densities $\widehat{f}_j$. 
Specifically, we estimate the following probabilities: 
$P(\max\limits_{j=1,\ldots,n} p_j \le \tau ),
P(\max\limits_{j',j'\ne j} p_{j'} \le \tau ),
P(\max\limits_{j=1,\ldots,n} p_j \le \tau \mid \theta_1=\cdots=\theta_n=1),\text{ and }
P(\max\limits_{j',j'\ne j} p_{j'} \le \tau \mid \theta_{-j}=1)$
by
\begin{align}
\hat P(\max\limits_{j=1,\ldots,n}p_j\le\tau)&=\frac{1}{m}\sum\limits_{i=1}^m 1\{\max\limits_{j=1,\ldots,n} p_{ij}\le \tau\},\label{P_hat}
\\
\hat P(\max\limits_{j',j'\ne j}p_{j'}\le\tau)&=\frac{1}{m}\sum\limits_{i=1}^m 1\{\max\limits_{j',j'\ne j} p_{ij'}\le \tau\},\label{P1_hat}\\ 
\hat P(\max\limits_{j=1,\ldots,n}p_j\le\tau\mid \theta_1=\cdots=\theta_n=1)&=\prod\limits_{j=1}^n\left\{\int_0^\tau \hat f_j(p_j)dp_j\right\} \label{P_hat_c} ,\quad\text{and}\\\hat P(\max\limits_{j',j'\ne j}p_{j'}\le\tau\mid \theta_{-j}=1)&=\prod\limits_{j',j'\ne j}\left\{\int_0^\tau \hat f_{j'}(p_{j'})dp_{j'}\right\}.\label{P1_hat_c}
\end{align}

The proportions $\tilde{\pi}_0$ (defined in (\ref{eq:tilde_pi0})), $\tilde{\pi}_j$ ($j=1,\ldots,n$) (defined in (\ref{eq:tilde_pij})), and $\tilde{\pi}_{n+1}=1-\tilde\pi_0-\sum\limits_{j=1}^n\tilde \pi_j$ are then estimated by
\begin{align}
    \hat \pi_0&=\frac{\hat P(\max\limits_{j=1,\ldots,n}p_{j}\le \tau)-\tau\sum\limits_{j=1}^n \hat P(\max\limits_{j',j'\ne j}p_{j'}\le \tau)}{\hat P(\max\limits_{j=1,\ldots,n}p_{j}\le \tau\mid\theta_1=\cdots=\theta_n=1 )-\tau \sum\limits_{j=1}^n \hat P(\max\limits_{j',j'\ne j}p_{j'}\le \tau\mid\theta_{-j}=1)}, \label{pi_0_hat}\\
    \hat \pi_j&=\frac{\hat P(\max\limits_{j',j'\ne j}p_{j'}\le \tau)}{\hat P(\max\limits_{j',j'\ne j}p_{j'}\le \tau\mid \theta_{-j}=1)}-\hat \pi_0 ,\quad\text{and}\label{pi_j_hat}
    \\\hat\pi_{n+1}&=1-\hat\pi_0-\sum\limits_{j=1}^n \hat \pi_j. \label{pi_n_hat}
\end{align}

\subsubsection{An example where condition (C6) holds}
\label{sup:C6}
\begin{lemma}
    Let $X\sim N(\mu,1)$. Consider the one-sided $p$-value $P=1-\Phi(X)$, where $\Phi(\cdot)$ is the cumulative distribution function of the standard normal distribution. Denote $F$ to be the cumulative distribution function of $P$. Then we have
    $$\lim\limits_{x\to 0}\frac{F(x)}{x}=+\infty.$$
\end{lemma}
\begin{proof}
    In this case, the cumulative distribution function of the $p$-value is
    $$
    F(x)=P(1-\Phi(X)\le x)=P(X\ge \Phi^{-1}(1-x))
    =1-\Phi(\Phi^{-1}(1-x)-\mu),
    $$
    where $\Phi^{-1}(\cdot)$ is the quantile function of the standard normal distribution.
    We can calculate the corresponding probability density function as
    $$
    \begin{aligned}
    f(x)&=\frac{d F(x)}{dx}\\&=\frac{d \{1-\Phi(\Phi^{-1}(1-x)-\mu)\}}{dx}\\&=\frac{d \{1-\Phi(\Phi^{-1}(1-x)-\mu)\}}{d\Phi^{-1}(1-x)}\frac{d\Phi^{-1}(1-x)}{dx}
    \\&=-\phi(\Phi^{-1}(1-x)-\mu)\cdot \frac{d\Phi^{-1}(1-x)}{dx}.
    \end{aligned}
    $$
    Let $t=\Phi^{-1}(1-x)$, then $x=1-\Phi(t)$, and
    $$
    \begin{aligned}
        \frac{d\Phi^{-1}(1-x)}{dx}&=\frac{dt}{d\{1-\Phi(t)\}}
        \\&=\frac{1}{-\phi(t)}
        \\&=-\frac{1}{\phi(\Phi^{-1}(1-x))}.
    \end{aligned}
    $$
    Then we have
    $$
    f(x)=\frac{\phi(\Phi^{-1}(1-x)-\mu)}{\phi(\Phi^{-1}(1-x))}=\exp(\mu\Phi^{-1}(1-x)-\mu^2/2).
    $$
    Therefore, we obtain that
    $$
    \lim\limits_{x\to 0}\frac{F(x)}{x}=\lim\limits_{x\to 0}f(x)=\infty.
    $$  
    and condition (C6) holds. 
\end{proof}

\subsubsection{Space \texorpdfstring{$\Lambda$}{Lambda} is not empty}
\label{sup:lambda}
Here, $\Lambda$ is defined as 
$$
\Lambda=\{\lambda:\min\limits_{j=1,\ldots, n}E_j(\lambda)\ge E_0(\lambda)\},
$$
where $E_0(\lambda)$ and $E_j(\lambda)$ are defined in equations (\ref{eq:e0}) and (\ref{eq:ej}). Now we show that $\Lambda$ is not empty.

We first show that when there are only two studies $j$ and $j'$, for any $\lambda_{jj'}$, we have
\begin{equation}
\label{eq:tmp_2comp}
  E(1\{\mbox{Lfdr}_{jj'}\le \lambda_{jj'}\}\mid\theta_j=\theta_{j'}=0\})\le E(1\{\mbox{Lfdr}_{jj'}\le \lambda_{jj'}\}\mid\theta_j=1,\theta_{j'}=0\})  .
\end{equation}

Let
$$\gamma_{jj'}(x)=\int_0^11\{\mbox{Lfdr}_{jj'}(x,y)\le \lambda_{jj'}\}dy.$$ We find that $\gamma_{jj'}(x)$ is monotone non-increasing with respect to $x$ for a fixed $\lambda_{jj'}$. Then we have
$$
\begin{aligned}
&E(1\{\mbox{Lfdr}_{jj'}\le \lambda_{jj'}\}\mid\theta_j=\theta_{j'}=0\})\\=&\int_0^1\int_0^1 1\{\mbox{Lfdr}_{jj'}(x,y)\le \lambda_{jj'}\}dxdy\\=&\int_0^1 \gamma_{jj'}(x)dx.
\end{aligned}
$$
and
$$
\begin{aligned}
&E(1\{\mbox{Lfdr}_{jj'}\le \lambda_{jj'}\}\mid\theta_j=1,\theta_{j'}=0\})\\=&\int_0^1\int_0^1 1\{\mbox{Lfdr}_{jj'}(x,y)\le \lambda_{jj'}\}f_j(x)dxdy\\=&\int_0^1 \gamma_{jj'}(x)f_j(x)dx.
\end{aligned}
$$

Because $f_j(x)$ and $\gamma_{jj'}(x)$ are monotone non-increasing, when $x\ne x'$, $\{\gamma_{jj'}(x)-\gamma_{jj'}(x')\}\{f_j(x)-f_{j}(x')\}\ge 0$ and
$$
\begin{aligned}
0&\le \int_0^1\int_0^1 \{\gamma_{jj'}(x)-\gamma_{jj'}(x')\}\{f_j(x)-f_{j}(x')\} dxdx'\\&=2\{\int_0^1 \gamma_{jj'}(x)f_j(x)dx-\int_0^1 \gamma_{jj'}(x)dx\int_0^1 f_j(x)dx\}.
\end{aligned}
$$
Therefore $\int_0^1 \gamma_{jj'}(x)f(x)dx\ge  \int_0^1 \gamma_{jj'}(x)dx$ and the equation (\ref{eq:tmp_2comp}) holds.

Define $g_{jj'}(\lambda_{jj'})=E(1\{\mbox{Lfdr}_{jj'}\le \lambda_{jj'}\}\mid\theta_j=\theta_{j'}=0\})$. As $g_{jj'}(\lambda_{jj'})$ is continuous with respect to $\lambda_{jj'}$ and $g_{jj'}(0)=0, g_{jj'}(1)=1$. For any $\tau\in (0,1)$ there exists $\lambda_{jj'}$ such that $g_{jj'}(\lambda_{jj'})=\tau$. So we choose $\{\lambda_{jj'}\}$ such that $g_{jj'}(\lambda_{jj'})=\tau$ for any $j\ne j'$. Therefore, we have $E_0(\lambda)=\tau$. For any $j\ne j'$, $E(1\{\mbox{Lfdr}_{jj'}\le \lambda_{jj'}\}\mid\theta_j=1,\theta_{j'}=0\})  \ge \tau$, so that $\min\limits_{j}E_j(\lambda)\ge \tau=E_0(\tau)$. Thus $\Lambda$ is not empty.

We prove the continuity of $$g_{jj'}(\lambda_{jj'})=E[1\{\mbox{Lfdr}_{jj'}\le \lambda_{jj'}\}\mid \theta_{j}=\theta_{j'}=0].$$ 

Without loss of generality, we consider the two study case and define $$g(\lambda)=E(1\{\mbox{Lfdr}_{}\le \lambda_{}\mid \theta_1=\theta_2=0\})=\int1\{\mbox{Lfdr}(p_1,p_2)\le \lambda \} dp_1dp_2.$$ By the assumption (C4),
$$
B_2(\lambda)=\int1\{\mbox{Lfdr}(p_1,p_2)\le \lambda \} \mbox{Lfdr}(p_1,p_2)dp_1dp_2
$$
is continuous in $\lambda.$ Fix an arbitrary $\lambda_1 \in (0,1)$ and let $\varepsilon>0.$ By continuity of $B_2,$ there exists $ \delta<\lambda_1/2$ such that whenever $|\lambda_1-\lambda_2|<\delta$, 
$$
|B_2(\lambda_1) - B_2(\lambda_2)| <\varepsilon.
$$ 
Observe that
$$
\begin{aligned}
&|B_2(\lambda_1) - B_2(\lambda_2)|\\=&\left|\int1\{\mbox{Lfdr}(p_1,p_2)\le \lambda_1 \} \mbox{Lfdr}(p_1,p_2)dp_1dp_2-\int1\{\mbox{Lfdr}(p_1,p_2)\le \lambda_2 \} \mbox{Lfdr}(p_1,p_2)dp_1dp_2\right|
\\=&\left|\int[1\{\mbox{Lfdr}(p_1,p_2)\le \lambda_1 \}-\int1\{\mbox{Lfdr}(p_1,p_2)\le \lambda_2 \}] \mbox{Lfdr}(p_1,p_2)dp_1dp_2\right|
\\=&\int\left|1\{\mbox{Lfdr}(p_1,p_2)\le \lambda_1 \}-1\{\mbox{Lfdr}(p_1,p_2)\le \lambda_2 \}\right| \mbox{Lfdr}(p_1,p_2)dp_1dp_2.
\end{aligned}
$$
Note $\left|1\{\mbox{Lfdr}(p_1,p_2)\le \lambda_1 \}-1\{\mbox{Lfdr}(p_1,p_2)\le \lambda_2 \}\right|\ne 0$ only when $\min\{\lambda_1,\lambda_2\}\le  \mbox{Lfdr}(p_1,p_2)< \max\{\lambda_1,\lambda_2\}$. In this case, $\mbox{Lfdr}(p_1,p_2)>\min\{\lambda_1,\lambda_2\}>\frac{\lambda_1}{2}$.
Hence,
$$
\begin{aligned}
    &|B_2(\lambda_1) - B_2(\lambda_2)|
    \\=&\int\left|1\{\mbox{Lfdr}(p_1,p_2)\le \lambda_1 \}-1\{\mbox{Lfdr}(p_1,p_2)\le \lambda_2 \}\right| \mbox{Lfdr}(p_1,p_2)dp_1dp_2.
    \\>& \frac{\lambda_1}{2}\int\left|1\{\mbox{Lfdr}(p_1,p_2)\le \lambda_1 \}-1\{\mbox{Lfdr}(p_1,p_2)\le \lambda_2 \}\right| dp_1dp_2
    \\=&\frac{\lambda_1}{2}|g(\lambda_1)-g(\lambda_2)|.
\end{aligned}
$$
Therefore,
$$
|g(\lambda_1) - g(\lambda_2)| \le \frac{2}{\lambda_1}|B_2(\lambda_1) - B_2(\lambda_2)| < \frac{2\epsilon}{\lambda_1}
$$
This shows that $g(\lambda)$ is continuous at $\lambda_1$. Since $\lambda_1$ is abitrary, $g(\lambda)$ is continuous on $(0,1)$.

\subsubsection{Proof of Proposition \ref{thm:R_compare}}
\label{thm:R_compare_proof}
\begin{proof}
    Recall that $P_1$ and $P_2(j)$ are defined in (\ref{temp:p1}) and (\ref{temp:p2}), respectively. Also recall 
    $\pi_0=P(\theta_1=\cdots=\theta_n=1)$, $\pi_j=P(\theta_j=0,\theta_{-j}=1)$, and $\pi_{jj'}=P(\theta_j=\theta_{j'}=1,\theta_{-jj'}=0)$ for $j\ne j',j,j'=1,\cdots, n$, together with their estimators $\tilde\pi_{0}$ and $\tilde \pi_j$  defined in (\ref{eq:tilde_pi0}) and (\ref{eq:tilde_pij}). Set $\pi_{n+1}=1-\sum\limits_{j=1}^n\pi_j-\pi_0$, and 
    $\tilde \pi_{n+1}=1-\sum\limits_{j=1}^n\tilde \pi_j-\tilde \pi_0$. Denote $\delta(\lambda)=1\{\mbox{Lfdr}_{jj'}\le \lambda_{jj'},j,j'=1,\ldots, n,j\ne j'\}.$ Recall $E_0(\lambda)$ and $E_j(\lambda)$ in (\ref{eq:e0}) and (\ref{eq:ej}).
    
    We introduce $P_3$ and $P_4(j)$ as follows. 
    $$
    \begin{aligned}
    P_3&=\tilde \pi_0-\pi_0\\&=\frac{P_1-\tau\sum\limits_{j=1}^n P_2(j)-\tau^2\sum\limits_{j=1}^n\sum\limits_{j',j'\ne j}\pi_{j'}P(\max\limits_{j'',j''\ne j,j'}p_{j''}\le \tau\mid\theta_{-jj'}=1)}{P(\max\limits_{j=1,\ldots, n}p_{j}\le \tau\mid\theta_1=\cdots=\theta_n=1 )-\tau \sum\limits_{j=1}^n P(\max\limits_{j',j'\ne j}p_{j'}\le \tau\mid\theta_{-j}=1)},
    \end{aligned}
    $$
    and
    \begin{align}
    P_4(j)&=\tilde \pi_j-\pi_j\nonumber\\&=\frac{\tau \sum\limits_{j',j'\ne j}(\pi_{j'}+\pi_{jj'})P(\max\limits_{j'',j''\ne j,j'}p_{j''}\le \tau\mid \theta_{-jj'}=1)
    +P_2(j)}{P(\max\limits_{j',j'\ne j}p_{j'}\le \tau\mid \theta_{-j}=1)}-P_3 . \label{eq:P_4}
    \end{align}
    
    By (\ref{eq:original_t}) and (\ref{eq:R2}),
    $$
    \begin{aligned}
    \mathcal R_1(\lambda)&= \frac{\pi_{n+1} E_0(\lambda)+\sum\limits_{j=1}^n\pi_jE_j(\lambda)}{E\{\delta(\lambda)\}},\quad{\text{and}}\\
    \mathcal R_2(\lambda,\tau)&= \frac{\tilde \pi_{n+1} E_0(\lambda)+\sum\limits_{j=1}^n\tilde \pi_jE_j(\lambda)}{E\{\delta(\lambda)\}}.
    \end{aligned}
    $$
    Let $E_{\min}(\lambda)=\min\{E_1(\lambda),\cdots,E_n(\lambda)\}$ and consider
    $$
    \Lambda=\{\lambda: E_{\min}(\lambda)\ge E_0(\lambda )\}.
    $$
    For $\lambda\in \Lambda,$ we have $E_j(\lambda)-E_0(\lambda)\ge 0$. A direct calculation yields 
    $$
    \begin{aligned}
    &\{\mathcal R_2(\lambda,\tau)-\mathcal R_1(\lambda)\}\cdot E\{\delta(\lambda)\} 
    \\=& (\tilde \pi_{n+1}-\pi_{n+1})E_0(\lambda)+\sum\limits_{j=1}^n (\tilde \pi_j-\pi_j) E_j(\lambda)
    \\=& \{(1-\sum\limits_{j=1}^n\tilde\pi_j-\tilde\pi_0)-(1-\sum\limits _{j=1}^n\pi_j-\pi_0)\}E_0(\lambda)+\sum\limits_{j=1}^n (\tilde \pi_j-\pi_j) E_j(\lambda)
    \\=&[\sum\limits_{j=1}^n P_4(j) \{E_j(\lambda)-E_0(\lambda)\}]-P_3 E_0(\lambda).
    \end{aligned}
    $$
    Thus, to conclude $\mathcal R_2(\lambda,\tau)>\mathcal R_1(\lambda)$, it suffices to prove
    \begin{equation}
    \label{eq:R_comp}[\sum\limits_{j=1}^n P_4(j) \{E_j(\lambda)-E_0(\lambda)\}]-P_3 E_0(\lambda)> 0.
    \end{equation}

    Now we provide a bound of $P_3$. To obtain an upper bound for the numerator, we analyze its component $P_1$ first.
    Recall that $h(\theta)=\#\{j:\theta_j=0\}$, and $\pi_{jj'}=P(\theta_j=\theta_{j'}=0,\theta_{-jj'}=1)$ for $j,j'=1,\ldots,n,j\ne j'$.
   Consider the event $A_j=\{p_j\le \tau,\max\limits_{j',j'\ne j}p_{j'}\le \tau,\theta_j=0,h(\theta_{-j})\ge 2\}$. We find that $\{\max\limits_{j=1,\ldots, n}p_j\le \tau,h(\theta)>2  \}=\bigcup\limits_{j=1}^n A_j$. Therefore, we obtain that 
    $$
    \begin{aligned}
    P_1\le&\sum\limits_{j=1}^nP(A_j)
    \\=&\sum\limits_{j=1}^nP(\theta_j=0,h(\theta_{-j})\ge 2)P(p_j\le \tau,\max\limits_{j',j'\ne j}p_{j'}\le \tau\mid \theta_j=0,h(\theta_{-j})\ge 2)
    \\\le &\sum\limits_{j=1}^n P\{h(\theta_{-j})\ge 2\}\cdot\tau\cdot P(\max\limits_{j',j'\ne j}p_{j'}\le \tau\mid h(\theta_{-j})\ge 2)
    \\=&\tau\sum\limits_{j=1}^n P(\max\limits_{j',j'\ne j}p_{j'}\le \tau,h(\theta_{-j})\ge 2)
    \\=&\tau\sum\limits_{j=1}^n P_2(j).
    \end{aligned}
    $$

    The numerator of $P_3$ is upper-bounded by
    \begin{align}
    &P_1-\tau\sum\limits_{j=1}^n P_2(j)-\tau^2\sum\limits_{j=1}^n\sum\limits_{j',j'\ne j}\pi_{j'}P(\max\limits_{j''\ne j,j'}p_{j''}\le \tau\mid \theta_{-jj'}=1) \nonumber
    \\\le& -\tau^2\sum\limits_{j=1}^n\sum\limits_{j',j'\ne j}\pi_{j'}P(\max\limits_{j''\ne j,j'}p_{j''}\le \tau\mid \theta_{-jj'}=1)\\<&0.
    \end{align}

           The lower bound of the denominator of $P_3$ is 
    \begin{align}
         &P(\max\limits_{j=1,\ldots, n}p_{j}\le \tau\mid\theta_1=\cdots=\theta_n=1 )-\tau \sum\limits_{j=1}^nP(\max\limits_{j',j'\ne j}p_{j'}\le \tau\mid\theta_{-j}=1)\nonumber\\\ge& (F_{(1)}(\tau)-n\tau)F_{(2)}(\tau)\cdots F_{(n)}(\tau).\label{eq:P3_de}
    \end{align}
    By assumption (C6), $$\lim\limits_{\tau\to 0 } \frac{F_{\min}(\tau)}{\tau}= \infty,$$
    where $F_{min}(\tau)=\min\{F_1(\tau),\ldots,F_{n}(\tau)\}=F_{(1)}(\tau).$ Hence, there exists $\tau^*$ such that for all $\tau<\tau^*$, $\frac{F_{\min}(\tau)}{\tau}>2n$ for a fixed $n$. 
    Then we have $F_{(1)}(\tau)-n\tau>\frac{1}{2}F_{(1)}(\tau)$ and
    $$
    \begin{aligned}
&P(\max\limits_{j=1,\ldots, n}p_{j}\le \tau\mid\theta_1=\cdots=\theta_n=1 )-\tau \sum\limits_{j=1}^nP(\max\limits_{j',j'\ne j}p_{j'}\ge \tau\mid\theta_{-j}=1)
\\\ge& (F_{(1)}(\tau)-n\tau)F_{(2)}(\tau)\cdots F_{(n)}(\tau)
\\\ge& \frac{1}{2}\prod\limits_{i=1}^nF_{i}(\tau)
\\>&0.
\end{aligned}
$$
    
    Therefore $P_3< 0$. By definition (\ref{eq:P_4}), $P_4(j)>0$ for all $j=1,\ldots,n$, and therefore (\ref{eq:R_comp}) holds immediately.

\end{proof}
\subsubsection{Proof of Theorem \ref{thm:multi-fdr}}
\label{thm:multi-fdr_proof}
Before we prove Theorem \ref{thm:multi-fdr}, we need the following two lemmas.

\begin{lemma}
\label{lem: product_bound}
    If $a_1,\dots, a_n, b_1,\dots,b_n\in [0,1]$, we have $|\prod\limits_{i=1}^n a_i-\prod\limits_{i=1}^n b_i|\le\sum\limits_{i=1}^n|a_i-b_i|$.
\end{lemma}
\begin{proof}
    When $n=1$, it holds directly. We have
    $$
    \begin{aligned}
        &|\prod\limits_{i=1}^{n+1} a_i-\prod\limits_{i=1}^{n+1} b_i|\\=&|a_{n+1}(\prod\limits_{i=1}^{n} a_i-\prod\limits_{i=1}^{n} b_i)+\prod\limits_{i=1}^{n} b_i(a_{n+1}-b_{n+1})|
        \\\le& |\prod\limits_{i=1}^{n} a_i-\prod\limits_{i=1}^{n} b_i|+|a_{n+1}-b_{n+1}|.
    \end{aligned}
    $$
    By mathematical induction, we prove Lemma \ref{lem: product_bound}.
\end{proof}

\begin{lemma}
    \label{lem: min_max_bound}
    If $a_1,\cdots, a_n, b_1,\cdots,b_n\in \mathbb{R}$, we have
    $$
    \begin{aligned}
        |\min\limits_{i}a_i -\min\limits_{i}b_i|&\le \sum\limits_{i=1}^n |a_i-b_i|,\quad{\text{and}}\\
        |\max\limits_{i}a_i -\max\limits_{i}b_i|&\le \sum\limits_{i=1}^n |a_i-b_i|.
    \end{aligned}
    $$
\end{lemma}
\begin{proof}
    When $n = 1$, it holds directly.
    
    For any two values $x,y\in \mathbb{R}$, 
    $$\begin{aligned}
    \min\{x,y\}&=\frac{x+y}{2}-\frac{|x-y|}{2},\quad\text{and}\\\max\{x,y\}&=\frac{x+y}{2}+\frac{|x-y|}{2}.\end{aligned}$$

    Let $A_n=\min\limits_{i=1,\cdots n}a_i,$ and $B_n=\min\limits_{i=1,\cdots, n} b_i$. We have 
    $$
    \begin{aligned}
        &|\min\limits_{i=1,\cdots,{n+1}}a_i -\min\limits_{i=1,\cdots,{n+1}}b_i|
        \\=&\left|\frac{A_n+a_{n+1}}{2}-\frac{|A_n-a_{n+1}|}{2}-\frac{B_n+b_{n+1}}{2}+\frac{|B_n-b_{n+1}|}{2}\right|
        \\\le&\left|\frac{A_n+a_{n+1}}{2}-\frac{B_n+b_{n+1}}{2}\right|+\left|\frac{|A_n-a_{n+1}|}{2}-\frac{|B_n-b_{n+1}|}{2}\right|
        \\\le&\left|\frac{A_n+a_{n+1}}{2}-\frac{B_n+b_{n+1}}{2}\right|+\left|\frac{A_n-a_{n+1}}{2}-\frac{B_n-b_{n+1}}{2}\right|
    \\\le&\frac{|A_n-B_n|}{2}+\frac{|a_{n+1}-b_{n+1}|}{2}+\frac{|A_n-B_n|}{2}+\frac{|a_{n+1}-b_{n+1}|}{2}
    \\=&|A_n-B_n|+|a_{n+1}-b_{n+1}|.
    \end{aligned}
    $$
    By mathematical induction, $$|\min\limits_{i=1,\ldots, n}a_i -\min\limits_{i=1,\ldots,n}b_i|\le \sum\limits_{i=1}^n |a_i-b_i|.$$ In the same way, we can show that
    \[
    |\max\limits_{i=1,\ldots, n}a_i -\max\limits_{i=1,\ldots,n}b_i|\le \sum\limits_{i=1}^n |a_i-b_i|.
    \] 
    Thus, we prove the result.
\end{proof}

We need the following consistency results.
\begin{lemma}
\label{lem:pi_consistency}
For a fixed $n$ and for $j=1,\cdots, n$, if $\int_0^1 |\hat f_j(p)-f_j(p)|dp\to 0$, we have
$$
\begin{aligned}
\hat P(\max\limits_{j=1,\cdots,n}p_j\le\tau)&\overset{p}{\to} P(\max\limits_{j=1,\cdots,n}p_j\le\tau),\\
\hat P(\max\limits_{j',j'\ne j}p_{j'}\le\tau)&\overset{p}{\to} P(\max\limits_{j',j'\ne j}p_{j'}\le\tau),\\
\hat P(\max\limits_{j=1,\cdots,n}p_j\le\tau\mid \theta_1=\cdots=\theta_n=1)&\overset{p}{\to}P(\max\limits_{j=1,\cdots,n}p_j\le\tau\mid \theta_1=\cdots=\theta_n=1),\\
\hat P(\max\limits_{j',j'\ne j}p_{j'})&\overset{p}{\to}P(\max\limits_{j',j'\ne j}p_{j'}),
\end{aligned}
$$
$$\hat\pi_0\overset{p}{\to}\tilde \pi_0, \quad \hat \pi_j\overset{p}{\to} \tilde \pi_j,\quad\text{and}\quad \hat\pi_{n+1}\overset{p}{\to} \tilde \pi_{n+1},$$
where $\hat P(\max\limits_{j=1,\cdots,n}p_j\le\tau)$, $\hat P(\max\limits_{j',j'\ne j}p_{j'}\le\tau)$, $\hat P(\max\limits_{j=1,\cdots,n}p_j\le\tau\mid \theta_1=\cdots=\theta_n=1)$, $\hat P(\max\limits_{j',j'\ne j}p_{j'})$, $\hat\pi_0$, $\hat \pi_j$ and $\hat\pi_{n+1}$ are defined in equations (\ref{P_hat})-(\ref{pi_n_hat}), $\tilde \pi_0,\tilde\pi_j,$ and $\tilde\pi_{n+1}$ are defined in equations (\ref{eq:tilde_pi0}) and (\ref{eq:tilde_pij}).
\end{lemma}

\begin{proof}
By the law of large numbers, we have
$$
\hat P(\max\limits_{j=1,\cdots,n}p_j\le\tau)\overset{p}{\to} P(\max\limits_{j=1,\cdots,n}p_j\le\tau),
$$
and
$$
\hat P(\max\limits_{j',j'\ne j}p_{j'}\le\tau)\overset{p}{\to} P(\max\limits_{j',j'\ne j}p_{j'}\le\tau).
$$

Recall that 
$$
\begin{aligned}
P(\max\limits_{j=1,\cdots,n}p_j\le\tau\mid \theta_1=\cdots=\theta_n=1)&=\prod\limits_{j=1}^n\left\{\int_0^\tau f_j(p_j)dp_j\right\},
\\P(\max\limits_{j',j'\ne j}p_{j'}\le\tau\mid \theta_{-j}=1)&=\prod\limits_{j',j'\ne j}\left\{\int_0^\tau  f_{j'}(p_{j'})dp_{j'}\right\},
\\
\hat P(\max\limits_{j=1,\cdots,n}p_j\le\tau\mid \theta_1=\cdots=\theta_n=1)&=\prod\limits_{j=1}^n\left\{\int_0^\tau \hat f_j(p_j)dp_j\right\},\quad\text{and}
\\\hat P(\max\limits_{j',j'\ne j}p_{j'}\le\tau\mid \theta_{-j}=1)&=\prod\limits_{j',j'\ne j}\left\{\int_0^\tau \hat f_{j'}(p_{j'})dp_{j'}\right\}.
\end{aligned}
$$
By Lemma \ref{lem: product_bound}, we have
$$
\begin{aligned}
&|P(\max\limits_{j=1,\cdots,n}p_j\le\tau\mid \theta_1=\dots=\theta_n=1)-\hat P(\max\limits_{j=1,\cdots,n}p_j\le\tau\mid \theta_1=\dots=\theta_n=1)|
\\\le&\sum\limits_{j=1}^n\left\{\int_0^\tau |f_j(p_j)-\hat f_j(p_j)|dp_j\right\}\overset p \to 0,
\end{aligned}
$$
and
$$
\begin{aligned}
&|P(\max\limits_{j',j'\ne j}p_{j'}\le\tau\mid \theta_{-j}=1)-\hat P(\max\limits_{j',j'\ne j}p_{j'}\le\tau\mid \theta_{-j}=1)|
\\\le&\sum\limits_{j',j'\ne j}\left\{\int_0^\tau |f_{j'}(p_{j'})-\hat f_{j'}(p_{j'})|dp_{j'}\right\}\overset{p}{\to}0.
\end{aligned}
$$

By continuous mapping theorem, we have 
$$\hat\pi_0\overset p\to\tilde \pi_0, \quad \hat \pi_j\overset p\to \tilde \pi_j,j=1\ldots,n,\quad\text{and}\quad \hat\pi_{n+1}\overset p\to \tilde \pi_{n+1}.$$
Thus, we prove the result.
\end{proof}

We now prove the following uniform consistency results. 

\begin{lemma}
    Under the assumptions of Theorem~\ref{thm:multi-fdr}, for any $1\le j,j'\le n$ and $j\ne j$,  we have
    $$
    \begin{aligned}
        \sup\limits_{\lambda_{jj'}\in [\lambda_0,1]}\left|\frac{1}{m}\sum\limits_{i=1}^m \frac{1\{\widehat{\mbox{Lfdr}}_{i,jj'}\le \lambda_{jj'}\}\hat f_{i,j}}{\hat f_{i,jj'}}- E\{ \frac{1\{\mbox{Lfdr}_{jj'}\le \lambda_{jj'}\} f_{j}}{f_{jj'}}\}\right|&\overset p\to 0, \quad\text{and}\\
        \sup\limits_{\lambda_{jj'}\in [\lambda_0,1]}|\frac{1}{m}\sum\limits_{i=1}^m \frac{1\{\widehat{\mbox{Lfdr}}_{i,jj'}\le \lambda_{jj'}\}}{\hat f_{i,jj'}}- E\{ \frac{1\{\mbox{Lfdr}_{jj'}\le \lambda_{jj'}\} }{f_{jj'}}\}|&\overset p\to 0.
    \end{aligned}
    $$    
where $\lambda_0$ is defined in assumption (C5).
\end{lemma}
\begin{proof}
    We already show in the proof of Theorem \ref{fdr_theorem} that $\forall j\ne j',1\le j,j'\le n,$
\begin{equation}
\label{eq:lfdr_pair}
  \sup\limits_{\lambda_{jj'}\in [\lambda_0,1]}\frac{1}{m}\sum\limits_{i=1}^m|1\{\widehat {\mbox{Lfdr}}_{i,jj'}\le \lambda_{jj'}\}-1\{\mbox{Lfdr}_{i,jj'}\le \lambda_{jj'}\}|\overset p \to 0. 
\end{equation}

By the proof of the equation (\ref{eq:lfdr_part}) , we have
$$
\frac{1}{m}\sum\limits_{i=1}^m|\frac{\hat f_{i,j}}{\hat f_{i,jj'}}-\frac{f_{i,j}}{f_{i,jj'}}|\overset p \to 0,
$$
and
$$
\frac{1}{m}\sum\limits_{i=1}^m|\frac{1}{\hat f_{i,jj'}}-\frac{1}{f_{i,jj'}}|\overset p \to 0.
$$

For each study pair $(j,j')$ with $j \neq j'$, let 
$\xi_{00,jj'}, \xi_{10,jj'}, \xi_{01,jj'}, \xi_{11,jj'}$ 
denote the four mixture weights (corresponding to the two-study configuration
$00,10,01,11$) in that pair, which are bounded below by a small positive constant $l$.

Therefore, we obtain that
$$
\begin{aligned}
&\left|\frac{1}{m}\sum\limits_{i=1}^m \frac{1\{\widehat{\mbox{Lfdr}}_{i,jj'}\le \lambda_{jj'}\}\hat f_{i,j}}{\hat f_{i,jj'}}- \frac{1}{m}\sum\limits_{i=1}^m\frac{1\{\mbox{Lfdr}_{i,jj'}\le \lambda_{jj'}\} f_{i,j}}{f_{i,jj'}}\right|
\\\le &\frac{1}{m}\sum\limits_{i=1}^m\left| \frac{1\{\widehat{\mbox{Lfdr}}_{i,jj'}\le \lambda_{jj'}\}\hat f_{i,j}}{\hat f_{i,jj'}}-\frac{1\{\mbox{Lfdr}_{i,jj'}\le \lambda_{jj'}\} f_{i,j}}{f_{i,jj'}}\right|
\\\le& \frac{1}{m}\sum\limits_{i=1}^m\left| \frac{1\{\widehat{\mbox{Lfdr}}_{i,jj'}\le \lambda_{jj'}\}\hat f_{i,j}}{\hat f_{i,jj'}}-\frac{1\{\widehat{\mbox{Lfdr}}_{i,jj'}\le \lambda_{jj'}\}f_{i,j}}{ f_{i,jj'}}\right.\\&\qquad+\left.\frac{1\{\widehat{\mbox{Lfdr}}_{i,jj'}\le \lambda_{jj'}\}f_{i,j}}{f_{i,jj'}}-\frac{1\{\mbox{Lfdr}_{i,jj'}\le \lambda_{jj'}\} f_{i,j}}{f_{i,jj'}}\right|
\\\le& \frac{1}{m}\sum\limits_{i=1}^m\left| (\frac{\hat f_{i,j}}{\hat f_{i,jj'}}-\frac{f_{i,j}}{ f_{i,jj'}})1\{\widehat{\mbox{Lfdr}}_{i,jj'}\le \lambda_{jj'}\}\right|\\&\qquad+\frac{1}{m}\sum\limits_{i=1}^m\left|(1\{\widehat{\mbox{Lfdr}}_{i,jj'}\le \lambda_{jj'}\}-1\{\mbox{Lfdr}_{i,jj'}\le\lambda_{jj'}\})\frac{f_{i,j}}{f_{i,jj'}}\right|
\\\le& \frac{1}{m}\sum\limits_{i=1}^m\left|\frac{\hat f_{i,j}}{\hat f_{i,jj'}}-\frac{f_{i,j}}{ f_{i,jj'}}\right|+\frac{1}{m}\frac{1}{l}\sum\limits_{i=1}^m\left|1\{\widehat{\mbox{Lfdr}}_{i,jj'}\le \lambda_{jj'}\}-1\{\mbox{Lfdr}_{i,jj'}\le \lambda_{jj'}\}\right|.
\end{aligned}
$$
The last inequality holds because of the assumption $\xi_{10,jj'}>l$, and
$$
\begin{aligned}
\frac{f_{i,j}}{f_{i,jj'}}&=\frac{f_{i,j}}{\xi_{00,jj'}+\xi_{10,jj'}f_{i,j}+\xi_{01,jj'}f_{i,j'}+\xi_{11,jj'}f_{i,j}f_{i,j'}}
\\&\le \frac{f_{i,j}}{\xi_{10,jj'}f_{i,j}}
\\&=\frac{1}{\xi_{10,jj'}}\\&\le \frac{1}{l}.
\end{aligned}
$$

Thus, we have
$$
\sup\limits_{\lambda_{jj'}\in [\lambda_0,1]}\left|\frac{1}{m}\sum\limits_{i=1}^m \frac{1\{\widehat{\mbox{Lfdr}}_{i,jj'}\le \lambda_{jj'}\}\hat f_{i,j}}{\hat f_{i,jj'}}- \frac{1}{m}\sum\limits_{i=1}^m\frac{1\{\mbox{Lfdr}_{i,jj'}\le \lambda_{jj'}\} f_{i,j}}{f_{i,jj'}}\right|\overset p\to 0.
$$

By the Uniformly Law of Large Numbers (Lemma 3.1 in \citep{van2000empirical}), we have 
$$
\sup\limits_{\lambda_{jj'}\in [\lambda_0,1]}|\frac{1}{m}\sum\limits_{i=1}^m\frac{1\{\mbox{Lfdr}_{i,jj'}\le \lambda_{jj'}\} f_{i,j}}{f_{i,jj'}}- E\{ \frac{1\{\mbox{Lfdr}_{jj'}\le \lambda_{jj'}\} f_{j}}{f_{jj'}}\}|\overset p\to 0.
$$
Therefore, we have 
\begin{equation}
    \label{eq:fj}
    \sup\limits_{\lambda_{jj'}\in [\lambda_0,1]}|\frac{1}{m}\sum\limits_{i=1}^m \frac{1\{\widehat{\mbox{Lfdr}}_{i,jj'}\le \lambda_{jj'}\}\ \hat f_{i,j}}{\hat f_{i,jj'}}- E\{ \frac{1\{\mbox{Lfdr}_{jj'}\le \lambda_{jj'}\} f_{j}}{f_{jj'}}\}|\overset p\to 0.
\end{equation}

In the same way, we can show that $\forall j\ne j',1\le j,j'\le n$,
\begin{equation}
\label{eq:f0}
    \sup\limits_{\lambda_{jj'}\in [\lambda_0,1]}|\frac{1}{m}\sum\limits_{i=1}^m \frac{1\{\widehat{\mbox{Lfdr}}_{i,jj'}\le \lambda_{jj'}\}}{\hat f_{i,jj'}}- E\{ \frac{1\{\mbox{Lfdr}_{jj'}\le \lambda_{jj'}\} }{f_{jj'}}\}|\overset p\to 0.
\end{equation}
Thus, we prove the lemma.
\end{proof}

Let $D=n(n-1)/2$ be the number of pairwise studies, and 
\begin{equation}
    \label{eq:lambda-all}
    \lambda=(\lambda_{12},\ldots,\lambda_{n-1,n}).
\end{equation}

Now we show that 
$$
\sup\limits_{\lambda\in[\lambda_0,1]^D}\left|\frac{1}{m}\sum\limits_{i=1}^m\left[\prod\limits_{j,j',j\ne j'}1\{\widehat {\mbox{Lfdr}}_{i,jj'}\le \lambda_{jj'}\}\right]- E\left[\prod\limits_{j,j',j\ne j'}1\{{\mbox{Lfdr}}_{i,jj'}\le \lambda_{jj'}\}\right]\right|\overset p\to 0.
$$ 
This can be obtained 
by the following inequality
$$
\begin{aligned}
&\sup\limits_{\lambda\in[\lambda_0,1]^D}\frac{1}{m}\sum\limits_{i=1}^m\left|\prod\limits_{j,j',j\ne j'}1\{\widehat {\mbox{Lfdr}}_{i,jj'}\le \lambda_{jj'}\}-\prod\limits_{j,j',j\ne j'}1\{\mbox{Lfdr}_{i,jj'}\le \lambda_{jj'}\}\right|\\\le& \sum\limits_{j,j',j\ne j'}\sup\limits_{\lambda\in[\lambda_0,1]^D}\{\frac{1}{m}\sum\limits_{i=1}^m|1\{\widehat{\mbox{Lfdr}}_{i,jj'}\le \lambda_{jj'}\}-1\{\mbox{Lfdr}_{i,jj'}\le \lambda_{jj'}\}|\}\overset p \to 0,
\end{aligned}
$$
which is obtained by the Lemma \ref{lem: product_bound} and the equation (\ref{eq:lfdr_pair}).

Taking $\lambda$ in (\ref{eq:lambda-all}), we want to show that for a specific $\tau$
$$
\sup\limits_{\lambda\in [\lambda_0,1]^{D}}|\mathcal R_2(\lambda,\tau)-\mathcal R_3(\lambda,\tau)|\overset p\to 0,
$$
where $\mathcal R_2(\lambda,\tau)$ is defined in (\ref{eq:R2}), and $\mathcal R_3(\lambda,\tau)$ is defined in (\ref{eq:R_3})

By equation (\ref{eq:fj}) and (\ref{eq:f0}), and Lemma \ref{lem: min_max_bound}, we have
$$
\sup\limits_{\lambda\in [\lambda_0,1]^{D}}\left|\max\limits_{j,j',j\ne j'}\frac{1}{m}\sum\limits_{i=1}^m\frac{1\{\widehat{\mbox{Lfdr}}_{i,jj'}(p_{i,j},p_{i,j'})\le \lambda_{jj'}\}}{\hat{f}_{jj'}(p_{i,j},p_{i,j'})}-\max\limits_{j,j',j\ne j'}E\{ \frac{1\{\mbox{Lfdr}_{jj'}\le \lambda_{jj'}\} }{f_{jj'}}\}\right|\overset p\to 0,
$$
and
$$
\sup\limits_{\lambda\in [\lambda_0,1]^{D}}\left|\min\limits_{j,j',j\ne j'}\frac{1}{m}\sum\limits_{i=1}^m\frac{1\{\widehat{\mbox{Lfdr}}_{i,jj'}(p_{i,j},p_{i,j'})\le \lambda_{jj'}\}\hat f_{j'}(p_{i,j'})}{\hat{f}_{jj'}(p_{i,j},p_{i,j'})}-\min\limits_{j,j',j\ne j'}E\{ \frac{1\{\mbox{Lfdr}_{jj'}\le \lambda_{jj'}\} f_{j'}(p_{j'})}{f_{jj'}}\}\right|\to 0.
$$

Note that 
$$
1\{\widehat{\mbox{Lfdr}}_{i,jj'}\le \lambda_{jj'},\forall j\ne j',1\le j,j'\le n\}=\prod\limits_{j,j',j\ne j'}1\{\widehat{\mbox{Lfdr}}_{i,jj'}\le \lambda_{jj'}\},
$$
and 
$$
\delta(\lambda)=1\{\mbox{Lfdr}_{i,jj'}\le \lambda_{jj'},\forall j\ne j',1\le j,j'\le n\}=\prod\limits_{j,j',j\ne j'}1\{\mbox{Lfdr}_{i,jj'}\le \lambda_{jj'}\}.
$$
We have
$$
\begin{aligned}
&|\frac{1}{m}\sum\limits_{i=1}^m 1\{\widehat{\mbox{Lfdr}}_{i,jj'}\le \lambda_{jj'},\forall j\ne j',1\le j,j'\le n\}-\frac{1}{m}\sum\limits_{i=1}^m 1\{\mbox{Lfdr}_{i,jj'}\le \lambda_{jj'},\forall j\ne j',1\le j,j'\le n\}|
\\= &\left|\frac{1}{m}\sum\limits_{i=1}^m\prod\limits_{j,j',j\ne j'}1\{\widehat{\mbox{Lfdr}}_{i,jj'}\le\lambda_{jj'}\}-\frac{1}{m}\sum\limits_{i=1}^m\prod\limits_{j,j',j\ne j'}1\{{\mbox{Lfdr}}_{i,jj'}\le\lambda_{jj'}\}\right|
\\\le &\frac{1}{m}\sum\limits_{i=1}^m\left|\prod\limits_{j,j',j\ne j'}1\{\widehat{\mbox{Lfdr}}_{i,jj'}\le\lambda_{jj'}\}-\prod\limits_{j,j',j\ne j'}1\{{\mbox{Lfdr}}_{i,jj'}\le\lambda_{jj'}\}\right|
\\\le &\sum\limits_{j,j',j\ne j'}\{\frac{1}{m}\sum\limits_{i=1}^m\left|1\{{\mbox{Lfdr}}_{i,jj'}\le\lambda_{jj'}\}-1\{\widehat{\mbox{Lfdr}}_{i,jj'}\le\lambda_{jj'}\}\right|\}
\\\overset p \to&0.
\end{aligned}
$$
The Uniformly Law of Large Numbers (Lemma 3.1 in \citep{van2000empirical}) ensures that 
$$
\sup\limits_{\lambda\in [\lambda_0,1]^{D}}|\frac{1}{m}\sum\limits_{i=1}^m1\{{\mbox{Lfdr}}_{i,jj'}\le \lambda_{jj'},\forall j\ne j',1\le j,j'\le n\}-E(\delta(\lambda))|\to 0.
$$
Thus we have
$$
\sup\limits_{\lambda\in [\lambda_0,1]^{D}}|\frac{1}{m}\sum\limits_{i=1}^m1\{\widehat{\mbox{Lfdr}}_{i,jj'}\le \lambda_{jj'},\forall j\ne j',1\le j,j'\le n\}-E(\delta(\lambda))|\to 0.
$$

As a result, the continuous mapping theorem implies that
$$
\sup\limits_{\lambda\in [\lambda_0,1]^{D}}|\mathcal R_2(\lambda,\tau)-\mathcal R_3(\lambda,\tau)|\to 0.
$$

The remaining proof is to derive the asymptotic FDR bound.

For any $j\ne j',1\le j, j'\le n$, let 
$$
\begin{aligned}
\hat V_m(\lambda)&=\frac{1}{m}\sum\limits_{i=1}^m 1\{\widehat{\mbox{Lfdr}}_{i,jj'}\le \lambda_{jj'},\forall j\ne j',1\le j,j'\le n\}1\{i\in \mathcal H_0\}\\
V_m(\lambda)&=\frac{1}{m}\sum\limits_{i=1}^m 1\{\mbox{Lfdr}_{i,jj'}\le \lambda_{jj'},\forall j\ne j',1\le j,j'\le n\}1\{i\in \mathcal H_0\}\\
\hat R_m(\lambda)&=\frac{1}{m}\sum\limits_{i=1}^m 1\{\widehat{\mbox{Lfdr}}_{i,jj'}\le \lambda_{jj'},\forall j\ne j',1\le j,j'\le n\}\\
 R_m(\lambda)&=\frac{1}{m}\sum\limits_{i=1}^m 1\{\mbox{Lfdr}_{i,jj'}\le \lambda_{jj'},\forall j\ne j',1\le j,j'\le n\}.
\end{aligned}$$

By lemma \ref{lem: product_bound}, we have
$$
\begin{aligned}
   &|\hat V_m(\lambda)-V_m(\lambda)|\\=&|\frac{1}{m}\sum\limits_{i=1}^m 1\{i\in \mathcal H_0\}\prod\limits_{j,j',j\ne j'}(1\{\widehat{\mbox{Lfdr}}_{i,jj'}\le\lambda_{i,jj'}\})-\frac{1}{m}\sum\limits_{i=1}^m 1\{i\in \mathcal H_0\}\prod\limits_{j,j',j\ne j'}(1\{\mbox{Lfdr}_{i,jj'}\le\lambda_{i,jj'}\})| 
   \\\le& \frac{1}{m}\sum\limits_{i=1}^m\sum\limits_{j,j',j\ne j'}|1\{\widehat{\mbox{Lfdr}}_{i,jj'}\le\lambda_{i,jj'}\}-1\{\mbox{Lfdr}_{i,jj'}\le\lambda_{i,jj'}\}| .
\end{aligned}
$$
So that $\sup\limits_{\lambda\in [\lambda_0,1]^D}|\hat V_m(\lambda)-V_m(\lambda)|\to 0$. In the same way $\sup\limits_{\lambda\in [\lambda_0,1]^D}|\hat R_m(\lambda)-R_m(\lambda)|\to 0$. 

We also obtain that
$$
\begin{aligned}
&|\frac{\hat V_m(\lambda)}{\hat R_m(\lambda)}-\frac{V_m(\lambda)}{R_m(\lambda)}|
\\=&|\frac{\hat V_m(\lambda)R_m(\lambda)-V_m(\lambda)\hat R_m(\lambda)}{R_m(\lambda)\hat R_m(\lambda)}|
\\=&|\frac{\hat V_m(\lambda)R_m(\lambda)-\hat V_m(\lambda)\hat R_m(\lambda)+\hat V_m(\lambda)\hat R_m(\lambda)-V_m(\lambda)\hat R_m(\lambda)}{R_m(\lambda)\hat R_m(\lambda)}|
\\\le &|\frac{\hat V_m(\lambda)|R_m(\lambda)-\hat R_m(\lambda)|}{R_m(\lambda)\hat R_m(\lambda)}|+|\frac{\hat R_m(\lambda)|\hat V_m(\lambda)-V_m(\lambda)|}{R_m(\lambda)\hat R_m(\lambda)}|
\\\le &\frac{1}{R_m(\lambda)}\{\frac{\hat V_m(\lambda)}{\hat R_m(\lambda)}|R_m(\lambda)-\hat R_m(\lambda)|+|\hat V_m(\lambda)-V_m(\lambda)|\}.
\\\le &\frac{1}{R_m(\lambda)}\{|R_m(\lambda)-\hat R_m(\lambda)|+|\hat V_m(\lambda)-V_m(\lambda)|\}.
\end{aligned}
$$
Therefore, 

\begin{align}
&\sup\limits_{\lambda\in [\lambda_0,1]^D}|\frac{\hat V_m(\lambda)}{\hat R_m(\lambda)}-\frac{V_m(\lambda)}{R_m(\lambda)}| \nonumber
\\\le &\frac{1}{R_m\{(\lambda_0,\ldots, \lambda_0)\}}\{\sup\limits_{\lambda\in [\lambda_0,1]^D}|R_m(\lambda)-\hat R_m(\lambda)|+\sup\limits_{\lambda\in [\lambda_0,1]^D}|\hat V_m(\lambda)-V_m(\lambda)|\}\nonumber
\\\overset p \to&0. \label{eq:hatVhatR}
\end{align}

Finally, by (C4), we have $P[\hat R_m\{\hat\lambda(\hat\alpha_p)\}>0]\to 1$, and
$$
\begin{aligned}
\mbox{FDR}&=E\left(\frac{\hat V_m (\hat \lambda(\hat\alpha_p))}{\hat R_m(\hat\lambda(\hat\alpha_p))\vee1}\right)
\\&=E\{\frac{\hat V_m (\hat \lambda(\hat\alpha_p))}{\hat R_m(\hat\lambda(\hat\alpha_p))}-\frac{V_m (\hat \lambda(\hat\alpha_p))}{R_m(\hat\lambda(\hat\alpha_p))}\}+E\frac{ V_m (\hat \lambda(\hat\alpha_p))}{ R_m(\hat\lambda(\hat\alpha_p))}.
\end{aligned}
$$
Therefore, 

Based on the results
$$
\sup\limits_{\lambda\in [\lambda_0,1]^D}|\frac{V_m(\lambda)}{R_m(\lambda)}-\frac{E(\delta(\lambda)1\{\mathcal H_0\})}{E(\delta(\lambda))}|\to 0,
$$
$$
\frac{E(\delta(\lambda)1\{\mathcal H_0\})}{E(\delta(\lambda))}< \mathcal R_1(\lambda),
$$
$$
\mathcal R_1(\lambda)\le \mathcal R_2(\lambda,\tau), \lambda\in \Lambda,
$$
and
$$
\sup\limits_{\lambda\in [\lambda_0,1]^{D}}|\mathcal R_3(\lambda,\tau)-\mathcal R_2(\lambda,\tau)|\to 0,
$$
for a specific $\tau$, we have for any $\lambda\in \Lambda\cap[\lambda_0,1]^D$
$$
P(\sup\limits_{\lambda\in \Lambda\cap[\lambda_0,1]^D}\{\frac{V_m(\lambda)}{R_m(\lambda)}-R_3(\lambda,\tau)\}<0)\to 1.
$$

Finally, when $\hat\lambda (\hat \alpha_p)\in \Lambda$,  we have
$$
\begin{aligned}
\limsup\limits_{m\to\infty} FDR&=\limsup\limits_{m\to\infty}E\left(\frac{\hat V_m (\hat \lambda(\hat\alpha_p))}{\hat R_m(\hat\lambda(\hat\alpha_p))\vee1}\right)
\\&=\limsup\limits_{m\to\infty}E|\frac{\hat V_m (\hat \lambda(\hat\alpha_p))}{\hat R_m(\hat\lambda(\hat\alpha_p))}-\frac{V_m (\hat \lambda(\hat\alpha_p))}{R_m(\hat\lambda(\hat\alpha_p))}|+\limsup\limits_{m\to\infty}E\frac{ V_m (\hat \lambda(\hat\alpha_p))}{ R_m(\hat\lambda(\hat\alpha_p))}
\\&\le \lim\limits_{m\to\infty}E\sup\limits_{\lambda\in [\lambda_0,1]}|\frac{\hat V_m (\lambda)}{\hat R_m(\lambda)}-\frac{V_m (\lambda)}{R_m(\lambda)}|+\limsup\limits_{m\to\infty}\mathcal R_3(\hat\lambda(\hat\alpha_p))
\\&\le\alpha.
\end{aligned}
$$
The last inequality is based on (\ref{eq:hatVhatR}) and the definition of $\hat \lambda(\hat\alpha_p)$, which is defined in (\ref{eq:alpha_hat}).

\section{Competing methods}\label{sec:comparison_methods}
In the simulation studies, we compared the proposed method to several other methods for replicability analysis, including \textit{ad hoc} BH, MaxP \citep{benjamini2009selective}, IDR \citep{li2011measuring}, MaRR \citep{philtron2018maximum}, radjust \citep{bogomolov2018assessing}, AdaFilter \citep{wang2022detecting}, and JUMP \citep{lyu2023jump}. Let $(p_{i1}, p_{i2}), i=1,\dots,m$ denote the paired $p$-values from two studies. We review these competing methods as follows.

\subsection{The \textit{ad hoc} BH method}
BH \citep{benjamini1995controlling} is the most popular multiple testing procedure that conservatively controls the FDR for $m$ independent or positively correlated tests. In study $j, j=1,2$, the BH procedure proceeds as below:
\begin{itemize}
    \item \textit{Step 1}. Let $p_{j(1)}\le p_{j(2)}\le \dots \le p_{j(m)}$ be the ordered $p$-values in study $j$, and denote by $H^{(j)}_{(i)}$ the null hypothesis corresponding to $p_{j(i)}$;
    \item \textit{Step 2}. Find the largest $i$ such that $p_{j(i)}\le \frac{i}{m}\alpha$, i.e., $\hat k=\mbox{max}\{i\ge 1: p_{j(i)}\le \frac{i}{m}\alpha\}$, and $\hat k = 0$ if the set is empty;
    \item \textit{Step 3}. Reject all $H^{(j)}_{(i)}$ for $i=1,\dots,\hat k$.
\end{itemize}

The \textit{ad hoc} BH method for replicability analysis identifies features rejected by both studies as replicable signals.

\subsection{The MaxP method}
Define the maximum $p$-values as
\begin{equation*}
    q_i = \mbox{max}\{p_{i1}, p_{i2}\}, i=1,\dots,m.
\end{equation*}
The na{\"i}ve MaxP method directly applies BH \citep{benjamini1995controlling} to $q_i, i=1,\dots, m$ for FDR control of replicability analysis.

\subsection{The IDR procedure}
The IDR procedure \cite{li2011measuring} deals with high throughput experimental data from two studies. For feature $i,$ we have the bivariate observations $(x_{i1},x_{i2}), i=1,\dots,m$. It is assumed that $(x_{i1},x_{i2}), i=1,\dots,m$ consist of genuine signals (replicable signals across two studies) and spurious signals (non-replicable signals). Let $K_i$ denote whether the $i$th feature is a replicable signal ($K_i=1$) or not ($K_i=0$). It is assumed that $K_i, i =1, \ldots, m$ are independent and follow the Bernoulli distribution. Denote $\pi_1 = P(K_i=1)$. To induce dependence between $x_{i1}$ and $x_{i2}$, we use a copula model. Specifically, we assume that the observed data $(x_{i1}, x_{i2})$ are generated from latent variables $(z_{i1}, z_{i2})$. The latent variables 
\begin{align*}
    \begin{pmatrix}z_{i1} \\ z_{i2}\end{pmatrix}\Bigg| K_i=k &\sim N\begin{pmatrix}
        \begin{pmatrix}
            \mu_k \\ \mu_k
        \end{pmatrix},
        \sigma_k^2
        \begin{pmatrix}
            1 & \rho_k \\
            \rho_k & 1
        \end{pmatrix}
    \end{pmatrix}, \text{ } k=0,1,
\end{align*}
where $\mu_0=0, \mu_1>0, \sigma_0^2=1, \sigma_1^2>0, \rho_0=0,$ and $ 0<\rho_1\le 1$. The cumulative distribution function of $z_{ij}$ is
$$
G(x) = P(z_{ij} \le x) = \pi_1 \Phi\left(\frac{x-\mu_1}{\sigma_1}\right) + (1-\pi_1)\Phi(x). 
$$
Denote the marginal distribution function of $x_{ij}, i = 1, \ldots, m; j =1, 2,$ as $F_j.$ Generate 
$$
x_{ij} = F_j^{-1}(G(z_{ij})),  i=1,\dots,m; j=1,2.
$$
In this way, dependence across two studies is produced. 
To control the false discovery rate, we use the local irreproducible discovery rate (idr) as the test statistic, which is defined as the posterior probability of $K_i=0$ given $(x_{i1}, x_{i2}).$ Specifically,
\begin{align*}
    idr(x_{i1}, x_{i2})&:=P(K_i=0\mid x_{i1},x_{i2}) \\
    &= \frac{(1-\pi_1) h_0[G^{-1}\{ F_1(x_{i1})\}, G^{-1}\{F_2(x_{i2})\}]}{(1-\pi_1) h_0[G^{-1}\{ F_1(x_{i1})\}, G^{-1}\{F_2(x_{i2})\}] + \pi_1 h_1[G^{-1}\{ F_1(x_{i1})\}, G^{-1}\{F_2(x_{i2})\}]}.
\end{align*}
where 
$$
 h_k\sim N\begin{pmatrix}
        \begin{pmatrix}
             \mu_k \\  \mu_k
        \end{pmatrix},
        \sigma_k^2
        \begin{pmatrix}
            1 & \rho_k \\
            \rho_k & 1
        \end{pmatrix}
    \end{pmatrix}, \ k=0,1.
$$
The estimation of $(\pi_1, \mu_1, \sigma_1^2, \rho_1)$ and $(F_1, F_2)$ is through the EM algorithm \citep{dempster1977maximum}. The step-up procedure based on ordered idr can be used for FDR control \citep{sun2007oracle}. Specifically, 
let $idr_{(1)}\le \dots\le idr_{(m)}$ be the ranked $idr$ values, and denote $H_{(1)}, \dots, H_{(m)}$ as the corresponding hypotheses. Find $l=\mbox{max}\{i: i^{-1}\sum_{j=1}^i idr_j\le \alpha\}$, and reject all $H_{(i)}$ with $i=1,\dots,l$.

\subsection{The MaRR procedure}
The MaRR procedure \citep{philtron2018maximum} uses the maximum rank of each feature. The null hypothesis is that $H_{i0}: p_{i1} \text{ and } p_{i2}$ are irreproducible.
Denote $(R_{i1},R_{i2})$ as the ranks of $(p_{i1}, p_{i2}), i=1,\dots,m$ within each study. 
Define
$$
M_i = \mbox{max}\{R_{i1}, R_{i2}\}, i=1,\dots,m.
$$
Let $\pi_1$ denote the proportion of replicable signals. 
Under the assumptions: \\
(I1) 
if gene $g$ is
reproducible and gene $h$ is irreproducible
\begin{equation*}
    R_{1g} < R_{1h}, \quad R_{2g} < R_{2h};
\end{equation*}
(I2) the correlation between the ranks of the reproducible gene is non-negative;\\
(I3) the two ranks of the irreproducible gene are independent,\\
irreproducible ranks $R_{i1}$ and $R_{i2}$ are uniformly distributed between $\lfloor m\pi_1 \rfloor + 1$ and $m$.
Denote the conditional null survival function of $M_i/m$ as
\begin{align*}
    &S_{m, \pi_1}(x) = P(M_i/m > x\mid \text{ gene }i\text{ is irreproducible})\\
    =& 1 - P\left(R_{i1}/m \leq x, R_{i2}/m \leq x \mid \text{ gene }i\text{ is irreproducible}\right)\\
    =& 1 - \prod _{j=1}^2 P\left(R_{ij}/m \leq x \mid \text{ gene }i\text{ is irreproducible}\right)\\
    =& 
    \begin{cases}
        1,\qquad & x < \pi_1,\\
        1-\frac{(i_x - j_{\pi_1})^2}{(m - j_{\pi_1})^2}, \quad & \pi_1 \leq x \leq 1,
    \end{cases}
\end{align*}
where $i_x = \lfloor m x \rfloor$ and $j_{\pi_1} = \lfloor m\pi_1 \rfloor$.
The limiting conditional survival function of $M_i/m$ under the null is
\begin{align*}
    S_{m,\pi_1}(x) \to S_{\pi_1}(x) = 
    \begin{cases}
        1 & x<\pi_1 \\ 
        1-\frac{(x-\pi_1)^2}{(1-\pi_1)^2} & \pi_1\le x\le 1 \\
        0 & 1<x
    \end{cases}.
\end{align*}
The empirical survival function can be estimated by $\hat S_m(x) = \frac{1}{m}\sum_{i=1}^m I(M_i/m\ge x)$, $x\in(0,1)$. 
By strong law of large numbers and Bayesian formula, 
\begin{align*}
    \hat{S}_m (x) \rightarrow& P(M_i / m \geq x)\\
    =& (1-\pi_1) P(M_i / m \geq x\mid \text{ gene } i \text{ is irreproducible}) + \pi_1\times 0\\
    =& (1-\pi_1) S_{\pi_1} (x) \text{ for } x \in (\pi_1, 1).
\end{align*}
If we estimate $\pi_1$ by $i/m$, we can define the mean square error (MSE) as follows.
\begin{equation*}
    \mbox{MSE}(i/m) = (m-i)^{-1}\sum_{j=i}^m \left(\hat S_m(j/m) - (1-i/m)S_{i/m}(j/m)\right)^2.
\end{equation*}
$\hat{k}$ is chosen to minimize the MSE in the range between $0$ and $\lfloor 0.9 m \rfloor$.
\begin{align*}
    \hat k = \mathop{\arg\min}\limits_{i=0,1,\dots,\left \lfloor  0.9m\right \rfloor }\left\{\mbox{MSE}(i/m)
    \right\}.
\end{align*}
Thus $\hat{k}/m$ serves as a good estimation of $\pi_1$.
To control the FDR at level $\alpha$, the MaRR generates the rejection threshold as follows
    \begin{align*}
    \text{Define } \hat N = \max_{\hat k<i\le m}\left\{i: m\widehat{\mbox{FDR}}(i) = \frac{(i-\hat k)^2}{Q(i)(m-\hat k)}\le \alpha\right\},
\end{align*}
where $Q(i) = \sum_{j=1}^m I(M_j\le i)$. 
Reject all features associated with $M_i\le \hat N$. 
\cite{philtron2018maximum} relax assumption (I1) to (R1): $P(R_{1g} < R_{1h}) > 1/2$ and $P(R_{2g} < R_{2h})>1/2$, which is more plausible in practice.

\subsection{The radjust procedure}
The radjust procedure \citep{bogomolov2018assessing} 
works as follows,
\begin{itemize}
    \item \textit{Step 1}. For a pre-specified FDR level $\alpha$, compute
    $$
        R=\mbox{max}\left[r: \sum_{i\in\mathcal{S}_1\cap \mathcal{S}_2} I \left\{(p_{i1}, p_{i2})\le \left(\frac{r\alpha}{2|\mathcal S_2|}, \frac{r\alpha}{2|\mathcal S_1|}\right)\right\} = r\right],
    $$
    where $\mathcal S_j$ is the set of features pre-selected in study $j$ for $j = 1,2$. By default, it selects features with $p$-values less than or equal to $\alpha/2$.
    \item \textit{Step 2}. Declare replicable features as those 
    with indices in the set
    $$
        \mathcal{R} = \left\{i: (p_{i1}, p_{i2})\le \left(\frac{R\alpha}{2|\mathcal S_2|}, \frac{R\alpha}{2|S_1|}\right), i\in \mathcal S_1 \cap \mathcal S_2\right\}.
    $$
\end{itemize}

In this paper, we implement the adaptive version of the radjust procedure provided in \cite{bogomolov2018assessing} for comparison, which first estimates the fractions of true null hypotheses among the pre-selected features. The fractions in the two studies are estimated as follows.
\begin{equation}\label{prop_est}
    \hat\pi_0^{(1)} = \frac{1+\sum_{i\in \mathcal S_{2,\alpha}}I(p_{i1}>\alpha)}{|\mathcal S_{2,\alpha}|(1-\alpha)},\quad 
    \hat\pi_0^{(2)}= \frac{1+\sum_{i\in \mathcal S_{1,\alpha}}I(p_{i2}>\alpha)}{|\mathcal S_{1,\alpha}|(1-\alpha)},
\end{equation}
where $\mathcal S_{j,\alpha} = \mathcal S_j\cap \{1\le i\le m: p_{ij}\le \alpha\}, \ j=1,2$. The adaptive procedure with a nominal FDR level $\alpha$ 
works as follows.
\begin{itemize}
    \item \textit{Step 1}. Compute $\hat\pi_0^{(1)}$ and $\hat\pi_0^{(2)}$ using (\ref{prop_est}). Let
    $$
        R=\mbox{max}\left[r: \sum_{i\in\mathcal{S}_{1,\alpha}\cap \mathcal{S}_{2,\alpha}} I \left\{(p_{i1}, p_{i2})\le \left(\frac{r\alpha}{2|\mathcal S_{2,\alpha}|\hat\pi_0^{(1)}}, \frac{r\alpha}{2|\mathcal S_{1,\alpha}|\hat\pi_0^{(2)}}\right)\right\} = r\right],
    $$
    \item \textit{Step 2}. Reject features with indices in the set
    $$
        \mathcal{R} = \left\{i: (p_{i1}, p_{i2})\le \left(\frac{R\alpha}{2|\mathcal S_{2,\alpha}|\hat\pi_0^{(1)}}, \frac{R\alpha}{2|S_{1,\alpha}|\hat\pi_0^{(2)}}\right), i\in \mathcal S_{1,\alpha} \cap \mathcal S_{2,\alpha}\right\}.
    $$
\end{itemize}
\subsection{The AdaFilter procedure}
The AdaFilter \citep{wang2022detecting} concentrates on the partial conjunction hypothesis.
$$
H_0^{r / n} \text {: fewer than } r \text { out of } n \text { base hypotheses are non-null. }
$$
In the simulation, we concentrate on the case when $r=1$.
\begin{itemize}
    \item Step 1: Calculate the filtering and selection ``P-values”.
    $$
    \begin{aligned}
    & F_j:=(n-r+1) p_{j(r-1) }, \quad \text { and } \\
    & S_j:=(n-r+1) p_{j(r)},
    \end{aligned},
    $$
    where $$p_{j(1)}\le \cdots\le p_{j(n)}$$ is the ordered $p$-value under study $j$.
    \item Step 2: Reject $H_{0 j}^{r / n}$ if $S_j<\gamma_0^{BH}$ where
    $$
    \gamma_0^{BH}=\sup \left\{\gamma \in[0, \alpha] \left\lvert\, \frac{\gamma \sum_{j=1}^M 1_{F_j<\gamma}}{\sum_{j=1}^M 1_{S_j<\gamma} \vee 1} \leq \alpha\right.\right\} .
    $$
\end{itemize}
\subsection{The JUMP procedure}
\subsubsection{The original JUMP procedure applied to two studies}
The JUMP procedure \citep{lyu2023jump} works on the maximum of $p$-values across two studies. Define
$$
p_i^{\max} = \mbox{max}\{p_{i1}, p_{i2}\}, i=1,\dots,m.
$$
Let $s_i=(\theta_{i1}, \theta_{i2}),\ i=1,\dots,m$ denote the joint hidden states across two studies. Then $s_i\in \{(0, 0), (0, 1), (1, 0), (1, 1)\}$ with $\mathbb P(s_i=(k,l))=\xi_{kl}$ for $k,l=0,1$ and $\sum_{k,l}\xi_{kl}=1$. It can be shown that
\begin{align*}
&\mathbb P\left(p_i^{\max}\le t\mid H_{i0} \text{ is true}\right)\\
=& \frac{\xi_{00}\mathbb P(p_i^{\max}\le t\mid s_i=(0,0))}{\xi_{00}+\xi_{01}+\xi_{10}} + \frac{\xi_{01}\mathbb P(p_i^{\max}\le t\mid s_i=(0,1))}{\xi_{00}+\xi_{01}+\xi_{10}} + \frac{\xi_{10}\mathbb P(p_i^{\max}\le t\mid s_i=(1,0))}{\xi_{00}+\xi_{01}+\xi_{10}}\\
\le& \frac{\xi_{00}t^2 + (\xi_{01}+\xi_{10})t}{\xi_{00} + \xi_{01} + \xi_{10}}\le t,
\end{align*}
which means that $p_i^{\max}$ follows a super-uniform distribution under the replicability null. Denote
$$
G(t) = \frac{\xi_{00}t^2 + (\xi_{01}+\xi_{10})t}{\xi_{00} + \xi_{01} + \xi_{10}}.
$$

For a given threshold $t\in (0,1)$, a conservative estimate of the FDR 
is obtained by
$$
\mbox{FDR}^*(t)=\frac{m(\xi_{00}+\xi_{01}+\xi_{10})G(t)}{\sum_{i=1}^mI\{p_i^{\max}\le t\}\vee 1}.
$$

Following \cite{storey2002direct, storey2004strong}, the proportion of null hypotheses in study $j$ can be estimated by
$$
\hat\pi_0^{(j)}(\lambda_j)=\frac{\sum_{i=1}^m I\{p_{ij}\ge \lambda_j\}}{m(1-\lambda_j)}, \quad j=1,2.
$$
Similarly, $\xi_{00}$ is estimated by
$$
\hat\xi_{00}(\lambda_3) = \frac{\sum_{i=1}^m I\{p_{i1}\ge \lambda_3, p_{i2}\ge \lambda_3\}}{m(1-\lambda_3)^2},
$$
where $\lambda_1, \lambda_2$ and $\lambda_3$ are tuning parameters that can be selected by using the smoothing method provided in \cite{storey2003statistical}. Then we have
$$
\hat\xi_{01} = \hat\pi_0^{(1)}-\hat\xi_{00}, \quad \hat\xi_{10} = \hat\pi_0^{(2)} - \hat\xi_{00}.
$$

With these estimates, we have a plug-in estimate of FDR,
$$
\widehat{\mbox{FDR}}^*(t) = \frac{m(\hat\xi_{00}t^2+\hat\xi_{01}t+\hat\xi_{10}t)}{\sum_{i=1}^mI\{p_i^{\max}\le t\}\vee 1}.
$$
The JUMP procedure works as follows.
\begin{itemize}
    \item \textit{Step 1}. Let $p_{(1)}^{\max}\le \dots \le p_{(m)}^{\max}$ be the ordered maximum of $p$-values and denote by $H_{(i)}$ the corresponding hypothesis;
    \item \textit{Step 2}. Find the largest $k$ such that the estimated FDR is controlled, i.e.,
    $$
    \hat k=\max\{1\le k\le m: \widehat{\mbox{FDR}}^*(p_{(k)}^{\max})\le \alpha\};
    $$
    \item \textit{Step 3}. Reject $H_{(i)},$ 
    $i=1,\dots,\hat k$.
\end{itemize}

\subsubsection{Extending the JUMP procedure to three studies}\label{extend_jump}
When we have three studies with $p$-values, $(p_{i1}, p_{i2}, p_{i3}), i=1,\dots,m$, define the maximum of $p$-values as
$$
p_i^{\max} = \max\{p_{i1}, p_{i2}, p_{i3}\}, i=1,\dots,m.
$$
Let $s_i = (\theta_{i1}, \theta_{i2}, \theta_{i3}), i=1,\dots,m$ denote the joint hidden states across three studies with prior probabilities $P(s_i=(k, l, r))=\xi_{klr}$, where $k,l,r=0,1$ and $\sum_{k,l,r}\xi_{klr}=1$, $s_i \in \{(0, 0, 0), (0, 0, 1), (0, 1, 0), (0, 1, 1), (1, 0, 0), (1, 0, 1), (1, 1, 0), (1, 1, 1)\}$. The replicability null hypothesis of three studies is given by
\begin{eqnarray*}\label{null3}
    H_{i0}: s_i \in \mathbb H = \{(0, 0, 0), (0, 0, 1), (0, 1, 0), (0, 1, 1), (1, 0, 0), (1, 0, 1), (1, 1, 0)\}, i=1,\dots,m.
\end{eqnarray*}
It can be shown that
\begin{align*}
\mathbb P(p_i^{\max}\le t|H_{0i} \text{ is true}) &= \frac{\mathbb P(p_i^{\max}\le t, H_{i0} \text{ is true})}{\mathbb P(H_{i0} \text{ is true})}\\
&= \frac{\sum_{(k,l,r)\in\mathbb H}\xi_{klr}\mathbb P(p_i^{\max}\le t|s_i = (k,l,r))}{\sum_{(k,l,r)\in \mathbb H}\xi_{klr}}\\
&\le \frac{\xi_{000}t^3 + (\xi_{001} + \xi_{010} + \xi_{100})t^2 + (\xi_{011} + \xi_{101} + \xi_{110})t}{\sum_{(k,l,r)\in\mathbb H}\xi_{klr}}\le t,
\end{align*}
which means the $p_i^{\max}$ follows a super-uniform distribution under the replicability null. Denote
$$
G(t) = \xi_{000}t^3 + (\xi_{001} + \xi_{010} + \xi_{100})t^2 + (\xi_{011} + \xi_{101} + \xi_{110})t.
$$

For a given threshold $t\in(0,1)$, a conservative estimate of the FDR is obtained by
$$
\mbox{FDR}^*(t) = \frac{mG(t)}{\sum_{i=1}^mI\{p_i^{\max}\le t\}\vee 1}.
$$

Following \citep{storey2002direct, storey2004strong}, the proportion of null hypotheses in study $j$ can be estimated by
$$
\hat\pi_0^{(j)}(\lambda_j)=\frac{\sum_{i=1}^m I\{p_{ij}\ge \lambda_j\}}{m(1-\lambda_j)}, \quad j=1,2,3.
$$
Similarly,
\begin{align*}
    \hat\xi_{000}(\lambda_4) + \hat\xi_{001}(\lambda_4) = \frac{\#\{p_{i1}\ge \lambda_4, p_{i2}\ge\lambda_4, i=1,\dots,m\}}{m(1-\lambda_4)^2}\\
\hat\xi_{000}(\lambda_5) + \hat\xi_{010}(\lambda_5) = \frac{\#\{p_{i1}\ge \lambda_5, p_{i3}\ge\lambda_5, i=1,\dots,m\}}{m(1-\lambda_5)^2}\\
\hat\xi_{000}(\lambda_6) + \hat\xi_{100}(\lambda_6) = \frac{\#\{p_{i2}\ge \lambda_6, p_{i3}\ge\lambda_6, i=1,\dots,m\}}{m(1-\lambda_6)^2}\\
\hat\xi_{000}(\lambda_7) = \frac{\#\{p_{i1}\ge \lambda_7, p_{i2}\ge\lambda_7, p_{i3}\ge\lambda_7, i=1,\dots,m\}}{m(1-\lambda_7)^3},
\end{align*}
where $\lambda_1,\dots,\lambda_7$ are tuning parameters that can be selected by using the smoothing method provided in \citep{storey2003statistical}. Then we have
\begin{align*}
    \hat\xi_{011} = \hat\pi_{0}^{(1)} - \hat\xi_{000} - \hat\xi_{001} - \hat\xi_{010},\\
    \hat\xi_{101} = \hat\pi_{0}^{(2)} - \hat\xi_{000} - \hat\xi_{001} - \hat\xi_{100},\\
    \hat\xi_{110} = \hat\pi_{0}^{(3)} - \hat\xi_{000} - \hat\xi_{010} - \hat\xi_{100}.
\end{align*}

With these esitmates, we have a plug-in estimate of FDR,
$$
\widehat{\mbox{FDR}}^*(t) = \frac{m\hat G(t)}{\sum_{i=1}^mI\{p_i^{\max}\le t\}\vee 1},
$$
where $\hat G(t) = \hat\xi_{000}t^3 + (\hat\xi_{001} + \hat\xi_{010} + \hat\xi_{100})t^2 + (\hat\xi_{011} + \hat\xi_{101} + \hat\xi_{110})t$.

Then the JUMP procedure can be applied to control the FDR for the replicability analysis across three studies.

\subsection{Adjustment to our method}
In the simulations, we 
estimate $(\xi_{00},\xi_{01},\xi_{10},\xi_{11})$ with the conservative estimates proposed in \cite{lyu2023jump} due to better finite sample performance. Specifically, we estimate the proportion of null hypotheses in study $j$, $\pi^{(j)}_0$, as
\begin{align*}
    \hat\pi_{0}^{(j)}(\lambda_j) = \frac{\sum_{i=1}^mI\{p_{ij}\ge \lambda_j\}}{m(1-\lambda_j)}, j=1,2,
\end{align*}
and estimate $\xi_{00}$ as
\begin{align*}
    \hat\xi_{00}(\lambda_3) = \frac{\sum_{i=1}^mI\{p_{i1}\ge \lambda_3, p_{i2}\ge\lambda_3\}}{m(1-\lambda_3)^2},
\end{align*}
where $\lambda_1, \lambda_2$ and $\lambda_3$ are tuning parameters. We use the smoothing method provided in \cite{storey2003statistical} to select the tuning parameters. 
We have
$\hat\xi_{01} = \hat\pi_0^{(1)}-\hat\xi_{00}, 
\hat\xi_{10} = \hat\pi_0^{(2)}-\hat\xi_{00}$
and $\hat{\xi}_{11} = 1 - \hat{\xi}_{00} - \hat{\xi}_{01} - \hat{\xi}_{10}.$

\end{document}